\newtheorem{theorem}{Theorem}
\newtheorem{definition}{Definition}
\newtheorem{lemma}{Lemma}
\newtheorem{corollary}{Corollary}
\let\oldacl\addcontentsline
\renewcommand{\addcontentsline}[3]{}
\newcommand{\Tr}{\mathrm{Tr}}
\renewcommand{\d}{\mathrm{d}}
\definecolor{NG}{rgb}{0.66,0.29,0.48}
\newcommand{\addressCHEM}{Department of Chemistry, University of Toronto, Lash Miller Chemical Laboratories, 80 St. George Street, ON M5S 3H6, Toronto, Canada}
\newcommand{\addressAC}{Acceleration Consortium, 700 University Ave., M7A 2S4, Toronto, Canada}
\newcommand{\addressCS}{Department of Computer Science, University of Toronto, Sandford Fleming Building, 10 King’s College Road, ON M5S 3G4, Toronto, Canada}
\newcommand{\addressVECTOR}{Vector Institute for Artificial Intelligence, 661 University Ave. Suite 710, ON M5G 1M1, Toronto, Canada}
\newcommand{\addressMSE}{Department of Materials Science \& Engineering, University of Toronto, 184 College St., M5S 3E4, Toronto, Canada}
\newcommand{\addressCHEMENG}{Department of Chemical Engineering \& Applied Chemistry, University of Toronto, 200 College St. ON M5S 3E5, Toronto, Canada}
\newcommand{\addressCIFAR}{Senior Fellow, Canadian Institute for Advanced Research (CIFAR), 661 University Ave., M5G 1M1, Toronto, Canada}
\newcommand{\addressNVIDIA}{NVIDIA, 431 King St W \#6th, M5V 1K4, Toronto, Canada}
\begin{document}
\title{Fast-forwardable Lindbladians imply quantum phase estimation}
\author{Zhong-Xia Shang}
\email{shangzx@hku.hk}
\affiliation{HK Institute of Quantum Science $\&$ Technology, The University of Hong Kong, Hong Kong, China}
\affiliation{QICI Quantum Information and Computation Initiative, Department of Computer Science,
The University of Hong Kong, Hong Kong, China}

\author{Naixu Guo}
\affiliation{Centre for Quantum Technologies, National University of Singapore, 117543, Singapore}
\author{Patrick Rebentrost}
\affiliation{Centre for Quantum Technologies, National University of Singapore, 117543, Singapore}
\affiliation{School of Computing, National University of Singapore, 117417, Singapore}
\author{Al\'an Aspuru-Guzik}
\affiliation{\addressCHEM}
\affiliation{\addressAC}
\affiliation{\addressCS}
\affiliation{\addressVECTOR}
\affiliation{\addressMSE}
\affiliation{\addressCHEMENG}
\affiliation{\addressCIFAR}
\affiliation{\addressNVIDIA}
\author{Tongyang Li}
 \email{tongyangli@pku.edu.cn}
\affiliation{Center on Frontiers of Computing Studies, Peking University, Beijing 100871, China}
\affiliation{School of Computer Science, Peking University, Beijing 100871, China}
\author{Qi Zhao}
\thanks{\href{mailto:zhaoqcs@hku.hk}{zhaoqcs@hku.hk}\\Z.-X. S. and N. G. contributed equally.}
\affiliation{QICI Quantum Information and Computation Initiative, Department of Computer Science,
The University of Hong Kong, Hong Kong, China}

\begin{abstract}

Quantum phase estimation (QPE) and Lindbladian dynamics are both foundational in quantum information science and central to quantum algorithm design. In this work, we bridge these two concepts: certain simple Lindbladian processes can be adapted to perform QPE-type tasks. However, unlike QPE, which achieves Heisenberg-limit scaling, these Lindbladian evolutions are restricted to standard quantum limit complexity. This indicates that, different from Hamiltonian dynamics, the natural dissipative evolution speed of such Lindbladians does not saturate the fundamental quantum limit, thereby suggesting the potential for quadratic fast-forwarding. We confirm this by presenting a quantum algorithm that simulates these Lindbladians for time $t$ within an error $\varepsilon$ using $\mathcal{O}\left(\sqrt{t\log(\varepsilon^{-1})}\right)$ cost, whose mechanism is fundamentally different from the fast-forwarding examples of Hamiltonian dynamics. As a bonus, this fast-forwarded simulation naturally serves as a new Heisenberg-limit QPE algorithm. Therefore, our work explicitly bridges the standard quantum limit-Heisenberg limit transition to the fast-forwarding of dissipative dynamics. We also adopt our fast-forwarding algorithm for efficient Gibbs state preparation and demonstrate the counter-intuitive implication: the allowance of a quadratically accelerated decoherence effect under arbitrary Pauli noise.



\end{abstract}
\maketitle
\noindent\textit{\textbf{Introduction.—}}Quantum algorithms leverage the properties of quantum mechanics to achieve acceleration over classical rivals \cite{nielsen2010quantum}. Quantum phase estimation (QPE) \cite{nielsen2010quantum,kitaev1995quantum} is a fundamental primitive at the core of many quantum algorithms \cite{dalzell2023quantum}, enabling the estimation of eigenvalues and the preparation of eigenstates for unitary operators. QPE and its various variants \cite{wiebe2016efficient,ding2023even,higgins2007entanglement} serve as the foundation for numerous significant quantum applications, such as factoring large numbers \cite{shor1994algorithms}, solving linear systems of equations \cite{harrow2009quantum}, and estimating ground state energies of quantum Hamiltonians in condensed matter physics and quantum chemistry \cite{mcardle2020quantum}.

Simulating quantum dynamics \cite{feynman2018simulating} is one of the most important applications for quantum computing. QPE has deep connections with Hamiltonian dynamics, which can estimate eigenvalues and prepare eigenstates of a Hamiltonian $H$ by querying Hamiltonian dynamics $e^{-iHt}$. To estimate eigenvalues within an error $\epsilon$, the required total Hamiltonian evolution time is $\mathcal{O}(\epsilon^{-1})$, which is known to achieve the Heisenberg limit \cite{zwierz2010general}, a fundamental limit set by the energy-time uncertainty principle of quantum mechanics \cite{busch2007heisenberg,kijowski1974time}. This fundamental limit implies the Hamiltonian no fast-forwarding theorem \cite{atia2017fast,berry2007efficient}, which explains why existing universal Hamiltonian simulation algorithms \cite{lloyd1996universal,berry2015simulating,low2017optimal} generally have $\Omega(t)$ complexity scalings.
 

Quantum dynamics is not restricted to closed systems. Lindbladians \cite{lindblad1976generators} describe the dissipative dynamics of open quantum systems and havee recently been shown to be a transformative quantum algorithm primitive for various critical tasks, including preparations for the Gibbs states \cite{chen2023quantum, chen2023efficient, rouze2024optimal} and ground state \cite{ding2024single}, optimizations \cite{chen2025quantum}, optimal control \cite{he2024efficient}, and solving differential equations \cite{shang2024design}. Given the importance of Lindbladian simulations \cite{kliesch2011dissipative,cleve2017efficient,li2022simulating, ding2025lower} and the deep connections between QPE and Hamiltonian dynamics, it is therefore natural and important to explore the connections between QPE and Lindbladians and their implications for the Lindbladian fast-forwarding. Moreover, as quantum algorithms based on QPE and Lindbladians seem to be rather different, understanding their connections will also be vital for bridging unitary and dissipative quantum techniques and recognizing the strengths and weaknesses of these two primitives compared to each other. 

In this work, we discover interesting and non-trivial connections between QPE and Lindbladians. Specifically, we demonstrate that there exist simple, purely dissipative Lindbladians whose simulation can be utilized for QPE tasks, estimating eigenvalues and preparing eigenstates of Hamiltonians. However, this Lindbladian-based QPE only achieves a standard quantum limit scaling $\mathcal{O}(\epsilon^{-2})$ and thus, has not yet reached the minimal energy-time uncertainty. In other words, the natural dynamics of these open quantum systems do not fully exploit quantum advantages with respect to the estimation accuracy. This observation motivates the prospect of quadratic fast-forwarding for such Lindbladian dynamics. 

Indeed, we propose a quantum algorithm that can simulate these Lindbladians of evolution time $\mathcal{O}(t)$ with a complexity of only $\mathcal{O}(\sqrt{t})$. Compared with the commonly used dilated Hamiltonian approach (see explanations in Ref.~\cite{cleve2017efficient}) for Lindbladian simulation, our algorithm also has exponential advantages in terms of accuracy and number of ancilla qubits. While in the context of Hamiltonian dynamics, several special examples that enable fast forwarding have been found \cite{atia2017fast,gu2021fast}, to the best of our knowledge, this work gives the first example of Lindbladian fast forwarding. Notably, the reason for Lindbladian fast forwarding here is the non-reachability of the Heisenberg limit, which is in stark contrast to Hamiltonian cases, whose reason is that their special structures enable overcoming the Heisenberg limit \cite{atia2017fast}.

Since our algorithm realizes quadratically fast-forwarding for Lindbladians that enable QPE tasks, it directly leads to a Lindbladian-based Heisenberg-limit ($\mathcal{O}(\epsilon^{-1})$) QPE. Therefore, we successfully bridge the standard quantum limit-Heisenberg limit transition to the fast-forwarding of dissipative dynamics. More interestingly, the fast-forwarding algorithm also unveils a surprising understanding of quadratic quantum speedup that seems to have nothing to do with quantum. The core of our fast-forwarding simulation algorithm is that the expected translation distance of $N$-step ($t$) \textbf{classical} random walk \cite{xia2019random} is $\mathcal{O}(\sqrt{N})$ ($\mathcal{O}\left(\sqrt{t}\right)$). Since QPE accounts for the Grover-type~\cite{grover1996fast} quantum advantage through amplitude estimation \cite{brassard2000quantum}, thus, the $\mathcal{O}(\sqrt{N})$ spreading of the \textbf{classical} random walk becomes exactly the source of quadratic quantum speedup. 

Besides these main results, we also show that our fast-forwarding simulation algorithm can be applied to quantum Gibbs state preparations, which can match the state-of-the-art scaling in quantum singular value transformation \cite{gilyen2019quantum} approach. Moreover, we find that our fast-forwarding simulation algorithm also works for a much broader class of Lindbladians, which we call the Choi commuting Lindbladians, that include arbitrary combinations of Pauli noise. This indicates that the decoherence effect under arbitrary Pauli noise can actually occur quadratically faster.

We want to mention here another work of Lindbladian fast-forwarding~\cite{shang2025exponential}. The fast-forwarding there is only for circuit depth, while keeping the query complexity or the total cost unchanged through increasing the number of ancilla qubits. In contrast, in this work, the Lindbladian fast-forwarding considers the total cost.

\noindent\textit{\textbf{Preliminaries.—}}In this work, we express an $n$-qubit Hamiltonian in the form
\begin{eqnarray}\label{mainhami}
H=\sum_\alpha h_\alpha \Pi_\alpha,
\end{eqnarray}
where $\{h_\alpha\}\in[0,1]$ are real eigenvalues satisfying $h_\alpha\neq h_\beta,~\forall \alpha\neq \beta$ and $\{\Pi_\alpha\}$ are the corresponding eigenspace projectors. We define two tasks.
\begin{definition}[Eigenvalue estimation]\label{task1}
Input $|\psi_\alpha\rangle$ that is an eigenstate of $H$ with eigenvalue $h_\alpha$, output an estimate of $h_\alpha$ to an additive error $\epsilon$ with a success probability at least $1-\delta$. 
\end{definition}
\begin{definition}[Eigenstate preparation]\label{task2}
Input $|\psi\rangle=\sum_\alpha \Pi_\alpha |\psi\rangle=\sum_\alpha c_\alpha |\psi_\alpha\rangle$ and the knowledge of $h_\beta$, output a quantum state $|\tilde{\psi}_\beta\rangle$ that has the overlap with the eigenstate $|\psi_\beta\rangle$ at least $1-\zeta$: $|\langle \tilde{\psi}_\beta |\psi_\beta\rangle|^2\geq 1-\zeta$.
\end{definition}

QPE is a quantum algorithm that can solve these two tasks. The standard QPE \cite{nielsen2010quantum} has the circuit shown in Fig. \ref{fig.qpe} in SM \ref{qpeintro}, which adopts the quantum Fourier transform \cite{shor1994algorithms}. For $|\psi\rangle=\sum_\alpha c_\alpha |\psi_\alpha\rangle$, as we gradually increase the number of ancilla qubits, the output state of QPE will approach $\sum_\alpha c_\alpha|h_\alpha\rangle|\psi_\alpha\rangle$, where $|h_\alpha\rangle$ is a computational basis state corresponding to the binary representation of $h_\alpha$. Therefore, measuring the ancilla qubits will tell us the eigenvalues of $H$, and the resulting system state will collapse to the corresponding eigenstate. For the eigenvalue estimate task (setting $|\psi\rangle=|\psi_\alpha\rangle$), standard QPE will require a Hamiltonian simulation time
\begin{eqnarray}\label{ee}
\text{Hami.simu.time}=
\mathcal{O}(\epsilon^{-1}\delta^{-1}),
\end{eqnarray}
and for the eigenstate preparation task, the required Hamiltonian simulation time with the aid of amplitude amplification \cite{brassard2000quantum} is
\begin{eqnarray}\label{ep}
\text{Hami.simu.time}=\mathcal{O}(|c_\beta|^{-2}\Delta_\beta^{-1}\zeta^{-1/2}),
\end{eqnarray}
where $\Delta_\beta$ is the gap between $h_\beta$ and other eigenvalues (See SM \ref{qpeintro} for more details). Note that, in this work, we will mainly use the Hamiltonian simulation time as a measure of complexity. However, one can always translate this into the actual gate complexity on digital quantum computers by further considering various Hamiltonian simulation algorithms \cite{lloyd1996universal,berry2015simulating,low2017optimal}. We want to mention that the complexity in Eq.~(\ref{ee}) and Eq.~(\ref{ep}) is for the standard QPE, and the dependence on $\delta$, $c_\beta$, and $\zeta$ parameters can be further improved through more advanced techniques \cite{hozo2005estimating,childs2010relationship,lin2020near,lin2022heisenberg}, including our method shown in this work.

The $\epsilon^{-1}$ scaling is known as the Heisenberg limit, which reaches the minimal quantum energy-time uncertainty principle and therefore is the fundamental lower bound of eigenvalue estimation \cite{zwierz2010general}. This result also leads to the no-fast-forwarding theorem for Hamiltonian simulation \cite{berry2007efficient,atia2017fast}, since if there exists a quantum algorithm that can simulate $e^{-iHt}$ in a time $\mathit{o}(t)$ for any $H$, the Heisenberg limit will be broken. Nonetheless, there do exist special cases of Hamiltonians whose special structures enable surpassing the Heisenberg limit and realizing the fast-forwarding, such as Hamiltonians from the factoring algorithm \cite{shor1994algorithms}, commuting local Hamiltonians, and quadratic fermionic Hamiltonians \cite{atia2017fast,gu2021fast}. 

In contrast with Hamiltonian dynamics (Schr{\"o}dinger equation) describing closed quantum systems, the dynamics of open quantum systems with weak system-environment interactions can be described by Lindbladian dynamics \cite{lindblad1976generators}, which takes the form
\begin{eqnarray}\label{mainlme}
&&\frac{\d\rho(t)}{\d t}=\mathcal{L}[\rho(t)]\\&&=-i[H_I,\rho(t)]+
\sum_i \left(F_i\rho(t) F_i^\dag-
\frac{1}{2}\{\rho(t),F_i^\dag F_i\}\right),\nonumber
\end{eqnarray}
where $\rho(t)$ is the system density matrix, $H_I$ is the internal Hamiltonian, and $F_i$ are quantum jump operators describing dissipative interactions between the system and the environment. While the Lindbladian is no longer a unitary operation, it is still a completely positive and trace-preserving (CPTP) map \cite{nielsen2010quantum}. The Lindbladian in Eq.~(\ref{mainlme}) can be simulated through the dilated Hamiltonian simulation approach (see SM \ref{smc}). For example, consider a Lindbladian with $H_I=0$, and there is only a single jump operator
\begin{equation}\label{mainflme}
\frac{\d\rho(t)}{\d t}=\mathcal{L}_F[\rho(t)]=F\rho(t) F^\dag-
\frac{1}{2}\{\rho(t),F^\dag F\},
\end{equation}
then the short-time dynamics of the dilated Hamiltonian $\tilde{F}=\begin{pmatrix}
0 & F^\dag \\
F & 0
\end{pmatrix}$ will approximate the short-time Lindbladian dynamics
\begin{eqnarray}
\Tr_a\left(e^{-i \tilde{F}\sqrt{\tau}}(|0\rangle\langle 0|\otimes \rho(t))e^{i \tilde{F}\sqrt{\tau}}\right)\approx \rho(t+\tau).
\end{eqnarray}
Therefore, by repeatedly adding ancilla qubits and running short-time $\tilde{F}$ dynamics, we are able to approximately simulate Eq.~(\ref{mainflme}). For complexity, to simulate the Lindbladian Eq.~(\ref{mainflme}) for time $t$ within error $\varepsilon$, the required dilated Hamiltonian simulation time and number of ancilla qubits are
\begin{eqnarray}
\text{Hami.simu.time}\,&&=\mathcal{O}(t^{2}\varepsilon^{-1}),\label{dl1}\\
\text{Number of ancilla qubits}\,&&=\mathcal{O}(t^3\varepsilon^{-2}).\label{dl2}
\end{eqnarray}
Note that more advanced Lindbladian simulation algorithms \cite{cleve2017efficient,li2022simulating} can achieve $\tilde{\mathcal{O}}(t\text{poly}\log(\varepsilon^{-1}))$ and Ref.~\cite{childs2016efficient} also gives the no fast-forwarding theorem for general Lindbladians.

\noindent\textit{\textbf{Lindbladian as slow QPE.—}}At first glance, coherent QPE and dissipative Lindbladians seem to be unrelated. However, our key observation is that if we set the jump operator $F$ in Eq.~(\ref{mainflme}) to be exactly the Hamiltonian $H$, the Lindbladian
\begin{equation}\label{mainslme}
\frac{\d \rho(t)}{\d t}=\mathcal{L}_H[\rho(t)]=H\rho(t) H-
\frac{1}{2}\{\rho(t),H^2\}
\end{equation}
will let the system experience the dephasing effect on the eigenbasis of $H$ (SM \ref{smmoti}). More concretely, if $\rho(0)=|\psi\rangle\langle\psi|$, the steady state of Eq.~(\ref{mainslme}) will be $\rho_{\text{ss}}=\sum_\alpha|c_\alpha|^2|\psi_\alpha\rangle\langle\psi_\alpha|$. The point here is that the steady state is exactly the system reduced density matrix for $\sum_\alpha c_\alpha|h_\alpha\rangle|\psi_\alpha\rangle$ in QPE under infinite accuracy (infinite ancillas). Therefore, as both QPE and the Lindbladian in Eq.~(\ref{mainslme}) lead to dephasing effects on the eigenbasis of $H$, we can expect that the simulation of Eq.~(\ref{mainslme}) can also be applied to QPE tasks (Definition \ref{task1}-\ref{task2}).

\begin{figure}[htbp]
\centering
\includegraphics[width=0.49\textwidth]{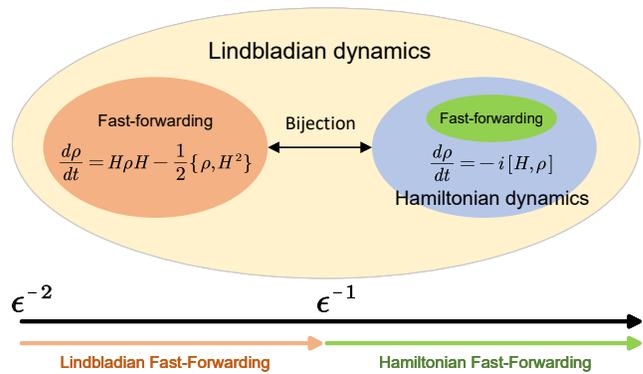}
\caption{Comparison between Lindbladian fast-forwarding and Hamiltonian fast-forwarding. $H$ can be put in either Hamiltonian dynamics or the Lindbladian Eq.~(\ref{mainslme}). Since Hamiltonian dynamics naturally achieves the Heisenberg limit, we can only expect the fast-forwarding for quite restricted Hamiltonians (green) whose special properties allow for surpassing the Heisenberg limit. In contrast, the natural evolution of the Lindbladian Eq.~(\ref{mainslme}) only achieves the standard quantum limit, meaning improving to the Heisenberg limit is sufficient for fast-forwarding, which works for an arbitrary Hamiltonian (orange).\label{fig2}}
\end{figure}

Indeed, for the dilated Hamiltonian simulation ($\tilde{H}=\begin{pmatrix}
0 & H \\
H & 0
\end{pmatrix}$) of Eq.~(\ref{mainslme}), we have 
\begin{eqnarray}\label{mainshort}
&&e^{-i \tilde{H}\sqrt{\tau}}|0\rangle|\psi\rangle=\\&&|0\rangle\cos(\sqrt{\tau}H)|\psi\rangle-i|1\rangle\sin(\sqrt{\tau}H)|\psi\rangle.\nonumber
\end{eqnarray}
If we measure the ancillas after implementing Eq.~(\ref{mainshort}) for $N$ times ($N$ ancilla qubits), and the measurement outcome on the computational basis of ancilla qubits has $N-m$ 0s and $m$ 1s, then the collapsed system state will concentrate on the eigensubspace of $H$ with eigenvalue $\sqrt{N/t}\arcsin(\sqrt{m/N})$. Based on this, we can design a Lindbladian-based QPE (SM \ref{smslow}) with the complexity summarized in the following theorem.
\begin{theorem}\label{the1}
Through the dilated Hamiltonian approach for simulating the Lindbladian Eq.~(\ref{mainslme}), for QPE tasks, the required Lindbladian evolution time and dilated Hamiltonian ($\tilde{H}$) simulation time are
\begin{eqnarray}
\begin{tabular}{|c|c|c|}
\hline
 & Eigenvalue &Eigenstate \\
\hline
Lind& $\mathcal{O}(\epsilon^{-2}\log(\delta^{-1}))$ & $\mathcal{O}(|c_\beta|^{-1}\Delta_\beta^{-2}\log(\zeta^{-1}))$\\
\hline
Hami ($\tilde{H}$) & $\mathcal{O}(\epsilon^{-4}\log^2(\delta^{-1}))$ & $\mathcal{O}(|c_\beta|^{-1}\Delta_\beta^{-4}\log^2(\zeta^{-1}))$\\\hline
\end{tabular}.\nonumber
\end{eqnarray}
\end{theorem}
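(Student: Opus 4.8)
The plan is to read the repeated application of the single-step map in Eq.~(\ref{mainshort}) as a sequence of independent measurements, reduce both tasks to the statistics of the binary record they produce, and only afterwards convert each step into a cost on the dilated Hamiltonian $\tilde H$. For the eigenvalue task I would start from a single eigenstate $|\psi_\alpha\rangle$: Eq.~(\ref{mainshort}) shows that measuring the fresh ancilla after one step returns outcome $1$ with probability $p_\alpha=\sin^2(\sqrt{\tau}\,h_\alpha)$ and $0$ otherwise, while leaving the system inside the $\alpha$-eigenspace up to a phase. Hence over $N$ steps the tally $m$ of $1$-outcomes is a $\mathrm{Binomial}(N,p_\alpha)$ variable, and $\hat h_\alpha=\tau^{-1/2}\arcsin(\sqrt{m/N})$ is a well-defined estimator once $\tau$ is small enough that $\sqrt{\tau}\,h_\alpha\in[0,\pi/2]$ for all $h_\alpha\in[0,1]$.

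I would then bound $|m/N-p_\alpha|$ with a Hoeffding/Chernoff tail and push the deviation through $\arcsin$; since $\tfrac{\mathrm{d} h}{\mathrm{d} p}=\bigl(2\sqrt{\tau}\sqrt{p(1-p)}\bigr)^{-1}$ and $\sqrt{p_\alpha(1-p_\alpha)}\sim\sqrt{\tau}\,h_\alpha$ for small $\tau$, the fluctuations of $\hat h_\alpha$ collapse to $\mathcal{O}(1/\sqrt{t})$ with $t=N\tau$, independent of how the total time is split. Demanding error $\epsilon$ at confidence $1-\delta$ then fixes the Lindbladian time at $t=\mathcal{O}(\epsilon^{-2}\log(\delta^{-1}))$, the first table entry; the $\epsilon^{-2}$ is exactly the standard quantum limit expected of an incoherent binomial readout.

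For the eigenstate task I would run the same repeated step on $|\psi\rangle=\sum_\alpha c_\alpha|\psi_\alpha\rangle$ and treat it as a weak measurement of $H$: the record $m/N$ concentrates on $p_\beta$ with probability $|c_\beta|^2$ by the Born rule, and conditioned on that branch the state relaxes onto $|\psi_\beta\rangle$. To certify overlap $1-\zeta$ and to separate $h_\beta$ from its neighbours I would require the record to resolve eigenvalues below the gap $\Delta_\beta$, which by the eigenvalue analysis needs a per-run time $t_0=\mathcal{O}(\Delta_\beta^{-2}\log(\zeta^{-1}))$. Rather than accept the $|c_\beta|^{-2}$ naive repetition count, I would run the dilated evolution coherently, with the ancillas left unmeasured, and apply amplitude amplification, reflecting about the register value that encodes $h_\beta$; this reduces the number of runs to $\mathcal{O}(|c_\beta|^{-1})$ and gives total Lindbladian time $\mathcal{O}(|c_\beta|^{-1}\Delta_\beta^{-2}\log(\zeta^{-1}))$.

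To fill the second row I would cost each step on $\tilde H$. A single step takes simulation time $\sqrt{\tau}$, so a run of $N$ steps costs $N\sqrt{\tau}$; committing to a faithful discretization of the continuous dephasing channel drives the step size down to $\tau=\mathcal{O}(t_{\mathrm{run}}^{-2})$ with $N=\mathcal{O}(t_{\mathrm{run}}^{3})$ ancillas, exactly as in Eqs.~(\ref{dl1})--(\ref{dl2}), so that $N\sqrt{\tau}=\mathcal{O}(t_{\mathrm{run}}^{2})$. Squaring the per-run Lindbladian times therefore gives $\mathcal{O}(\epsilon^{-4}\log^{2}(\delta^{-1}))$ for eigenvalue estimation, and multiplying the squared per-run cost $t_0^{2}$ by the $\mathcal{O}(|c_\beta|^{-1})$ amplification factor gives $\mathcal{O}(|c_\beta|^{-1}\Delta_\beta^{-4}\log^{2}(\zeta^{-1}))$ for eigenstate preparation. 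I expect the eigenstate case to be the main obstacle: one must reconcile the trajectory-collapse picture, which needs measurements, with amplitude amplification, which needs coherence, and bound the residual weight on the unwanted eigenspaces tightly enough that the certified overlap $1-\zeta$ and the gap-resolution condition hold simultaneously while retaining only a logarithmic dependence on $\zeta$.
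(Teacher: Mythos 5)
Your eigenvalue-estimation half is essentially the paper's own proof: the paper also reads the $N$ ancilla outcomes of the repeated dilated step as an exact $\mathrm{Binomial}(N,q_\alpha)$ record with $q_\alpha=\sin^2(\sqrt{t/N}\,h_\alpha)$ (Eqs.~(\ref{lint2}) and (\ref{pm})), applies a Bernstein-type tail (Lemma~\ref{con1}), propagates the deviation from $m/N$ back to $h_\alpha$ exactly as you do, obtains $t=\mathcal{O}(\epsilon^{-2}\log(\delta^{-1}))$ (Eq.~(\ref{eslin})), and then converts to dilated-Hamiltonian cost via the same $t\mapsto \mathcal{O}(t^2/\varepsilon)$ discretization, dropping the Lindbladian-accuracy factor $\varepsilon^{-1}$ in the table. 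Your arithmetic for both second-row entries is also the paper's.

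The genuine gap is in the eigenstate half, and it is exactly the point you flagged but did not close. You propose amplitude amplification ``reflecting about the register value that encodes $h_\beta$,'' but for a generic (nonzero) $h_\beta$ there is no such single register value: the $\beta$ branch's ancilla record is itself a binomial distribution spread over $\Theta\bigl(\sqrt{Nq_\beta(1-q_\beta)}\bigr)$ values of $m$, so the reflection target must be a window of outcomes, and nearby eigenvalues $h_\alpha$ with $|h_\alpha-h_\beta|\geq\Delta_\beta$ place nonzero weight inside that window. Bounding that contamination while keeping only a $\log(\zeta^{-1})$ dependence, and doing so coherently (no actual measurement) so that amplification applies, is the substance of the proof, not a loose end. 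The paper resolves it with a spectrum shift you omit: since $h_\beta$ is known, replace $H$ by $H-h_\beta I$ so that $h_\beta=0$. Then $q_\beta=0$, the $\beta$ branch produces the \emph{deterministic} all-zeros record ($p_{m,\beta}=0$ for all $m\neq 0$), amplitude amplification reflects about the single ancilla basis state $|0\cdots 0\rangle$ with no measurement needed, and the residual weight of the other branches at $m=0$ decays as $\cos^{2N}(\sqrt{t/N}\,h_\alpha)\approx e^{-th_\alpha^2}\leq e^{-t\Delta_\beta^2}$, giving the conditional-overlap bound of Eq.~(\ref{import2}),
\begin{equation*}
p_{0,\beta}\;\geq\;\frac{|c_\beta|^2}{|c_\beta|^2+(1-|c_\beta|^2)e^{-t\Delta_\beta^2}},
\end{equation*}
together with $p_0\geq|c_\beta|^2$ (Eq.~(\ref{import1})). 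This is what yields $t=\mathcal{O}(\Delta_\beta^{-2}\log(\zeta^{-1}))$ per run and, after the $\mathcal{O}(|c_\beta|^{-1})$ amplification rounds, the second-column entries of the table; without the shift (or a fully worked windowed variant of your argument), those entries are not established.
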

Compared with Eq.~(\ref{ee}) and Eq.~(\ref{ep}) in standard QPE, we can find that the results in Theorem \ref{the1} have exponential advantages in terms of $\delta$ and $\zeta$, and quadratic advantage in terms of $|c_\beta|$ (here we also use the amplitude amplification \cite{brassard2000quantum} technique). However, the factors $\epsilon$ and $\Delta_\beta$ are quadratically worse in the Lindbladian evolution time and even quartically worse in the dilated Hamiltonian simulation time, which is the real complexity when running on quantum computers. This $\epsilon^{-2}$ scaling is known as the standard quantum limit and indicates that the natural evolution of the Lindbladian Eq.~(\ref{mainslme}) has not reached the minimal energy-time uncertainty, and therefore is not that ``quantum" compared with the Hamiltonian dynamics.

\noindent\textit{\textbf{Quadratic fast-forwarding simulation.—}}Since the natural evolution of the Lindbladian Eq.~(\ref{mainslme}) has not reached the Heisenberg limit, the energy-time uncertainty principle does not prohibit its fast-forwarding, and the existence of a Lindbladian simulation algorithm for Eq.~(\ref{mainslme}) with complexity  $\mathcal{O}(\sqrt{t})$ is expected. Indeed, we can observe that Eq.~(\ref{mainshort}) satisfies
\begin{eqnarray}
&&(\textbf{H}\otimes I)e^{-i \tilde{H}\sqrt{\tau}}|0\rangle|\psi\rangle=\nonumber\\&&\frac{1}{\sqrt{2}}|0\rangle e^{-iH\sqrt{\tau}}|\psi\rangle+\frac{1}{\sqrt{2}}|1\rangle e^{iH\sqrt{\tau}}|\psi\rangle,
\end{eqnarray}
which acts as a $|0\rangle$/$|1\rangle$-controlled forward and backward $H$ evolution. Since each $e^{- i\tilde {H}\sqrt{\tau}}$ uses a new ancilla qubit, the net behaviors of implementing the short-time dilated Hamiltonian $\tilde{H}$ simulation for $N$ times can be well captured by an $N$-step 1-dimensional classical random walk with equal probability \cite{xia2019random}. As a result, while the maximal Hamiltonian simulation time is $N\sqrt{\tau}$, the majority will concentrate in the simulation time range $\sqrt{N\tau}$. Based on this observation, we give a Lindbladian simulation algorithm for Eq.~(\ref{mainslme}) with the basic quantum circuit shown in Fig. \ref{fig.fflind} of SM \ref{smff} and the complexity summarized in Theorem \ref{the2}.
\begin{theorem}\label{the2}
The Lindbladian Eq.~(\ref{mainslme}) for an evolution time $t$ can be simulated in $\varepsilon$ error with a complexity
\begin{eqnarray}
\text{Hami.simu.time ($H$)}&&=\mathcal{O}\left(t^{1/2}\log^{1/2}(\varepsilon^{-1})\right),\nonumber\\
\text{Number of ancilla qubits}&&=\mathcal{O}\left(\log(t)+\log(\varepsilon^{-1})\right).\nonumber
\end{eqnarray}
\end{theorem}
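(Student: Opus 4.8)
The plan is to promote the single-step primitive already displayed in the excerpt into an exact channel identity and then exploit that the forward and backward $H$-evolutions commute. Define the one-step channel obtained by applying the Hadamard-dressed dilated evolution $(\textbf{H}\otimes I)e^{-i\tilde{H}\sqrt{\tau}}$ and tracing out the fresh ancilla,
\begin{equation}
\mathcal{E}_\tau[\rho]=\tfrac{1}{2}e^{-iH\sqrt{\tau}}\rho\, e^{iH\sqrt{\tau}}+\tfrac{1}{2}e^{iH\sqrt{\tau}}\rho\, e^{-iH\sqrt{\tau}}.
\end{equation}
Because all the $e^{\pm iH\sqrt{\tau}}$ commute, composing $N$ copies collapses the $2^N$ sign choices into their signed sum $S=\sum_i\epsilon_i$, giving the exact identity
\begin{equation}
\mathcal{E}_\tau^{\circ N}[\rho]=\sum_{S=-N}^{N}P_N(S)\,e^{-iHS\sqrt{\tau}}\rho\, e^{iHS\sqrt{\tau}},
\end{equation}
with $P_N(S)=\binom{N}{(N+S)/2}2^{-N}$ the $N$-step simple-random-walk distribution. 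This identity is the heart of the fast-forwarding: $N$ infinitesimal steps are replaced by a single evolution whose total time $S\sqrt{\tau}$ is the net walker displacement, which concentrates at the $\mathcal{O}(\sqrt{N})$ scale rather than the worst-case $N$ scale.

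First I would pin down $N$ and $\tau=t/N$ so that $\mathcal{E}_\tau^{\circ N}$ reproduces the target dephasing $e^{t\mathcal{L}_H}$. In the eigenbasis both maps are diagonal: $\mathcal{E}_\tau^{\circ N}$ multiplies the coherence $\langle\psi_\alpha|\rho|\psi_\beta\rangle$ by $\cos^N\!\big((h_\alpha-h_\beta)\sqrt{t/N}\big)$, while the exact Lindbladian multiplies it by $e^{-\frac12(h_\alpha-h_\beta)^2 t}$. Expanding $\log\cos$ shows the two factors agree up to a relative error $\mathcal{O}\!\big((h_\alpha-h_\beta)^4 t^2/N\big)=\mathcal{O}(t^2/N)$ uniformly in the gaps (using $h_\alpha\in[0,1]$), so choosing $N=\mathcal{O}(t^2/\varepsilon)$ makes the channel discrepancy at most $\varepsilon/2$. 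The point is that $N$ enters the final cost only through $\log N$, so this polynomial choice is harmless.

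Next I would truncate and implement the commuting mixture coherently. A Hoeffding/Gaussian tail bound gives $\sum_{|S|>S_{\max}}P_N(S)\le 2e^{-S_{\max}^2/2N}$, so restricting to $|S|\le S_{\max}=\mathcal{O}(\sqrt{N\log(\varepsilon^{-1})})$ costs only $\varepsilon/2$ further error. I would then load the (renormalized) amplitudes $\sum_{|S|\le S_{\max}}\sqrt{P_N(S)}\,|S\rangle$ into an index register of $\mathcal{O}(\log S_{\max})=\mathcal{O}(\log t+\log(\varepsilon^{-1}))$ qubits, apply the register-controlled evolution $\sum_S|S\rangle\!\langle S|\otimes e^{-iHS\sqrt{\tau}}$, and discard the register; tracing it out collapses the cross terms $|S\rangle\!\langle S'|$ and returns exactly the truncated mixture. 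Using the binary expansion $S=\sum_k b_k 2^k$, the controlled evolution factorizes into gates $e^{-iH\,2^k\sqrt{\tau}}$ controlled on bit $b_k$, whose $H$-evolution times sum to $\mathcal{O}(S_{\max}\sqrt{\tau})=\mathcal{O}\!\big(\sqrt{N\log(\varepsilon^{-1})}\cdot\sqrt{t/N}\big)=\mathcal{O}\!\big(\sqrt{t\log(\varepsilon^{-1})}\big)$, crucially independent of $N$. The ancilla count follows from the register size, establishing both rows of the theorem.

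The step I expect to be the main obstacle is the channel-distance bookkeeping rather than any single inequality: one must propagate the per-coherence multiplicative errors into a genuine bound on $\|\mathcal{E}_\tau^{\circ N}-e^{t\mathcal{L}_H}\|$ that is uniform over the (possibly exponentially many) eigenvalue gaps and over all input states, and then verify that truncation, amplitude loading, and the binary-decomposed controlled evolution compose subadditively so the total error stays $\mathcal{O}(\varepsilon)$. A secondary point to handle with care is that $\sqrt{P_N(S)}$-amplitude preparation for the near-Gaussian binomial is itself efficient in gate count, so that it does not dominate the $\mathcal{O}(\sqrt{t\log(\varepsilon^{-1})})$ $H$-simulation budget.
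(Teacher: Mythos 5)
Your proposal is, at its core, the same algorithm as the paper's (SM~\ref{smff}): your one-step channel $\mathcal{E}_\tau$ is exactly what the paper obtains by tracing the fresh ancilla out of $e^{-i\tilde{H}\sqrt{\tau}}$; your random-walk mixture identity is the reduced system state of Eq.~(\ref{lint3}); your Hoeffding truncation at $S_{\max}=\mathcal{O}(\sqrt{N\log(\varepsilon^{-1})})$ is the paper's use of Lemma~\ref{con2}; and your bit-decomposed controlled evolution $\sum_S |S\rangle\langle S|\otimes e^{-iHS\sqrt{\tau}}$ is the paper's unitary $V_H$ of Eq.~(\ref{vhh}) (the paper adds modular-addition bookkeeping to center the control window), with identical cost accounting for both rows of the theorem. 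The one point where you genuinely deviate is the calibration of $N$: the paper simply imports the collision-model error bound $N=\mathcal{O}(t^3\varepsilon^{-2})$ from Eq.~(\ref{dilated}), whereas you compare the damping factors $\cos^N(\Delta\sqrt{t/N})$ and $e^{-\Delta^2 t/2}$ eigenpair by eigenpair and get $N=\mathcal{O}(t^2/\varepsilon)$. Your route is more self-contained, and the discrepancy is harmless either way, since $N$ enters the final complexity only through $\log N$ and through $S_{\max}\sqrt{t/N}$.

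However, as you yourself anticipate, that calibration step is not yet a proof: an entrywise bound on the coherence factors does not by itself bound the channel error. The difference of the two maps is the Schur multiplier by $M_{\alpha\beta}=\cos^N\bigl((h_\alpha-h_\beta)\sqrt{t/N}\bigr)-e^{-(h_\alpha-h_\beta)^2t/2}$, and Schur-multiplier norms are not controlled by the supremum of the entries in general (triangular truncation is the standard counterexample, with a logarithmic-in-dimension blow-up). A clean repair uses the structure you already identified: both maps are mixtures of the commuting unitaries $e^{-iHs}$, with $s$ drawn from the rescaled walk measure $P$ and from $Q=\mathcal{N}(0,t)$ respectively, so a coupling argument together with $\|H\|\leq 1$ gives $\bigl\|\mathcal{E}_\tau^{\circ N}[\rho]-e^{t\mathcal{L}_H}[\rho]\bigr\|_1\leq 2\,W_1(P,Q)=\mathcal{O}(\sqrt{t/N})$ by central-limit-type rates; alternatively, just adopt the paper's $N=\mathcal{O}(t^3\varepsilon^{-2})$. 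With either patch, your truncation, amplitude-loading, and controlled-evolution errors compose subadditively exactly as in the paper, and the stated Hamiltonian-simulation time and ancilla count follow.
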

Our results present a quartic advantage in $t$ compared to that of the dilated Hamiltonian approach (Eqs.~(\ref{dl1})-(\ref{dl2})), corresponding to a quadratic fast forwarding of the natural Lindbladian evolution. Moreover, it also achieves exponential advantages in terms of $\epsilon$ and the number of ancilla qubits. Note that, in our fast-forwarding algorithm, we need the preparation of the binomial distribution state $2^{-N/2}\sum_{m=0}^N\sqrt{\binom{N}{m}}|m\rangle$ with $|m\rangle$ in binary representation. In SM \ref{sec.stateprep}, based on Ref.~\cite{kitaev2008wavefunction}, we show that $\mathcal{O}(\text{poly}(n+\log(\varepsilon^{-1})))$ gate complexity is sufficient to prepare the state within an error $\varepsilon$.

Our Lindbladian fast-forwarding unveils non-trivial differences between unitary and dissipative dynamics (Fig. \ref{fig2}). For Hamiltonian dynamics, since they already meet the Heisenberg limit, one can only expect rather restrictive cases with special properties, such as being efficiently diagonalizable \cite{atia2017fast}, to overcome the Heisenberg limit and therefore, enable fast-forwarding. In contrast, the Lindbladian Eq.~(\ref{mainslme}) only achieves the standard quantum limit, and one only needs to let it reach the Heisenberg limit (see the following section) to enable fast-forwarding. Also, while Eq.~(\ref{mainslme}) is restrictive in terms of the Lindbladian, its cardinality is actually comparable to the whole Hamiltonian dynamics. For any $H$, we can either put it in the Hamiltonian dynamics $e^{-iHt}$ or the dissipative Lindbladian Eq.~(\ref{mainslme}), indicating they form a bijection.

\noindent\textit{\textbf{Fast forwarding for fast QPE.—}}In Theorem \ref{the1}, we show that the Lindbladian Eq.~(\ref{mainslme}) is related to the QPE tasks, and in Theorem \ref{the2}, we show that Eq.~(\ref{mainslme}) can be quadratically fast-forwarded, therefore, naturally, the combination of these results should lead to a Heisenberg-limit Lindbladian-based QPE algorithm (SM \ref{smffqpe}). We summarize the complexity of this new QPE algorithm in the following theorem
\begin{theorem}\label{the3}
The Lindbladian fast-forwarding algorithm (Theorem \ref{the2}) can be adopted for QPE tasks, the required Hamiltonian simulation time ($H$) is
\begin{eqnarray}
\begin{tabular}{|c|c|c|}
\hline
 & Eigenvalue &Eigenstate \\
\hline
Hami ($H$) & $\mathcal{O}(\epsilon^{-1}\log(\delta^{-1}))$ & $\mathcal{O}(|c_\beta|^{-1}\Delta_\beta^{-1}\log(\zeta^{-1}))$\\\hline
\end{tabular}.\nonumber
\end{eqnarray}
\end{theorem}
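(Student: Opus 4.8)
The plan is to combine Theorem~\ref{the1} and Theorem~\ref{the2} by the obvious route: take the Lindbladian-based QPE of Theorem~\ref{the1}, whose cost is measured in \emph{Lindbladian evolution time}, and substitute the fast-forwarding simulation of Theorem~\ref{the2} in place of the naive dilated-Hamiltonian implementation. Concretely, Theorem~\ref{the1} tells us that solving the eigenvalue task requires simulating $\mathcal{L}_H$ for a total evolution time $t=\mathcal{O}(\epsilon^{-2}\log(\delta^{-1}))$, and the eigenstate task for $t=\mathcal{O}(|c_\beta|^{-1}\Delta_\beta^{-2}\log(\zeta^{-1}))$. Theorem~\ref{the2} then simulates $\mathcal{L}_H$ for evolution time $t$ using Hamiltonian simulation time $\mathcal{O}(t^{1/2}\log^{1/2}(\varepsilon^{-1}))$. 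The key step is simply to compose these two bounds: feeding $t=\mathcal{O}(\epsilon^{-2}\log(\delta^{-1}))$ into $\sqrt{t}$ gives $\mathcal{O}(\epsilon^{-1}\log^{1/2}(\delta^{-1}))$, and analogously $\sqrt{|c_\beta|^{-1}\Delta_\beta^{-2}\log(\zeta^{-1})}=\mathcal{O}(|c_\beta|^{-1/2}\Delta_\beta^{-1}\log^{1/2}(\zeta^{-1}))$. This already matches the stated scalings in $\epsilon$, $\Delta_\beta$, and $\zeta$, confirming the Heisenberg-limit $\epsilon^{-1}$ behavior.

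\textbf{Reconciling the $\log$ and $|c_\beta|$ dependence.} The table in Theorem~\ref{the3} reports $\log(\delta^{-1})$ rather than $\log^{1/2}(\delta^{-1})$, and $|c_\beta|^{-1}$ rather than $|c_\beta|^{-1/2}$. I would account for this by carefully tracking the simulation error budget. To resolve eigenvalues within the required precision with failure probability $\delta$, one does not merely take a single square root of the Lindbladian time; one must ensure the fast-forwarded simulation itself is accurate enough that the dephasing statistics are not corrupted. Setting the simulation error $\varepsilon$ to scale polynomially with $\delta$ (so that $\log(\varepsilon^{-1})=\mathcal{O}(\log(\delta^{-1}))$) introduces an extra $\log^{1/2}(\delta^{-1})$ factor from Theorem~\ref{the2}, which combines with the $\log^{1/2}(\delta^{-1})$ from the composed evolution time to yield the full $\log(\delta^{-1})$. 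The same bookkeeping with $\varepsilon$ tied to $\zeta$ and the need to amplify the $|c_\beta|^2$-weighted success amplitude (via amplitude amplification, as in Theorem~\ref{the1}) restores the $|c_\beta|^{-1}$ factor.

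\textbf{The main obstacle} I anticipate is not the algebra of composing the bounds but verifying that the fast-forwarding simulation of Theorem~\ref{the2} interfaces correctly with the measurement-based readout of Theorem~\ref{the1}. The slow Lindbladian QPE of Theorem~\ref{the1} extracts eigenvalue information from the \emph{statistics of ancilla measurements} after short-time dilated steps, using the relation $h_\alpha=\sqrt{N/t}\,\arcsin(\sqrt{m/N})$; by contrast, the fast-forwarding scheme replaces this explicit step-by-step dephasing with a coherent binomial-state construction and amplitude-estimation-style readout. I would need to confirm that the output of the fast-forwarded channel is statistically indistinguishable (to within $\varepsilon$) from the true $\mathcal{L}_H$ evolution on the relevant observables, so that the eigenvalue-extraction analysis of Theorem~\ref{the1} applies verbatim. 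Once this equivalence is established, the complexity claim follows immediately from the square-root composition, and the exponential improvements in $\delta$ and $\zeta$ are inherited from Theorem~\ref{the1} without degradation.
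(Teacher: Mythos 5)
Your high-level route is the same as the paper's: compose Theorem~\ref{the1}'s Lindbladian evolution times with Theorem~\ref{the2}'s fast-forwarding, tie the simulation error $\varepsilon$ polynomially to $\delta$ (resp.\ $\zeta$) so that the $\log^{1/2}(\varepsilon^{-1})$ factor from Theorem~\ref{the2} combines with the $\log^{1/2}$ from the square-rooted evolution time, and keep the amplitude-amplification factor $|c_\beta|^{-1}$ outside the square root since it counts repetitions rather than coherent evolution time. This bookkeeping is exactly what the paper does (it sets $\varepsilon^{1/2}=\mathcal{O}(\delta)$ for eigenvalue estimation and $\varepsilon^{1/2}=|c_\beta|\zeta$ for eigenstate preparation), and your reconciliation of the $\log$ and $|c_\beta|$ powers is correct.

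However, the ``main obstacle'' you flag at the end is the actual crux of the proof, and you leave it unresolved --- and, more importantly, you pose it at the wrong level of abstraction. You ask whether the fast-forwarded \emph{channel} output is $\varepsilon$-close to the true $\mathcal{L}_H$ evolution ``on the relevant observables,'' but the relevant observables for QPE readout are \emph{ancilla} measurement outcomes, which are not functions of the channel output (the reduced system density matrix) at all: tracing out the ancillas destroys precisely the information the readout needs. Theorem~\ref{the2}'s trace-distance guarantee on the reduced state is therefore insufficient. The paper instead works with the joint purified state: it bounds $\||\text{Goal}\rangle-|S_{+/-}\rangle\|_2\leq\varepsilon^{1/2}$ and converts this $\ell_2$ bound into an $\ell_1$ bound $\sum_m|p_m-p'_m|\leq 2\varepsilon^{1/2}$ on the ancilla statistics (via Cauchy--Schwarz), which is why $\varepsilon^{1/2}$, not $\varepsilon$, is what must be tied to $\delta$. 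Moreover, there is a basis mismatch you do not address: the fast-forwarding construction natively produces the ancillas in a compressed encoding of the $+/-$ (random-walk) basis, while Theorem~\ref{the1}'s eigenvalue extraction reads counts in the $0/1$ basis. Bridging them requires a unitary $U_D$ whose matrix elements are given by Kravchuk polynomials; the paper explicitly states that an efficient gate-level construction of $U_D$ is an open question, and salvages the theorem only by falling back to the $N$-qubit Dicke-state encoding (applying $\textbf{H}^{\otimes N}$), which preserves the Heisenberg-limit Hamiltonian simulation time at the cost of $N$ rather than $\log N$ ancillas. Without either $U_D$ or this fallback, the claim that Theorem~\ref{the1}'s analysis ``applies verbatim'' does not go through.
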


More interestingly, Theorem \ref{the3} provides a new way of emerging quadratic quantum speedup, which seems to have nothing to do with quantum. The point is that the Lindbladian-based QPE in Theorem \ref{the3} can be used for amplitude estimation \cite{brassard2000quantum}. By encoding the number of solutions of NP-complete decision problems into target amplitudes, one can use amplitude estimation to solve the decision problems \cite{aaronson2020quantum,brassard2000quantum}, and the Heisenberg limit scaling of QPE leads to the quadratic quantum speedup for NP-complete problems (See SM \ref{smqua}). Recall that the Heisenberg limit in Theorem \ref{the3} is due to our fast-forwarding algorithm, Theorem \ref{the2}, whose core is merely the concentration of \textbf{classical} random walk! In other words, the behaviors of classical random walk exactly provide the source of quadratic quantum speedup. This is in stark contrast with the quadratic speedup of hitting time in quantum walk \cite{venegas2012quantum}, where the ability of quantum walk to go beyond the behaviors of classical random walk provides the source.

\noindent\textit{\textbf{Additional results.—}}We provide two additional results.

\noindent\textit{Quantum Gibbs state preparation: }It turns out that our Lindbladian fast-forwarding algorithm can also be used for quantum Gibbs state preparation. Given a positive semidefinite Hamiltonian $H_P$ with $\|H_P\|\leq 1$, our goal is to prepare the Gibbs state $\rho_\beta=e^{-\beta H_P}/Z_\beta$ at the inverse temperature $\beta$ with $Z_\beta=\Tr(e^{-\beta H_P})$ the partition function. Based on Ref.~\cite{shang2024design}, in SM \ref{smpar}, we show that by setting $F$ in Eq.~(\ref{mainflme}) as $F=\begin{pmatrix}
\sqrt{H_P}\otimes I &0\\
0 & 0\end{pmatrix}$, one can eventualy prepare the purification $|\rho_\beta\rangle$ of $\rho_\beta$. Since $F$ is Hermitian, which satisfies our fast-forwarding requirements, as a result, through the state-of-the-art Hamiltonian simulation algorithm \cite{low2017optimal}, to prepare $|\rho_\beta\rangle$ within an error $\varepsilon$, the query complexity on the block encoding of $\sqrt{H_P}$ is
\begin{eqnarray}
\tilde{\mathcal{O}}\left(\sqrt{\frac{2^n}{Z_\beta}\beta\log(\varepsilon^{-1})} \right),
\end{eqnarray}
for $\beta> \log^3(\varepsilon^{-1})$. This result matches the previous state-of-the-art results shown in QSVT~\cite{gilyen2019quantum} at large $\beta$, and has better scaling in $\varepsilon$ compared with results in Ref.~\cite{chowdhury2016quantum} (Both of the methods use $\sqrt{H_P}$ access). It is important to note that the $\sqrt{\beta}$ scaling in our approach is due to our fast-forwarding algorithm, whose origin is the concentration of the classical random walk and is in stark contrast to QSVT, where the $\sqrt{\beta}$ scaling is due to the fast polynomial approximations of monomial functions~\cite{sachdeva2014faster}. We also want to mention that there is a series of works \cite{chen2023quantum,chen2023efficient,rouze2024optimal} for quantum Gibbs state preparation through Davies generator-type Lindbladians recently, which maps the Gibbs states into the steady state of Lindbladians. Since their complexity mainly relies on the mixing time of the Lindbladian, which can vary greatly from case to case, we currently don't know how to make a fair comparison with those results.

\noindent\textit{Choi commuting Lindbladians and Pauli noise: }In fact, our quadratic fast-forwarding algorithm can be applied to a much broader class of Lindbladians beyond Eq.~(\ref{mainslme}). 
For the following Lindbladian
\begin{equation}
\frac{\d\rho(t)}{\d t}=\sum_{i}\left(H_i\rho(t) H_i^\dag-
\frac{1}{2}\{\rho(t),H_i^\dag H_i\}\right),
\end{equation}
as long as $H_i$ are Hermitian and 
\begin{eqnarray}\label{condd}
&&[H_i\otimes H_i^*-\frac{1}{2}H_i^2\otimes I-I\otimes \frac{1}{2}H_i^{*2},\nonumber\\&&H_j\otimes H_j^*-\frac{1}{2}H_j^2\otimes I-I\otimes \frac{1}{2}H_j^{*2}]=0,
\end{eqnarray}
is satisfied for any $i,j$, we allow to achieve a fast-forwarding simulation with the required Hamiltonian simulation time also of order $\mathcal{O}\left(t^{1/2}\log^{1/2}(\varepsilon^{-1})\right)$. We name such Lindbladians the Choi commuting Lindbladians (See SM \ref{smgen}). Importantly, Eq.~(\ref{condd}) can be satisfied when $H_i$ and $H_j$ are either commuting or anti-commuting. The multi-qubit Pauli group exactly satisfies this condition. Therefore, we obtain the surprising result that for any combination of Pauli noises, the resulting decoherence effect can actually happen in a quadratically faster speed. Since Pauli noise is the standard error model in the quantum error correction community~\cite{lidar2013quantum}, our results suggest the potential applications of using quantum computers for exploring noisy quantum systems. Besides Pauli noise, such Choi commuting Lindbladians also cover other important classes such as maximal dephasing quantum master equations~\cite{fagnola2019mathematical}.

\noindent\textit{\textbf{Summary and outlook.—}}In summary, we find a deep connection between QPE and Lindbladians, and its implications to fast-forwarding. Theorems \ref{the1}-\ref{the3} are logically self-consistent. Lindbladian Eq.~(\ref{mainslme}) for QPE only gives a standard quantum limit scaling (Theorem \ref{the1}), which indicates the possibility of quadratic fast-forwarding (Theorem \ref{the2}). The fast-forwarding algorithm, in turn, improves QPE based on Eq.~(\ref{mainslme}) to the Heisenberg limit (Theorem \ref{the3}). We unveil the stark differences of fast-forwarding in Hamiltonian and Lindbladian dynamics and give a new way of achieving quadratic quantum speedup stemming from the concentration of a classical random walk. We also show how to utilize the Lindbladian fast-forwarding for quantum Gibbs state preparations and its generalizations to Choi commuting Lindbladians, including arbitrary combinations of Pauli noise.

There are several interesting future directions. First, we can ask whether there are other types of Lindbladian fast-forwarding that can go beyond quadratic to exponential based on different mechanisms. Second, in SM~\ref{smffqpe}, we show that our Lindbladian fast QPE algorithm has something to do with the Kravchuk polynomial~\cite{krawtchouk1929generalisation}, which might draw special interest. Third, since the Lindbladian Eq.~(\ref{mainslme}) we considered can help prepare diagonal quantum states on the eigenbasis of $H$, it naturally catalyzes the quantum Mpemba effect~\cite{moroder2024thermodynamics}: an exponential speedup towards the thermal equalibrism can be achieved for diagonal input states compared to states with coherence on the eigenbasis of $H$. Further studies in this direction could lead to improvements on recent Lindbladian-based quantum Gibbs state preparation algorithms~\cite{chen2023quantum,chen2023efficient,rouze2024optimal}. Forth, we hope that this work can inspire more research on the comparison between dissipative and coherent quantum algorithms, especially their advantages and disadvantages in different tasks. Finally, it is exciting to explore more quantum algorithms based on Lindbladians and open quantum systems, where new types of quantum speedup might be discovered.
$$\\$$
\begin{acknowledgments}
The authors would like to thank Itai Arad, Mohsen Bagherimehrab, Richard Cleve, Hsin-Yuan Huang, Guang Hao Low, Peixue Wu, Yuxiang Yang, and Xiao-Ming Zhang for valuable discussions. Z.S. would also like to thank Dong An and Changpeng Shao for discussions on a related project. Z.S. and Q.Z. acknowledge the support from Innovation Program for Quantum Science and Technology via Project 2024ZD0301900, National Natural Science Foundation of China (NSFC) via Project No. 12347104 and No. 12305030, Guangdong Basic and Applied Basic Research Foundation via Project 2023A1515012185, Hong Kong Research Grant Council (RGC) via No. 27300823, N\_HKU718/23, and R6010-23, Guangdong Provincial Quantum Science Strategic Initiative No. GDZX2303007, HKU Seed Fund for Basic Research for New Staff via Project 2201100596. Z.S. also acknowledges the support of HK Institute of Quantum Science and Technology. N.G. acknowledges support through the Research Excellence Scholarship from SandboxAQ. T.L. acknowledges the support from the National Natural Science Foundation of China (NSFC) via Project No. 62372006 and No. 92365117.
\end{acknowledgments}

\bibliography{ref}

\begin{thebibliography}{58}%
\makeatletter
\providecommand \@ifxundefined [1]{%
 \@ifx{#1\undefined}
}%
\providecommand \@ifnum [1]{%
 \ifnum #1\expandafter \@firstoftwo
 \else \expandafter \@secondoftwo
 \fi
}%
\providecommand \@ifx [1]{%
 \ifx #1\expandafter \@firstoftwo
 \else \expandafter \@secondoftwo
 \fi
}%
\providecommand \natexlab [1]{#1}%
\providecommand \enquote  [1]{``#1''}%
\providecommand \bibnamefont  [1]{#1}%
\providecommand \bibfnamefont [1]{#1}%
\providecommand \citenamefont [1]{#1}%
\providecommand \href@noop [0]{\@secondoftwo}%
\providecommand \href [0]{\begingroup \@sanitize@url \@href}%
\providecommand \@href[1]{\@@startlink{#1}\@@href}%
\providecommand \@@href[1]{\endgroup#1\@@endlink}%
\providecommand \@sanitize@url [0]{\catcode `\\12\catcode `\$12\catcode `\&12\catcode `\#12\catcode `\^12\catcode `\_12\catcode `\%12\relax}%
\providecommand \@@startlink[1]{}%
\providecommand \@@endlink[0]{}%
\providecommand \url  [0]{\begingroup\@sanitize@url \@url }%
\providecommand \@url [1]{\endgroup\@href {#1}{\urlprefix }}%
\providecommand \urlprefix  [0]{URL }%
\providecommand \Eprint [0]{\href }%
\providecommand \doibase [0]{https://doi.org/}%
\providecommand \selectlanguage [0]{\@gobble}%
\providecommand \bibinfo  [0]{\@secondoftwo}%
\providecommand \bibfield  [0]{\@secondoftwo}%
\providecommand \translation [1]{[#1]}%
\providecommand \BibitemOpen [0]{}%
\providecommand \bibitemStop [0]{}%
\providecommand \bibitemNoStop [0]{.\EOS\space}%
\providecommand \EOS [0]{\spacefactor3000\relax}%
\providecommand \BibitemShut  [1]{\csname bibitem#1\endcsname}%
\let\auto@bib@innerbib\@empty
\bibitem [{\citenamefont {Nielsen}\ and\ \citenamefont {Chuang}(2010)}]{nielsen2010quantum}%
  \BibitemOpen
  \bibfield  {author} {\bibinfo {author} {\bibfnamefont {M.~A.}\ \bibnamefont {Nielsen}}\ and\ \bibinfo {author} {\bibfnamefont {I.~L.}\ \bibnamefont {Chuang}},\ }\href@noop {} {\emph {\bibinfo {title} {Quantum computation and quantum information}}}\ (\bibinfo  {publisher} {Cambridge university press},\ \bibinfo {year} {2010})\BibitemShut {NoStop}%
\bibitem [{\citenamefont {Kitaev}(1995)}]{kitaev1995quantum}%
  \BibitemOpen
  \bibfield  {author} {\bibinfo {author} {\bibfnamefont {A.~Y.}\ \bibnamefont {Kitaev}},\ }\bibfield  {title} {\bibinfo {title} {Quantum measurements and the abelian stabilizer problem},\ }\href@noop {} {\bibfield  {journal} {\bibinfo  {journal} {arXiv preprint quant-ph/9511026}\ } (\bibinfo {year} {1995})}\BibitemShut {NoStop}%
\bibitem [{\citenamefont {Dalzell}\ \emph {et~al.}(2023)\citenamefont {Dalzell}, \citenamefont {McArdle}, \citenamefont {Berta}, \citenamefont {Bienias}, \citenamefont {Chen}, \citenamefont {Gily{\'e}n}, \citenamefont {Hann}, \citenamefont {Kastoryano}, \citenamefont {Khabiboulline}, \citenamefont {Kubica}, \citenamefont {Salton}, \citenamefont {Wang},\ and\ \citenamefont {Brandão}}]{dalzell2023quantum}%
  \BibitemOpen
  \bibfield  {author} {\bibinfo {author} {\bibfnamefont {A.~M.}\ \bibnamefont {Dalzell}}, \bibinfo {author} {\bibfnamefont {S.}~\bibnamefont {McArdle}}, \bibinfo {author} {\bibfnamefont {M.}~\bibnamefont {Berta}}, \bibinfo {author} {\bibfnamefont {P.}~\bibnamefont {Bienias}}, \bibinfo {author} {\bibfnamefont {C.-F.}\ \bibnamefont {Chen}}, \bibinfo {author} {\bibfnamefont {A.}~\bibnamefont {Gily{\'e}n}}, \bibinfo {author} {\bibfnamefont {C.~T.}\ \bibnamefont {Hann}}, \bibinfo {author} {\bibfnamefont {M.~J.}\ \bibnamefont {Kastoryano}}, \bibinfo {author} {\bibfnamefont {E.~T.}\ \bibnamefont {Khabiboulline}}, \bibinfo {author} {\bibfnamefont {A.}~\bibnamefont {Kubica}}, \bibinfo {author} {\bibfnamefont {G.}~\bibnamefont {Salton}}, \bibinfo {author} {\bibfnamefont {S.}~\bibnamefont {Wang}},\ and\ \bibinfo {author} {\bibfnamefont {F.~G. S.~L.}\ \bibnamefont {Brandão}},\ }\bibfield  {title} {\bibinfo {title} {Quantum algorithms: A survey of applications and end-to-end complexities},\ }\href@noop {} {\bibfield
  {journal} {\bibinfo  {journal} {arXiv preprint arXiv:2310.03011}\ } (\bibinfo {year} {2023})}\BibitemShut {NoStop}%
\bibitem [{\citenamefont {Wiebe}\ and\ \citenamefont {Granade}(2016)}]{wiebe2016efficient}%
  \BibitemOpen
  \bibfield  {author} {\bibinfo {author} {\bibfnamefont {N.}~\bibnamefont {Wiebe}}\ and\ \bibinfo {author} {\bibfnamefont {C.}~\bibnamefont {Granade}},\ }\bibfield  {title} {\bibinfo {title} {Efficient {B}ayesian phase estimation},\ }\href@noop {} {\bibfield  {journal} {\bibinfo  {journal} {Physical Review Letters}\ }\textbf {\bibinfo {volume} {117}},\ \bibinfo {pages} {010503} (\bibinfo {year} {2016})}\BibitemShut {NoStop}%
\bibitem [{\citenamefont {Ding}\ and\ \citenamefont {Lin}(2023)}]{ding2023even}%
  \BibitemOpen
  \bibfield  {author} {\bibinfo {author} {\bibfnamefont {Z.}~\bibnamefont {Ding}}\ and\ \bibinfo {author} {\bibfnamefont {L.}~\bibnamefont {Lin}},\ }\bibfield  {title} {\bibinfo {title} {Even shorter quantum circuit for phase estimation on early fault-tolerant quantum computers with applications to ground-state energy estimation},\ }\href@noop {} {\bibfield  {journal} {\bibinfo  {journal} {PRX Quantum}\ }\textbf {\bibinfo {volume} {4}},\ \bibinfo {pages} {020331} (\bibinfo {year} {2023})}\BibitemShut {NoStop}%
\bibitem [{\citenamefont {Higgins}\ \emph {et~al.}(2007)\citenamefont {Higgins}, \citenamefont {Berry}, \citenamefont {Bartlett}, \citenamefont {Wiseman},\ and\ \citenamefont {Pryde}}]{higgins2007entanglement}%
  \BibitemOpen
  \bibfield  {author} {\bibinfo {author} {\bibfnamefont {B.~L.}\ \bibnamefont {Higgins}}, \bibinfo {author} {\bibfnamefont {D.~W.}\ \bibnamefont {Berry}}, \bibinfo {author} {\bibfnamefont {S.~D.}\ \bibnamefont {Bartlett}}, \bibinfo {author} {\bibfnamefont {H.~M.}\ \bibnamefont {Wiseman}},\ and\ \bibinfo {author} {\bibfnamefont {G.~J.}\ \bibnamefont {Pryde}},\ }\bibfield  {title} {\bibinfo {title} {Entanglement-free {H}eisenberg-limited phase estimation},\ }\href@noop {} {\bibfield  {journal} {\bibinfo  {journal} {Nature}\ }\textbf {\bibinfo {volume} {450}},\ \bibinfo {pages} {393} (\bibinfo {year} {2007})}\BibitemShut {NoStop}%
\bibitem [{\citenamefont {Shor}(1994)}]{shor1994algorithms}%
  \BibitemOpen
  \bibfield  {author} {\bibinfo {author} {\bibfnamefont {P.~W.}\ \bibnamefont {Shor}},\ }\bibfield  {title} {\bibinfo {title} {Algorithms for quantum computation: discrete logarithms and factoring},\ }in\ \href@noop {} {\emph {\bibinfo {booktitle} {Proceedings 35th annual symposium on foundations of computer science}}}\ (\bibinfo {organization} {Ieee},\ \bibinfo {year} {1994})\ pp.\ \bibinfo {pages} {124--134}\BibitemShut {NoStop}%
\bibitem [{\citenamefont {Harrow}\ \emph {et~al.}(2009)\citenamefont {Harrow}, \citenamefont {Hassidim},\ and\ \citenamefont {Lloyd}}]{harrow2009quantum}%
  \BibitemOpen
  \bibfield  {author} {\bibinfo {author} {\bibfnamefont {A.~W.}\ \bibnamefont {Harrow}}, \bibinfo {author} {\bibfnamefont {A.}~\bibnamefont {Hassidim}},\ and\ \bibinfo {author} {\bibfnamefont {S.}~\bibnamefont {Lloyd}},\ }\bibfield  {title} {\bibinfo {title} {Quantum algorithm for linear systems of equations},\ }\href@noop {} {\bibfield  {journal} {\bibinfo  {journal} {Physical Review Letters}\ }\textbf {\bibinfo {volume} {103}},\ \bibinfo {pages} {150502} (\bibinfo {year} {2009})}\BibitemShut {NoStop}%
\bibitem [{\citenamefont {McArdle}\ \emph {et~al.}(2020)\citenamefont {McArdle}, \citenamefont {Endo}, \citenamefont {Aspuru-Guzik}, \citenamefont {Benjamin},\ and\ \citenamefont {Yuan}}]{mcardle2020quantum}%
  \BibitemOpen
  \bibfield  {author} {\bibinfo {author} {\bibfnamefont {S.}~\bibnamefont {McArdle}}, \bibinfo {author} {\bibfnamefont {S.}~\bibnamefont {Endo}}, \bibinfo {author} {\bibfnamefont {A.}~\bibnamefont {Aspuru-Guzik}}, \bibinfo {author} {\bibfnamefont {S.~C.}\ \bibnamefont {Benjamin}},\ and\ \bibinfo {author} {\bibfnamefont {X.}~\bibnamefont {Yuan}},\ }\bibfield  {title} {\bibinfo {title} {Quantum computational chemistry},\ }\href@noop {} {\bibfield  {journal} {\bibinfo  {journal} {Reviews of Modern Physics}\ }\textbf {\bibinfo {volume} {92}},\ \bibinfo {pages} {015003} (\bibinfo {year} {2020})}\BibitemShut {NoStop}%
\bibitem [{\citenamefont {Feynman}(2018)}]{feynman2018simulating}%
  \BibitemOpen
  \bibfield  {author} {\bibinfo {author} {\bibfnamefont {R.~P.}\ \bibnamefont {Feynman}},\ }\bibfield  {title} {\bibinfo {title} {Simulating physics with computers},\ }in\ \href@noop {} {\emph {\bibinfo {booktitle} {Feynman and computation}}}\ (\bibinfo  {publisher} {cRc Press},\ \bibinfo {year} {2018})\ pp.\ \bibinfo {pages} {133--153}\BibitemShut {NoStop}%
\bibitem [{\citenamefont {Zwierz}\ \emph {et~al.}(2010)\citenamefont {Zwierz}, \citenamefont {P{\'e}rez-Delgado},\ and\ \citenamefont {Kok}}]{zwierz2010general}%
  \BibitemOpen
  \bibfield  {author} {\bibinfo {author} {\bibfnamefont {M.}~\bibnamefont {Zwierz}}, \bibinfo {author} {\bibfnamefont {C.~A.}\ \bibnamefont {P{\'e}rez-Delgado}},\ and\ \bibinfo {author} {\bibfnamefont {P.}~\bibnamefont {Kok}},\ }\bibfield  {title} {\bibinfo {title} {General optimality of the heisenberg limit for quantum metrology},\ }\href@noop {} {\bibfield  {journal} {\bibinfo  {journal} {Physical Review Letters}\ }\textbf {\bibinfo {volume} {105}},\ \bibinfo {pages} {180402} (\bibinfo {year} {2010})}\BibitemShut {NoStop}%
\bibitem [{\citenamefont {Busch}\ \emph {et~al.}(2007)\citenamefont {Busch}, \citenamefont {Heinonen},\ and\ \citenamefont {Lahti}}]{busch2007heisenberg}%
  \BibitemOpen
  \bibfield  {author} {\bibinfo {author} {\bibfnamefont {P.}~\bibnamefont {Busch}}, \bibinfo {author} {\bibfnamefont {T.}~\bibnamefont {Heinonen}},\ and\ \bibinfo {author} {\bibfnamefont {P.}~\bibnamefont {Lahti}},\ }\bibfield  {title} {\bibinfo {title} {Heisenberg's uncertainty principle},\ }\href@noop {} {\bibfield  {journal} {\bibinfo  {journal} {Physics Reports}\ }\textbf {\bibinfo {volume} {452}},\ \bibinfo {pages} {155} (\bibinfo {year} {2007})}\BibitemShut {NoStop}%
\bibitem [{\citenamefont {Kijowski}(1974)}]{kijowski1974time}%
  \BibitemOpen
  \bibfield  {author} {\bibinfo {author} {\bibfnamefont {J.}~\bibnamefont {Kijowski}},\ }\bibfield  {title} {\bibinfo {title} {On the time operator in quantum mechanics and the heisenberg uncertainty relation for energy and time},\ }\href@noop {} {\bibfield  {journal} {\bibinfo  {journal} {Reports on Mathematical Physics}\ }\textbf {\bibinfo {volume} {6}},\ \bibinfo {pages} {361} (\bibinfo {year} {1974})}\BibitemShut {NoStop}%
\bibitem [{\citenamefont {Atia}\ and\ \citenamefont {Aharonov}(2017)}]{atia2017fast}%
  \BibitemOpen
  \bibfield  {author} {\bibinfo {author} {\bibfnamefont {Y.}~\bibnamefont {Atia}}\ and\ \bibinfo {author} {\bibfnamefont {D.}~\bibnamefont {Aharonov}},\ }\bibfield  {title} {\bibinfo {title} {Fast-forwarding of {H}amiltonians and exponentially precise measurements},\ }\href@noop {} {\bibfield  {journal} {\bibinfo  {journal} {Nature Communications}\ }\textbf {\bibinfo {volume} {8}},\ \bibinfo {pages} {1572} (\bibinfo {year} {2017})}\BibitemShut {NoStop}%
\bibitem [{\citenamefont {Berry}\ \emph {et~al.}(2007)\citenamefont {Berry}, \citenamefont {Ahokas}, \citenamefont {Cleve},\ and\ \citenamefont {Sanders}}]{berry2007efficient}%
  \BibitemOpen
  \bibfield  {author} {\bibinfo {author} {\bibfnamefont {D.~W.}\ \bibnamefont {Berry}}, \bibinfo {author} {\bibfnamefont {G.}~\bibnamefont {Ahokas}}, \bibinfo {author} {\bibfnamefont {R.}~\bibnamefont {Cleve}},\ and\ \bibinfo {author} {\bibfnamefont {B.~C.}\ \bibnamefont {Sanders}},\ }\bibfield  {title} {\bibinfo {title} {Efficient quantum algorithms for simulating sparse {H}amiltonians},\ }\href@noop {} {\bibfield  {journal} {\bibinfo  {journal} {Communications in Mathematical Physics}\ }\textbf {\bibinfo {volume} {270}},\ \bibinfo {pages} {359} (\bibinfo {year} {2007})}\BibitemShut {NoStop}%
\bibitem [{\citenamefont {Lloyd}(1996)}]{lloyd1996universal}%
  \BibitemOpen
  \bibfield  {author} {\bibinfo {author} {\bibfnamefont {S.}~\bibnamefont {Lloyd}},\ }\bibfield  {title} {\bibinfo {title} {Universal quantum simulators},\ }\href@noop {} {\bibfield  {journal} {\bibinfo  {journal} {Science}\ }\textbf {\bibinfo {volume} {273}},\ \bibinfo {pages} {1073} (\bibinfo {year} {1996})}\BibitemShut {NoStop}%
\bibitem [{\citenamefont {Berry}\ \emph {et~al.}(2015)\citenamefont {Berry}, \citenamefont {Childs}, \citenamefont {Cleve}, \citenamefont {Kothari},\ and\ \citenamefont {Somma}}]{berry2015simulating}%
  \BibitemOpen
  \bibfield  {author} {\bibinfo {author} {\bibfnamefont {D.~W.}\ \bibnamefont {Berry}}, \bibinfo {author} {\bibfnamefont {A.~M.}\ \bibnamefont {Childs}}, \bibinfo {author} {\bibfnamefont {R.}~\bibnamefont {Cleve}}, \bibinfo {author} {\bibfnamefont {R.}~\bibnamefont {Kothari}},\ and\ \bibinfo {author} {\bibfnamefont {R.~D.}\ \bibnamefont {Somma}},\ }\bibfield  {title} {\bibinfo {title} {Simulating hamiltonian dynamics with a truncated taylor series},\ }\href@noop {} {\bibfield  {journal} {\bibinfo  {journal} {Physical review letters}\ }\textbf {\bibinfo {volume} {114}},\ \bibinfo {pages} {090502} (\bibinfo {year} {2015})}\BibitemShut {NoStop}%
\bibitem [{\citenamefont {Low}\ and\ \citenamefont {Chuang}(2017)}]{low2017optimal}%
  \BibitemOpen
  \bibfield  {author} {\bibinfo {author} {\bibfnamefont {G.~H.}\ \bibnamefont {Low}}\ and\ \bibinfo {author} {\bibfnamefont {I.~L.}\ \bibnamefont {Chuang}},\ }\bibfield  {title} {\bibinfo {title} {Optimal {H}amiltonian simulation by quantum signal processing},\ }\href@noop {} {\bibfield  {journal} {\bibinfo  {journal} {Physical Review Letters}\ }\textbf {\bibinfo {volume} {118}},\ \bibinfo {pages} {010501} (\bibinfo {year} {2017})}\BibitemShut {NoStop}%
\bibitem [{\citenamefont {Lindblad}(1976)}]{lindblad1976generators}%
  \BibitemOpen
  \bibfield  {author} {\bibinfo {author} {\bibfnamefont {G.}~\bibnamefont {Lindblad}},\ }\bibfield  {title} {\bibinfo {title} {On the generators of quantum dynamical semigroups},\ }\href@noop {} {\bibfield  {journal} {\bibinfo  {journal} {Communications in Mathematical Physics}\ }\textbf {\bibinfo {volume} {48}},\ \bibinfo {pages} {119} (\bibinfo {year} {1976})}\BibitemShut {NoStop}%
\bibitem [{\citenamefont {Chen}\ \emph {et~al.}(2023{\natexlab{a}})\citenamefont {Chen}, \citenamefont {Kastoryano}, \citenamefont {Brand{\~a}o},\ and\ \citenamefont {Gily{\'e}n}}]{chen2023quantum}%
  \BibitemOpen
  \bibfield  {author} {\bibinfo {author} {\bibfnamefont {C.-F.}\ \bibnamefont {Chen}}, \bibinfo {author} {\bibfnamefont {M.~J.}\ \bibnamefont {Kastoryano}}, \bibinfo {author} {\bibfnamefont {F.~G. S.~L.}\ \bibnamefont {Brand{\~a}o}},\ and\ \bibinfo {author} {\bibfnamefont {A.}~\bibnamefont {Gily{\'e}n}},\ }\bibfield  {title} {\bibinfo {title} {Quantum thermal state preparation},\ }\href@noop {} {\bibfield  {journal} {\bibinfo  {journal} {arXiv preprint arXiv:2303.18224}\ } (\bibinfo {year} {2023}{\natexlab{a}})}\BibitemShut {NoStop}%
\bibitem [{\citenamefont {Chen}\ \emph {et~al.}(2023{\natexlab{b}})\citenamefont {Chen}, \citenamefont {Kastoryano},\ and\ \citenamefont {Gily{\'e}n}}]{chen2023efficient}%
  \BibitemOpen
  \bibfield  {author} {\bibinfo {author} {\bibfnamefont {C.-F.}\ \bibnamefont {Chen}}, \bibinfo {author} {\bibfnamefont {M.~J.}\ \bibnamefont {Kastoryano}},\ and\ \bibinfo {author} {\bibfnamefont {A.}~\bibnamefont {Gily{\'e}n}},\ }\bibfield  {title} {\bibinfo {title} {An efficient and exact noncommutative quantum {G}ibbs sampler},\ }\href@noop {} {\bibfield  {journal} {\bibinfo  {journal} {arXiv preprint arXiv:2311.09207}\ } (\bibinfo {year} {2023}{\natexlab{b}})}\BibitemShut {NoStop}%
\bibitem [{\citenamefont {Rouz{\'e}}\ \emph {et~al.}(2024)\citenamefont {Rouz{\'e}}, \citenamefont {Fran{\c{c}}a},\ and\ \citenamefont {Alhambra}}]{rouze2024optimal}%
  \BibitemOpen
  \bibfield  {author} {\bibinfo {author} {\bibfnamefont {C.}~\bibnamefont {Rouz{\'e}}}, \bibinfo {author} {\bibfnamefont {D.~S.}\ \bibnamefont {Fran{\c{c}}a}},\ and\ \bibinfo {author} {\bibfnamefont {{\'A}.~M.}\ \bibnamefont {Alhambra}},\ }\bibfield  {title} {\bibinfo {title} {Optimal quantum algorithm for {G}ibbs state preparation},\ }\href@noop {} {\bibfield  {journal} {\bibinfo  {journal} {arXiv preprint arXiv:2411.04885}\ } (\bibinfo {year} {2024})}\BibitemShut {NoStop}%
\bibitem [{\citenamefont {Ding}\ \emph {et~al.}(2024)\citenamefont {Ding}, \citenamefont {Chen},\ and\ \citenamefont {Lin}}]{ding2024single}%
  \BibitemOpen
  \bibfield  {author} {\bibinfo {author} {\bibfnamefont {Z.}~\bibnamefont {Ding}}, \bibinfo {author} {\bibfnamefont {C.-F.}\ \bibnamefont {Chen}},\ and\ \bibinfo {author} {\bibfnamefont {L.}~\bibnamefont {Lin}},\ }\bibfield  {title} {\bibinfo {title} {Single-ancilla ground state preparation via lindbladians},\ }\href@noop {} {\bibfield  {journal} {\bibinfo  {journal} {Physical Review Research}\ }\textbf {\bibinfo {volume} {6}},\ \bibinfo {pages} {033147} (\bibinfo {year} {2024})}\BibitemShut {NoStop}%
\bibitem [{\citenamefont {Chen}\ \emph {et~al.}(2025)\citenamefont {Chen}, \citenamefont {Lu}, \citenamefont {Wang}, \citenamefont {Liu},\ and\ \citenamefont {Li}}]{chen2025quantum}%
  \BibitemOpen
  \bibfield  {author} {\bibinfo {author} {\bibfnamefont {Z.}~\bibnamefont {Chen}}, \bibinfo {author} {\bibfnamefont {Y.}~\bibnamefont {Lu}}, \bibinfo {author} {\bibfnamefont {H.}~\bibnamefont {Wang}}, \bibinfo {author} {\bibfnamefont {Y.}~\bibnamefont {Liu}},\ and\ \bibinfo {author} {\bibfnamefont {T.}~\bibnamefont {Li}},\ }\bibfield  {title} {\bibinfo {title} {Quantum langevin dynamics for optimization},\ }\href@noop {} {\bibfield  {journal} {\bibinfo  {journal} {Communications in Mathematical Physics}\ }\textbf {\bibinfo {volume} {406}},\ \bibinfo {pages} {52} (\bibinfo {year} {2025})}\BibitemShut {NoStop}%
\bibitem [{\citenamefont {He}\ \emph {et~al.}(2024)\citenamefont {He}, \citenamefont {Li}, \citenamefont {Li}, \citenamefont {Li}, \citenamefont {Wang},\ and\ \citenamefont {Wang}}]{he2024efficient}%
  \BibitemOpen
  \bibfield  {author} {\bibinfo {author} {\bibfnamefont {W.}~\bibnamefont {He}}, \bibinfo {author} {\bibfnamefont {T.}~\bibnamefont {Li}}, \bibinfo {author} {\bibfnamefont {X.}~\bibnamefont {Li}}, \bibinfo {author} {\bibfnamefont {Z.}~\bibnamefont {Li}}, \bibinfo {author} {\bibfnamefont {C.}~\bibnamefont {Wang}},\ and\ \bibinfo {author} {\bibfnamefont {K.}~\bibnamefont {Wang}},\ }\bibfield  {title} {\bibinfo {title} {Efficient optimal control of open quantum systems},\ }in\ \href@noop {} {\emph {\bibinfo {booktitle} {19th Conference on the Theory of Quantum Computation, Communication and Cryptography (TQC 2024)}}}\ (\bibinfo {organization} {Schloss Dagstuhl--Leibniz-Zentrum f{\"u}r Informatik},\ \bibinfo {year} {2024})\ pp.\ \bibinfo {pages} {3--1}\BibitemShut {NoStop}%
\bibitem [{\citenamefont {Shang}\ \emph {et~al.}(2024)\citenamefont {Shang}, \citenamefont {Guo}, \citenamefont {An},\ and\ \citenamefont {Zhao}}]{shang2024design}%
  \BibitemOpen
  \bibfield  {author} {\bibinfo {author} {\bibfnamefont {Z.-X.}\ \bibnamefont {Shang}}, \bibinfo {author} {\bibfnamefont {N.}~\bibnamefont {Guo}}, \bibinfo {author} {\bibfnamefont {D.}~\bibnamefont {An}},\ and\ \bibinfo {author} {\bibfnamefont {Q.}~\bibnamefont {Zhao}},\ }\bibfield  {title} {\bibinfo {title} {Design nearly optimal quantum algorithm for linear differential equations via lindbladians},\ }\href@noop {} {\bibfield  {journal} {\bibinfo  {journal} {arXiv preprint arXiv:2410.19628}\ } (\bibinfo {year} {2024})}\BibitemShut {NoStop}%
\bibitem [{\citenamefont {Kliesch}\ \emph {et~al.}(2011)\citenamefont {Kliesch}, \citenamefont {Barthel}, \citenamefont {Gogolin}, \citenamefont {Kastoryano},\ and\ \citenamefont {Eisert}}]{kliesch2011dissipative}%
  \BibitemOpen
  \bibfield  {author} {\bibinfo {author} {\bibfnamefont {M.}~\bibnamefont {Kliesch}}, \bibinfo {author} {\bibfnamefont {T.}~\bibnamefont {Barthel}}, \bibinfo {author} {\bibfnamefont {C.}~\bibnamefont {Gogolin}}, \bibinfo {author} {\bibfnamefont {M.}~\bibnamefont {Kastoryano}},\ and\ \bibinfo {author} {\bibfnamefont {J.}~\bibnamefont {Eisert}},\ }\bibfield  {title} {\bibinfo {title} {Dissipative quantum {Church-Turing} theorem},\ }\href@noop {} {\bibfield  {journal} {\bibinfo  {journal} {Physical Review Letters}\ }\textbf {\bibinfo {volume} {107}},\ \bibinfo {pages} {120501} (\bibinfo {year} {2011})}\BibitemShut {NoStop}%
\bibitem [{\citenamefont {Cleve}\ and\ \citenamefont {Wang}(2017)}]{cleve2017efficient}%
  \BibitemOpen
  \bibfield  {author} {\bibinfo {author} {\bibfnamefont {R.}~\bibnamefont {Cleve}}\ and\ \bibinfo {author} {\bibfnamefont {C.}~\bibnamefont {Wang}},\ }\bibfield  {title} {\bibinfo {title} {Efficient quantum algorithms for simulating {L}indblad evolution},\ }in\ \href@noop {} {\emph {\bibinfo {booktitle} {44th International Colloquium on Automata, Languages, and Programming (ICALP 2017)}}}\ (\bibinfo {organization} {Schloss Dagstuhl--Leibniz-Zentrum f{\"u}r Informatik},\ \bibinfo {year} {2017})\ pp.\ \bibinfo {pages} {17--1}\BibitemShut {NoStop}%
\bibitem [{\citenamefont {Li}\ and\ \citenamefont {Wang}(2022)}]{li2022simulating}%
  \BibitemOpen
  \bibfield  {author} {\bibinfo {author} {\bibfnamefont {X.}~\bibnamefont {Li}}\ and\ \bibinfo {author} {\bibfnamefont {C.}~\bibnamefont {Wang}},\ }\bibfield  {title} {\bibinfo {title} {Simulating markovian open quantum systems using higher-order series expansion},\ }\href@noop {} {\bibfield  {journal} {\bibinfo  {journal} {arXiv preprint arXiv:2212.02051}\ } (\bibinfo {year} {2022})}\BibitemShut {NoStop}%
\bibitem [{\citenamefont {Ding}\ \emph {et~al.}(2025)\citenamefont {Ding}, \citenamefont {Junge}, \citenamefont {Schleich},\ and\ \citenamefont {Wu}}]{ding2025lower}%
  \BibitemOpen
  \bibfield  {author} {\bibinfo {author} {\bibfnamefont {Z.}~\bibnamefont {Ding}}, \bibinfo {author} {\bibfnamefont {M.}~\bibnamefont {Junge}}, \bibinfo {author} {\bibfnamefont {P.}~\bibnamefont {Schleich}},\ and\ \bibinfo {author} {\bibfnamefont {P.}~\bibnamefont {Wu}},\ }\bibfield  {title} {\bibinfo {title} {Lower bound for simulation cost of open quantum systems: Lipschitz continuity approach},\ }\href@noop {} {\bibfield  {journal} {\bibinfo  {journal} {Communications in Mathematical Physics}\ }\textbf {\bibinfo {volume} {406}},\ \bibinfo {pages} {60} (\bibinfo {year} {2025})}\BibitemShut {NoStop}%
\bibitem [{\citenamefont {Gu}\ \emph {et~al.}(2021)\citenamefont {Gu}, \citenamefont {Somma},\ and\ \citenamefont {{\c{S}}ahino{\u{g}}lu}}]{gu2021fast}%
  \BibitemOpen
  \bibfield  {author} {\bibinfo {author} {\bibfnamefont {S.}~\bibnamefont {Gu}}, \bibinfo {author} {\bibfnamefont {R.~D.}\ \bibnamefont {Somma}},\ and\ \bibinfo {author} {\bibfnamefont {B.}~\bibnamefont {{\c{S}}ahino{\u{g}}lu}},\ }\bibfield  {title} {\bibinfo {title} {Fast-forwarding quantum evolution},\ }\href@noop {} {\bibfield  {journal} {\bibinfo  {journal} {Quantum}\ }\textbf {\bibinfo {volume} {5}},\ \bibinfo {pages} {577} (\bibinfo {year} {2021})}\BibitemShut {NoStop}%
\bibitem [{\citenamefont {Xia}\ \emph {et~al.}(2019)\citenamefont {Xia}, \citenamefont {Liu}, \citenamefont {Nie}, \citenamefont {Fu}, \citenamefont {Wan},\ and\ \citenamefont {Kong}}]{xia2019random}%
  \BibitemOpen
  \bibfield  {author} {\bibinfo {author} {\bibfnamefont {F.}~\bibnamefont {Xia}}, \bibinfo {author} {\bibfnamefont {J.}~\bibnamefont {Liu}}, \bibinfo {author} {\bibfnamefont {H.}~\bibnamefont {Nie}}, \bibinfo {author} {\bibfnamefont {Y.}~\bibnamefont {Fu}}, \bibinfo {author} {\bibfnamefont {L.}~\bibnamefont {Wan}},\ and\ \bibinfo {author} {\bibfnamefont {X.}~\bibnamefont {Kong}},\ }\bibfield  {title} {\bibinfo {title} {Random walks: A review of algorithms and applications},\ }\href@noop {} {\bibfield  {journal} {\bibinfo  {journal} {IEEE Transactions on Emerging Topics in Computational Intelligence}\ }\textbf {\bibinfo {volume} {4}},\ \bibinfo {pages} {95} (\bibinfo {year} {2019})}\BibitemShut {NoStop}%
\bibitem [{\citenamefont {Grover}(1996)}]{grover1996fast}%
  \BibitemOpen
  \bibfield  {author} {\bibinfo {author} {\bibfnamefont {L.~K.}\ \bibnamefont {Grover}},\ }\bibfield  {title} {\bibinfo {title} {A fast quantum mechanical algorithm for database search},\ }in\ \href@noop {} {\emph {\bibinfo {booktitle} {Proceedings of the 28th Annual ACM Symposium on Theory of Computing}}}\ (\bibinfo {year} {1996})\ pp.\ \bibinfo {pages} {212--219}\BibitemShut {NoStop}%
\bibitem [{\citenamefont {Brassard}\ \emph {et~al.}(2000)\citenamefont {Brassard}, \citenamefont {Hoyer}, \citenamefont {Mosca},\ and\ \citenamefont {Tapp}}]{brassard2000quantum}%
  \BibitemOpen
  \bibfield  {author} {\bibinfo {author} {\bibfnamefont {G.}~\bibnamefont {Brassard}}, \bibinfo {author} {\bibfnamefont {P.}~\bibnamefont {Hoyer}}, \bibinfo {author} {\bibfnamefont {M.}~\bibnamefont {Mosca}},\ and\ \bibinfo {author} {\bibfnamefont {A.}~\bibnamefont {Tapp}},\ }\bibfield  {title} {\bibinfo {title} {Quantum amplitude amplification and estimation},\ }\href@noop {} {\bibfield  {journal} {\bibinfo  {journal} {arXiv preprint quant-ph/0005055}\ } (\bibinfo {year} {2000})}\BibitemShut {NoStop}%
\bibitem [{\citenamefont {Gily{\'e}n}\ \emph {et~al.}(2019)\citenamefont {Gily{\'e}n}, \citenamefont {Su}, \citenamefont {Low},\ and\ \citenamefont {Wiebe}}]{gilyen2019quantum}%
  \BibitemOpen
  \bibfield  {author} {\bibinfo {author} {\bibfnamefont {A.}~\bibnamefont {Gily{\'e}n}}, \bibinfo {author} {\bibfnamefont {Y.}~\bibnamefont {Su}}, \bibinfo {author} {\bibfnamefont {G.~H.}\ \bibnamefont {Low}},\ and\ \bibinfo {author} {\bibfnamefont {N.}~\bibnamefont {Wiebe}},\ }\bibfield  {title} {\bibinfo {title} {Quantum singular value transformation and beyond: exponential improvements for quantum matrix arithmetics},\ }in\ \href@noop {} {\emph {\bibinfo {booktitle} {Proceedings of the 51st Annual ACM SIGACT Symposium on Theory of Computing}}}\ (\bibinfo {year} {2019})\ pp.\ \bibinfo {pages} {193--204}\BibitemShut {NoStop}%
\bibitem [{\citenamefont {Shang}\ \emph {et~al.}(2025)\citenamefont {Shang}, \citenamefont {An},\ and\ \citenamefont {Shao}}]{shang2025exponential}%
  \BibitemOpen
  \bibfield  {author} {\bibinfo {author} {\bibfnamefont {Z.-X.}\ \bibnamefont {Shang}}, \bibinfo {author} {\bibfnamefont {D.}~\bibnamefont {An}},\ and\ \bibinfo {author} {\bibfnamefont {C.}~\bibnamefont {Shao}},\ }\bibfield  {title} {\bibinfo {title} {Exponential lindbladian fast forwarding and exponential amplification of certain gibbs state properties},\ }\href@noop {} {\bibfield  {journal} {\bibinfo  {journal} {arXiv preprint arXiv:2509.09517}\ } (\bibinfo {year} {2025})}\BibitemShut {NoStop}%
\bibitem [{\citenamefont {Hozo}\ \emph {et~al.}(2005)\citenamefont {Hozo}, \citenamefont {Djulbegovic},\ and\ \citenamefont {Hozo}}]{hozo2005estimating}%
  \BibitemOpen
  \bibfield  {author} {\bibinfo {author} {\bibfnamefont {S.~P.}\ \bibnamefont {Hozo}}, \bibinfo {author} {\bibfnamefont {B.}~\bibnamefont {Djulbegovic}},\ and\ \bibinfo {author} {\bibfnamefont {I.}~\bibnamefont {Hozo}},\ }\bibfield  {title} {\bibinfo {title} {Estimating the mean and variance from the median, range, and the size of a sample},\ }\href@noop {} {\bibfield  {journal} {\bibinfo  {journal} {BMC medical research methodology}\ }\textbf {\bibinfo {volume} {5}},\ \bibinfo {pages} {13} (\bibinfo {year} {2005})}\BibitemShut {NoStop}%
\bibitem [{\citenamefont {Childs}(2010)}]{childs2010relationship}%
  \BibitemOpen
  \bibfield  {author} {\bibinfo {author} {\bibfnamefont {A.~M.}\ \bibnamefont {Childs}},\ }\bibfield  {title} {\bibinfo {title} {On the relationship between continuous-and discrete-time quantum walk},\ }\href@noop {} {\bibfield  {journal} {\bibinfo  {journal} {Communications in Mathematical Physics}\ }\textbf {\bibinfo {volume} {294}},\ \bibinfo {pages} {581} (\bibinfo {year} {2010})}\BibitemShut {NoStop}%
\bibitem [{\citenamefont {Lin}\ and\ \citenamefont {Tong}(2020)}]{lin2020near}%
  \BibitemOpen
  \bibfield  {author} {\bibinfo {author} {\bibfnamefont {L.}~\bibnamefont {Lin}}\ and\ \bibinfo {author} {\bibfnamefont {Y.}~\bibnamefont {Tong}},\ }\bibfield  {title} {\bibinfo {title} {Near-optimal ground state preparation},\ }\href@noop {} {\bibfield  {journal} {\bibinfo  {journal} {Quantum}\ }\textbf {\bibinfo {volume} {4}},\ \bibinfo {pages} {372} (\bibinfo {year} {2020})}\BibitemShut {NoStop}%
\bibitem [{\citenamefont {Lin}\ and\ \citenamefont {Tong}(2022)}]{lin2022heisenberg}%
  \BibitemOpen
  \bibfield  {author} {\bibinfo {author} {\bibfnamefont {L.}~\bibnamefont {Lin}}\ and\ \bibinfo {author} {\bibfnamefont {Y.}~\bibnamefont {Tong}},\ }\bibfield  {title} {\bibinfo {title} {Heisenberg-limited ground-state energy estimation for early fault-tolerant quantum computers},\ }\href@noop {} {\bibfield  {journal} {\bibinfo  {journal} {PRX quantum}\ }\textbf {\bibinfo {volume} {3}},\ \bibinfo {pages} {010318} (\bibinfo {year} {2022})}\BibitemShut {NoStop}%
\bibitem [{\citenamefont {Childs}\ and\ \citenamefont {Li}(2016)}]{childs2016efficient}%
  \BibitemOpen
  \bibfield  {author} {\bibinfo {author} {\bibfnamefont {A.~M.}\ \bibnamefont {Childs}}\ and\ \bibinfo {author} {\bibfnamefont {T.}~\bibnamefont {Li}},\ }\bibfield  {title} {\bibinfo {title} {Efficient simulation of sparse markovian quantum dynamics},\ }\href@noop {} {\bibfield  {journal} {\bibinfo  {journal} {arXiv preprint arXiv:1611.05543}\ } (\bibinfo {year} {2016})}\BibitemShut {NoStop}%
\bibitem [{\citenamefont {Kitaev}\ and\ \citenamefont {Webb}(2008)}]{kitaev2008wavefunction}%
  \BibitemOpen
  \bibfield  {author} {\bibinfo {author} {\bibfnamefont {A.}~\bibnamefont {Kitaev}}\ and\ \bibinfo {author} {\bibfnamefont {W.~A.}\ \bibnamefont {Webb}},\ }\bibfield  {title} {\bibinfo {title} {Wavefunction preparation and resampling using a quantum computer},\ }\href@noop {} {\bibfield  {journal} {\bibinfo  {journal} {arXiv preprint arXiv:0801.0342}\ } (\bibinfo {year} {2008})}\BibitemShut {NoStop}%
\bibitem [{\citenamefont {Aaronson}\ and\ \citenamefont {Rall}(2020)}]{aaronson2020quantum}%
  \BibitemOpen
  \bibfield  {author} {\bibinfo {author} {\bibfnamefont {S.}~\bibnamefont {Aaronson}}\ and\ \bibinfo {author} {\bibfnamefont {P.}~\bibnamefont {Rall}},\ }\bibfield  {title} {\bibinfo {title} {Quantum approximate counting, simplified},\ }in\ \href@noop {} {\emph {\bibinfo {booktitle} {Symposium on simplicity in algorithms}}}\ (\bibinfo {organization} {SIAM},\ \bibinfo {year} {2020})\ pp.\ \bibinfo {pages} {24--32}\BibitemShut {NoStop}%
\bibitem [{\citenamefont {Venegas-Andraca}(2012)}]{venegas2012quantum}%
  \BibitemOpen
  \bibfield  {author} {\bibinfo {author} {\bibfnamefont {S.~E.}\ \bibnamefont {Venegas-Andraca}},\ }\bibfield  {title} {\bibinfo {title} {Quantum walks: a comprehensive review},\ }\href@noop {} {\bibfield  {journal} {\bibinfo  {journal} {Quantum Information Processing}\ }\textbf {\bibinfo {volume} {11}},\ \bibinfo {pages} {1015} (\bibinfo {year} {2012})}\BibitemShut {NoStop}%
\bibitem [{\citenamefont {Chowdhury}\ and\ \citenamefont {Somma}(2016)}]{chowdhury2016quantum}%
  \BibitemOpen
  \bibfield  {author} {\bibinfo {author} {\bibfnamefont {A.~N.}\ \bibnamefont {Chowdhury}}\ and\ \bibinfo {author} {\bibfnamefont {R.~D.}\ \bibnamefont {Somma}},\ }\bibfield  {title} {\bibinfo {title} {Quantum algorithms for {G}ibbs sampling and hitting-time estimation},\ }\href@noop {} {\bibfield  {journal} {\bibinfo  {journal} {arXiv preprint arXiv:1603.02940}\ } (\bibinfo {year} {2016})}\BibitemShut {NoStop}%
\bibitem [{\citenamefont {Sachdeva}\ and\ \citenamefont {Vishnoi}(2014)}]{sachdeva2014faster}%
  \BibitemOpen
  \bibfield  {author} {\bibinfo {author} {\bibfnamefont {S.}~\bibnamefont {Sachdeva}}\ and\ \bibinfo {author} {\bibfnamefont {N.~K.}\ \bibnamefont {Vishnoi}},\ }\bibfield  {title} {\bibinfo {title} {Faster algorithms via approximation theory},\ }\href@noop {} {\bibfield  {journal} {\bibinfo  {journal} {Foundations and Trends{\textregistered} in Theoretical Computer Science}\ }\textbf {\bibinfo {volume} {9}},\ \bibinfo {pages} {125} (\bibinfo {year} {2014})}\BibitemShut {NoStop}%
\bibitem [{\citenamefont {Lidar}\ and\ \citenamefont {Brun}(2013)}]{lidar2013quantum}%
  \BibitemOpen
  \bibfield  {author} {\bibinfo {author} {\bibfnamefont {D.~A.}\ \bibnamefont {Lidar}}\ and\ \bibinfo {author} {\bibfnamefont {T.~A.}\ \bibnamefont {Brun}},\ }\href@noop {} {\emph {\bibinfo {title} {Quantum error correction}}}\ (\bibinfo  {publisher} {Cambridge university press},\ \bibinfo {year} {2013})\BibitemShut {NoStop}%
\bibitem [{\citenamefont {Fagnola}\ \emph {et~al.}(2019)\citenamefont {Fagnola}, \citenamefont {Gough}, \citenamefont {Nurdin},\ and\ \citenamefont {Viola}}]{fagnola2019mathematical}%
  \BibitemOpen
  \bibfield  {author} {\bibinfo {author} {\bibfnamefont {F.}~\bibnamefont {Fagnola}}, \bibinfo {author} {\bibfnamefont {J.~E.}\ \bibnamefont {Gough}}, \bibinfo {author} {\bibfnamefont {H.~I.}\ \bibnamefont {Nurdin}},\ and\ \bibinfo {author} {\bibfnamefont {L.}~\bibnamefont {Viola}},\ }\bibfield  {title} {\bibinfo {title} {Mathematical models of markovian dephasing},\ }\href@noop {} {\bibfield  {journal} {\bibinfo  {journal} {Journal of Physics A: Mathematical and Theoretical}\ }\textbf {\bibinfo {volume} {52}},\ \bibinfo {pages} {385301} (\bibinfo {year} {2019})}\BibitemShut {NoStop}%
\bibitem [{\citenamefont {Krawtchouk}(1929)}]{krawtchouk1929generalisation}%
  \BibitemOpen
  \bibfield  {author} {\bibinfo {author} {\bibfnamefont {M.}~\bibnamefont {Krawtchouk}},\ }\bibfield  {title} {\bibinfo {title} {Sur une g{\'e}n{\'e}ralisation des polyn{\^o}mes d’hermite},\ }\href@noop {} {\bibfield  {journal} {\bibinfo  {journal} {Comptes Rendus}\ }\textbf {\bibinfo {volume} {189}},\ \bibinfo {pages} {5} (\bibinfo {year} {1929})}\BibitemShut {NoStop}%
\bibitem [{\citenamefont {Moroder}\ \emph {et~al.}(2024)\citenamefont {Moroder}, \citenamefont {Culhane}, \citenamefont {Zawadzki},\ and\ \citenamefont {Goold}}]{moroder2024thermodynamics}%
  \BibitemOpen
  \bibfield  {author} {\bibinfo {author} {\bibfnamefont {M.}~\bibnamefont {Moroder}}, \bibinfo {author} {\bibfnamefont {O.}~\bibnamefont {Culhane}}, \bibinfo {author} {\bibfnamefont {K.}~\bibnamefont {Zawadzki}},\ and\ \bibinfo {author} {\bibfnamefont {J.}~\bibnamefont {Goold}},\ }\bibfield  {title} {\bibinfo {title} {Thermodynamics of the quantum mpemba effect},\ }\href@noop {} {\bibfield  {journal} {\bibinfo  {journal} {Physical Review Letters}\ }\textbf {\bibinfo {volume} {133}},\ \bibinfo {pages} {140404} (\bibinfo {year} {2024})}\BibitemShut {NoStop}%
\bibitem [{\citenamefont {Vershynin}(2018)}]{Vershynin_2018}%
  \BibitemOpen
  \bibfield  {author} {\bibinfo {author} {\bibfnamefont {R.}~\bibnamefont {Vershynin}},\ }\href@noop {} {\emph {\bibinfo {title} {High-Dimensional Probability: An Introduction with Applications in Data Science}}},\ Cambridge Series in Statistical and Probabilistic Mathematics\ (\bibinfo  {publisher} {Cambridge University Press},\ \bibinfo {year} {2018})\BibitemShut {NoStop}%
\bibitem [{\citenamefont {Cuccaro}\ \emph {et~al.}(2004)\citenamefont {Cuccaro}, \citenamefont {Draper}, \citenamefont {Kutin},\ and\ \citenamefont {Moulton}}]{cuccaro2004new}%
  \BibitemOpen
  \bibfield  {author} {\bibinfo {author} {\bibfnamefont {S.~A.}\ \bibnamefont {Cuccaro}}, \bibinfo {author} {\bibfnamefont {T.~G.}\ \bibnamefont {Draper}}, \bibinfo {author} {\bibfnamefont {S.~A.}\ \bibnamefont {Kutin}},\ and\ \bibinfo {author} {\bibfnamefont {D.~P.}\ \bibnamefont {Moulton}},\ }\bibfield  {title} {\bibinfo {title} {A new quantum ripple-carry addition circuit},\ }\href@noop {} {\bibfield  {journal} {\bibinfo  {journal} {arXiv preprint quant-ph/0410184}\ } (\bibinfo {year} {2004})}\BibitemShut {NoStop}%
\bibitem [{\citenamefont {Draper}\ \emph {et~al.}(2004)\citenamefont {Draper}, \citenamefont {Kutin}, \citenamefont {Rains},\ and\ \citenamefont {Svore}}]{draper2004logarithmic}%
  \BibitemOpen
  \bibfield  {author} {\bibinfo {author} {\bibfnamefont {T.~G.}\ \bibnamefont {Draper}}, \bibinfo {author} {\bibfnamefont {S.~A.}\ \bibnamefont {Kutin}}, \bibinfo {author} {\bibfnamefont {E.~M.}\ \bibnamefont {Rains}},\ and\ \bibinfo {author} {\bibfnamefont {K.~M.}\ \bibnamefont {Svore}},\ }\bibfield  {title} {\bibinfo {title} {A logarithmic-depth quantum carry-lookahead adder},\ }\href@noop {} {\bibfield  {journal} {\bibinfo  {journal} {arXiv preprint quant-ph/0406142}\ } (\bibinfo {year} {2004})}\BibitemShut {NoStop}%
\bibitem [{\citenamefont {Bernard}\ and\ \citenamefont {Vinet}(2024)}]{bernard2024dynamical}%
  \BibitemOpen
  \bibfield  {author} {\bibinfo {author} {\bibfnamefont {P.-A.}\ \bibnamefont {Bernard}}\ and\ \bibinfo {author} {\bibfnamefont {L.}~\bibnamefont {Vinet}},\ }\bibfield  {title} {\bibinfo {title} {A dynamical algebra of protocol-induced transformations on dicke states},\ }\href@noop {} {\bibfield  {journal} {\bibinfo  {journal} {arXiv preprint arXiv:2412.17917}\ } (\bibinfo {year} {2024})}\BibitemShut {NoStop}%
\bibitem [{\citenamefont {Bravyi}\ \emph {et~al.}(2021)\citenamefont {Bravyi}, \citenamefont {Chowdhury}, \citenamefont {Gosset},\ and\ \citenamefont {Wocjan}}]{bravyi2021complexity}%
  \BibitemOpen
  \bibfield  {author} {\bibinfo {author} {\bibfnamefont {S.}~\bibnamefont {Bravyi}}, \bibinfo {author} {\bibfnamefont {A.}~\bibnamefont {Chowdhury}}, \bibinfo {author} {\bibfnamefont {D.}~\bibnamefont {Gosset}},\ and\ \bibinfo {author} {\bibfnamefont {P.}~\bibnamefont {Wocjan}},\ }\bibfield  {title} {\bibinfo {title} {On the complexity of quantum partition functions},\ }\href@noop {} {\bibfield  {journal} {\bibinfo  {journal} {arXiv preprint arXiv:2110.15466}\ } (\bibinfo {year} {2021})}\BibitemShut {NoStop}%
\bibitem [{\citenamefont {Bauer}\ \emph {et~al.}(2021)\citenamefont {Bauer}, \citenamefont {Deliyannis}, \citenamefont {Freytsis},\ and\ \citenamefont {Nachman}}]{bauer2021practical}%
  \BibitemOpen
  \bibfield  {author} {\bibinfo {author} {\bibfnamefont {C.~W.}\ \bibnamefont {Bauer}}, \bibinfo {author} {\bibfnamefont {P.}~\bibnamefont {Deliyannis}}, \bibinfo {author} {\bibfnamefont {M.}~\bibnamefont {Freytsis}},\ and\ \bibinfo {author} {\bibfnamefont {B.}~\bibnamefont {Nachman}},\ }\bibfield  {title} {\bibinfo {title} {Practical considerations for the preparation of multivariate gaussian states on quantum computers},\ }\href@noop {} {\bibfield  {journal} {\bibinfo  {journal} {arXiv preprint arXiv:2109.10918}\ } (\bibinfo {year} {2021})}\BibitemShut {NoStop}%
\bibitem [{\citenamefont {Petrov}(2012)}]{petrov2012sums}%
  \BibitemOpen
  \bibfield  {author} {\bibinfo {author} {\bibfnamefont {V.~V.}\ \bibnamefont {Petrov}},\ }\href@noop {} {\emph {\bibinfo {title} {Sums of independent random variables}}},\ Vol.~\bibinfo {volume} {82}\ (\bibinfo  {publisher} {Springer Science \& Business Media},\ \bibinfo {year} {2012})\BibitemShut {NoStop}%
\bibitem [{\citenamefont {Bagherimehrab}\ \emph {et~al.}(2022)\citenamefont {Bagherimehrab}, \citenamefont {Sanders}, \citenamefont {Berry}, \citenamefont {Brennen},\ and\ \citenamefont {Sanders}}]{bagherimehrab2022nearly}%
  \BibitemOpen
  \bibfield  {author} {\bibinfo {author} {\bibfnamefont {M.}~\bibnamefont {Bagherimehrab}}, \bibinfo {author} {\bibfnamefont {Y.~R.}\ \bibnamefont {Sanders}}, \bibinfo {author} {\bibfnamefont {D.~W.}\ \bibnamefont {Berry}}, \bibinfo {author} {\bibfnamefont {G.~K.}\ \bibnamefont {Brennen}},\ and\ \bibinfo {author} {\bibfnamefont {B.~C.}\ \bibnamefont {Sanders}},\ }\bibfield  {title} {\bibinfo {title} {Nearly optimal quantum algorithm for generating the ground state of a free quantum field theory},\ }\href@noop {} {\bibfield  {journal} {\bibinfo  {journal} {PRX Quantum}\ }\textbf {\bibinfo {volume} {3}},\ \bibinfo {pages} {020364} (\bibinfo {year} {2022})}\BibitemShut {NoStop}%
\end{thebibliography}%
\clearpage

\begin{appendix}
\renewcommand\thefigure{\thesection.\arabic{figure}}
\onecolumngrid
\renewcommand{\addcontentsline}{\oldacl}
\renewcommand{\tocname}{Appendix Contents}
\tableofcontents
\section{Setup and preliminaries\label{sma}}
In this work, the $n$-qubit Hamiltonian we consider has the form
\begin{eqnarray}\label{hami}
H=\sum_\alpha h_\alpha \Pi_\alpha,
\end{eqnarray}
where $\{h_\alpha\}\in[0,1]$ are real eigenvalues satisfying $h_\alpha\neq h_\beta,~\forall \alpha\neq \beta$, and $\{\Pi_\alpha\}$ are the corresponding eigenspace projectors.

We also introduce two concentration results here.


\begin{lemma}\label{con1}
Given a binomial distribution $\bm{\Pr}[X=m]=\binom{N}{m}(1-p)^{N-m}p^m$ and a positive $c$, we have the following inequality
\begin{eqnarray}
\bm{\Pr}[|X-Np|\geq cN]=\sum_{|m-Np|\geq cp}\binom{N}{m}(1-p)^{N-m}p^m\leq 2\exp\left(-\frac{Nc^2}{0.5p(1-p)+2c/3}\right).
\end{eqnarray}
\end{lemma}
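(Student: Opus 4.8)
The plan is to recognize the claimed inequality as a two-sided Bernstein-type concentration bound and to prove it through the exponential-moment (Chernoff) method applied to a sum of i.i.d.\ Bernoulli variables. First I would write $X=\sum_{i=1}^{N}Y_i$, where the $Y_i$ are independent with $\Pr[Y_i=1]=p$ and $\Pr[Y_i=0]=1-p$, so that $\mathbb{E}[X]=Np$ and $\Var(Y_i)=p(1-p)$. I would then treat the two tails $\Pr[X-Np\geq cN]$ and $\Pr[Np-X\geq cN]$ separately and combine them by a union bound; this is exactly what produces the prefactor $2$ in the statement.

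For a single tail, say the upper one, I would apply Markov's inequality to $e^{\lambda(X-Np)}$ with a free parameter $\lambda>0$, giving
\begin{equation}
\Pr[X-Np\geq cN]\leq e^{-\lambda cN}\prod_{i=1}^{N}\mathbb{E}\!\left[e^{\lambda(Y_i-p)}\right].
\end{equation}
The crucial technical step is controlling the per-variable moment generating function of the centered Bernoulli $Y_i-p$, which is bounded in absolute value by $M=\max(p,1-p)\leq 1$. Using the standard MGF estimate for a centered variable $Z$ with $|Z|\leq M$ and $\Var(Z)=\sigma^{2}$, namely $\mathbb{E}[e^{\lambda Z}]\leq\exp\!\big(\tfrac{\lambda^{2}\sigma^{2}/2}{1-\lambda M/3}\big)$ for $0<\lambda<3/M$ (itself a consequence of $e^{u}-1-u\leq\tfrac{u^{2}/2}{1-u/3}$ at $u=\lambda M$), I would multiply over $i$ to get $\prod_i\mathbb{E}[e^{\lambda(Y_i-p)}]\leq\exp\!\big(\tfrac{\lambda^{2}Np(1-p)/2}{1-\lambda/3}\big)$, so that the total variance $Np(1-p)$ and the boundedness $M\leq1$ are the only data entering.

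Finally I would optimize $\exp\!\big(-\lambda cN+\tfrac{\lambda^{2}Np(1-p)/2}{1-\lambda/3}\big)$ over $\lambda\in(0,3)$; the standard Bernstein optimizer collapses the exponent to $-\tfrac{(cN)^{2}/2}{Np(1-p)+cN/3}$, and substituting $t=cN$ and dividing numerator and denominator by $N$ yields a bound of exactly the advertised form $Nc^{2}/(a\,p(1-p)+2c/3)$. The main obstacle is purely bookkeeping on the constants: the $2c/3$ term comes from the $\lambda M/3$ correction in the MGF bound with $M=1$, while the coefficient $a$ on $p(1-p)$ comes from carrying the factor $1/2$ and the total variance through the optimization, and I would need to respect the constraint $\lambda<3/M$ at the optimum. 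The one place I would be careful is reconciling that coefficient: the honest route above produces $a=2$, so I would re-examine the numerical constant against the stated $0.5$, as this is where an off-by-a-constant slip (or a sharper, application-specific variant of Bernstein) would enter; in any case the weaker correct version suffices for the concentration used later in the paper.
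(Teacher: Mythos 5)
Your route is essentially the paper's route: the paper also writes $X$ as a sum of $N$ i.i.d.\ Bernoulli($p$) coins and invokes Bernstein's inequality (its Eq.~\ref{bernst}) in exactly the form your Chernoff/MGF derivation produces; you merely prove Bernstein rather than cite it. The substantive content of your proposal is the closing caveat about the constant, and you are right to press it: the honest constant is $a=2$, and the paper's $0.5$ comes from a genuine error in its own proof — it substitutes $\sigma^2 = p(1-p)/4$ for the variance of a Bernoulli($p$), whereas the variance is $p(1-p)$ (the factor $1/4$ is presumably a confusion with the bound $p(1-p)\le 1/4$). With the correct variance, the paper's Eq.~\ref{bernst} yields precisely your bound $2\exp\left(-\frac{Nc^2}{2p(1-p)+2c/3}\right)$. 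Moreover, Lemma~\ref{con1} as stated is not just unproven but false: take $p=1/2$ and $c$ small; the claimed bound decays at exponential rate $c^2/(1/8+2c/3)\approx 8c^2$, while the true tail of a Binomial($N,1/2$) at deviation $cN$ decays only at the large-deviation rate $\mathrm{KL}\left(\tfrac12+c\,\middle\|\,\tfrac12\right)\approx 2c^2$, so for fixed small $c$ and large $N$ the actual probability exceeds the claimed bound.

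Your final remark that the weaker (true) version suffices downstream is also correct. In the one place where the constant enters, the derivation leading to Eq.~\ref{c1}, the denominator $0.5h_\alpha+\tfrac{4\epsilon}{3}$ becomes $2h_\alpha+\tfrac{4\epsilon}{3}$, so the constant $\tfrac{24}{11}$ becomes $\tfrac{6}{5}$; the conclusion $t=\mathcal{O}(\epsilon^{-2}\log(\delta^{-1}))$ in Eq.~\ref{eslin}, and hence Theorem~\ref{the1}, is unaffected. (Two minor points you could also flag: the summation condition in the lemma should read $|m-Np|\geq cN$, not $|m-Np|\geq cp$; and your MGF constraint $\lambda<3/M$ is indeed respected at the Bernstein optimizer, so no boundary case arises.)
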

\begin{proof}
Bernstein's inequality \cite{Vershynin_2018} states that when $X_1,\ldots,X_N$ are i.i.d.~variables with the mean $\mu$ and the variance $\sigma^2$, and we have $|X_i-\mu|\leq R$ for any $i$, then the following inequality holds
\begin{eqnarray}\label{bernst}
\bm{\Pr}\left[\left|\sum_i X_i-N\mu\right|\geq Nc\right]\leq 2\exp\left(\frac{-Nc^2}{2\sigma^2+2Rc/3}\right).
\end{eqnarray}
Note that the binomial distribution can be generated from $N$ independent Bernoulli coins with a probability $p$ to get 1 and a probability $1-p$ to get 0. Therefore, we can simply let $X_i$ be these Bernoulli coins, resulting in $\mu=p$, $\sigma^2=p(1-p)/4$ and $R=1$. Putting these parameters into Eq. \ref{bernst}, the result is proved.
\end{proof}

\begin{lemma}\label{con2}
Given a binomial distribution $\bm{\Pr}[X=m]=\binom{N}{m}(1-p)^{N-m}p^m$ and a positive $c$, we have the following inequality
\begin{eqnarray}
\bm{\Pr}[|X-Np|\geq cN]=\sum_{|m-Np|\geq cp}\binom{N}{m}(1-p)^{N-m}p^m\leq 2e^{-2c^2N}.
\end{eqnarray}
\end{lemma}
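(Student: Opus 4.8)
The plan is to recognize, exactly as in the proof of Lemma \ref{con1}, that a binomial random variable is a sum of $N$ independent Bernoulli coins, and then apply the Hoeffding inequality rather than the Bernstein inequality. Since the summands here are bounded in $[0,1]$ and we only want a bound depending on this range (not on the variance), Hoeffding's bounded-difference inequality is the natural tool and yields the cleaner variance-free exponent $2c^2N$ in the statement.

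Concretely, I would first write $X=\sum_{i=1}^N X_i$, where the $X_i$ are i.i.d.\ Bernoulli variables taking value $1$ with probability $p$ and $0$ with probability $1-p$, so that $\mathbb{E}[X]=Np$. Each $X_i$ lies in the interval $[a_i,b_i]=[0,1]$, hence $b_i-a_i=1$ for every $i$. I would then invoke Hoeffding's inequality in the form: for independent $X_i\in[a_i,b_i]$,
\begin{eqnarray}
\bm{\Pr}\left[\left|\sum_{i=1}^N X_i - N\mu\right|\geq Nc\right]\leq 2\exp\left(-\frac{2(Nc)^2}{\sum_{i=1}^N (b_i-a_i)^2}\right).
\end{eqnarray}

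To finish, I would substitute the parameters $\mu=p$ and $\sum_{i=1}^N (b_i-a_i)^2 = N$ into the right-hand side, which collapses the exponent to $-2N^2c^2/N = -2c^2N$, giving
\begin{eqnarray}
\bm{\Pr}[|X-Np|\geq cN]\leq 2\exp(-2c^2N)=2e^{-2c^2N},
\end{eqnarray}
as claimed. I do not anticipate any genuine obstacle here: the only ``choice'' is to use Hoeffding rather than Bernstein, and the entire argument is a direct substitution once the i.i.d.\ Bernoulli decomposition is written down. If one instead wanted to keep the proof self-contained and avoid quoting Hoeffding, the fallback would be a standard Chernoff-style argument bounding the moment generating function $\mathbb{E}[e^{\lambda(X_i-p)}]\leq e^{\lambda^2/8}$ via Hoeffding's lemma and then optimizing over $\lambda$, which reproduces the same $2e^{-2c^2N}$ bound; this is the step I would flag as the place where the factor of $2$ in the exponent is actually generated.
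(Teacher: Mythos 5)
Your proposal is correct and follows essentially the same route as the paper: the paper's proof of this lemma simply invokes Hoeffding's inequality (citing Vershynin) applied to the Bernoulli-coin decomposition already set up in the proof of Lemma~\ref{con1}, which is exactly what you do, just with the substitution spelled out explicitly. No gaps; your filled-in details (range $[0,1]$, $\sum_i (b_i-a_i)^2 = N$, exponent $-2c^2N$) are what the paper leaves implicit.
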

\begin{proof}
Here, the result is based on the Hoeffding inequality, which for example, can be found in Ref. \cite{Vershynin_2018}.
\end{proof}

\section{Brief review of standard quantum phase estimation (QPE)\label{qpeintro}}

\begin{figure}[htbp]
\centering
\includegraphics[width=0.9\textwidth]{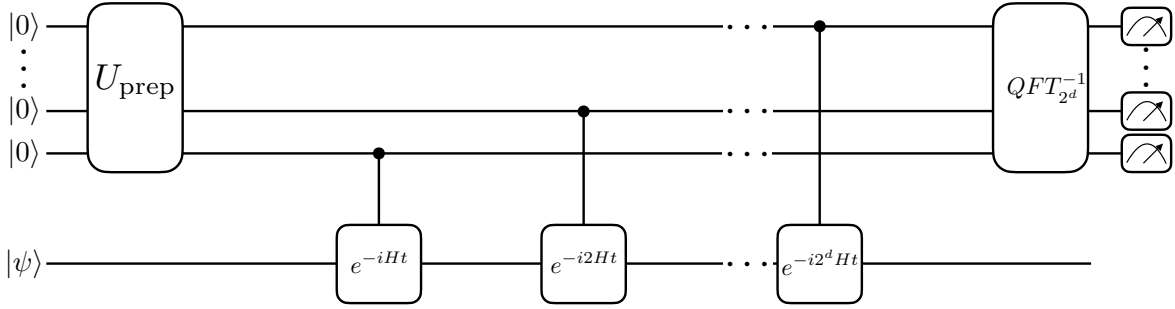}
\caption{The quantum circuit of standard quantum phase estimation \cite{nielsen2010quantum}.\label{fig.qpe}}
\end{figure}

Here, we summarize the results of standard QPE that utilizes the quantum Fourier transform (QFT) \cite{nielsen2010quantum}.

Given a state $|\psi\rangle=\sum_\alpha \Pi_\alpha |\psi\rangle=\sum_\alpha c_\alpha |\psi_\alpha\rangle$, QPE with $d$ ancilla qubits first acts a controlled Hamiltonian simulation unitary $U_H$ with the form
\begin{eqnarray}\label{stanhs}
U_H=\sum_{x=0}^{2^d-1} |x\rangle\langle x|\otimes e^{-i 2\pi H x},
\end{eqnarray}
where $x=(x_{d-1}x_{d-2}\cdots x_0)=\sum_{j=0}^{d-1} x_j 2^j$ on the initial state $|\text{in}\rangle=\sum_{x=0}^{2^d-1}\frac{1}{\sqrt{2^d}}|x\rangle\otimes |\psi\rangle$. Next, QPE implements the QFT circuit $U_{\text{QFT}}$ on the ancilla qubits. The output state can then be derived
\begin{eqnarray}\label{qpeout}
|\text{out}\rangle\ &&=(U_{\text{QFT}}\otimes I_n)U_H\sum_{x=0}^{2^d-1}\frac{1}{\sqrt{2^d}}|x\rangle\otimes |\psi\rangle\nonumber\\
&&=\sum_{x=0}^{2^d-1}\frac{1}{\sqrt{2^d}}U_{\text{QFT}}|x\rangle\otimes e^{-i 2\pi H x}|\psi\rangle\nonumber\\&&
=\sum_{x=0}^{2^d-1}\frac{1}{\sqrt{2^d}}U_{\text{QFT}}|x\rangle\otimes\sum_\alpha c_\alpha  e^{-i2\pi h_\alpha x}|\psi_\alpha\rangle\nonumber\\&&
=\sum_{x=0}^{2^d-1}\frac{1}{2^d}\sum_{y=0}^{2^d-1}e^{i 2\pi xy/2^d}|y\rangle\otimes\sum_\alpha c_\alpha  e^{-i2\pi h_\alpha x}|\psi_\alpha\rangle\nonumber\\&&
=\frac{1}{2^d}\sum_\alpha c_\alpha \sum_{y=0}^{2^d-1}\sum_{x=0}^{2^d-1}e^{-i 2\pi (h_\alpha -y/2^d)x}|y\rangle\otimes|\psi_\alpha\rangle\nonumber\\&&
=\frac{1}{2^d}\sum_\alpha c_\alpha \sum_{y=0}^{2^d-1}\frac{1-e^{-i 2\pi (h_\alpha -y/2^d)2^d}}{1-e^{-i 2\pi (h_\alpha -y/2^d)}}|y\rangle\otimes|\psi_\alpha\rangle.
\end{eqnarray}

\textbf{Eigenvalue estimation.——}For the eigenvalue estimation task, we can see that when $|\psi\rangle=|\psi_\alpha\rangle$ is an eigenstate of $H$, then Eq. \ref{qpeout} has the simplified form
\begin{eqnarray}
|\text{out}\rangle=\frac{1}{2^d}\sum_{y=0}^{2^d-1}\frac{1-e^{-i 2\pi (h_\alpha -y/2^d)2^d}}{1-e^{-i 2\pi (h_\alpha -y/2^d)}}|y\rangle\otimes|\psi_\alpha\rangle.
\end{eqnarray}
The probability of measuring $|y\rangle$ is therefore
\begin{eqnarray}\label{py1}
p_y\ &&=\frac{1}{2^{2d}}\frac{1-e^{-i 2\pi (h_\alpha -y/2^d)2^d}}{1-e^{-i 2\pi (h_\alpha -y/2^d)}}\frac{1-e^{i 2\pi (h_\alpha -y/2^d)2^d}}{1-e^{i 2\pi (h_\alpha -y/2^d)}}\nonumber\\
&&=\frac{1}{2^{2d}}\frac{ e^{-i \pi (h_\alpha-y/2^d)2^d}\sin(\pi (h_\alpha-y/2^d)2^d)}{ e^{-i \pi (h_\alpha-y/2^d)}\sin(\pi (h_\alpha-y/2^d))}\frac{ e^{i \pi (h_\alpha-y/2^d)2^d}\sin(\pi (h_\alpha-y/2^d)2^d)}{ e^{i \pi (h_\alpha-y/2^d)}\sin(\pi (h_\alpha-y/2^d))}\nonumber\\
&&=\frac{1}{2^{2d}}\frac{\sin(\pi (h_\alpha-y/2^d)2^d)^2}{ \sin(\pi (h_\alpha-y/2^d))^2}\nonumber\\
&&\leq \frac{1}{2^{2d}}\frac{1}{ \sin(\pi (h_\alpha-y/2^d))^2}\nonumber\\
&&=\frac{1}{2^{2d}}\frac{1}{ \sin(\pi |h_\alpha-y/2^d|_c)^2}\leq \frac{1}{2^{2d}}\frac{1}{ 4|h_\alpha-y/2^d|_c^2},
\end{eqnarray}
where the last line, we define the circular distance $\{|\theta|_c=\min\{\theta\mod 1, (1-\theta)\mod1\}$, and we use the inequality
\begin{eqnarray}
|\sin(\pi|\theta|_c)|\geq 2|\theta|_c.
\end{eqnarray}
Thus, we have the failure probability of estimation within an error $\epsilon$ has the upper bound
\begin{eqnarray}
\bm{\Pr}\left[|h_\alpha-y/2^d|_c\geq \epsilon\right]&&=\sum_{y: |h_\alpha-y/2^d|_c\geq \epsilon} p_y\\
&&\leq \sum_{y: |h_\alpha-y/2^d|_c\geq \epsilon}\frac{1}{2^{2d}}\frac{1}{ 4|h_\alpha-y/2^d|_c^2}\nonumber\\
&&\leq\frac{1}{2\cdot 2^d} \int_{\epsilon}^\infty \frac{1}{s^2}ds+\frac{1}{2\cdot 2^{2d}\epsilon^2}=\frac{1}{2\cdot 2^{d}\epsilon}+\frac{1}{2\cdot 2^{2d}\epsilon^2}\nonumber\\&&\leq \frac{1}{2^{d}\epsilon}.
\end{eqnarray}
As a result, to have a success probability at least $1-\delta$, we require $2^d=\mathcal{O}(1/(\epsilon\delta))$. Since $U_H$ has the decomposition 
\begin{eqnarray}\label{uh}
U_H=\sum_{x=0}^{2^d-1} |x\rangle\langle x|\otimes e^{-i 2\pi H x}=\prod_{j=0}^{d-1}\left(|0\rangle\langle 0|_j\otimes I+|1\rangle\langle 1|_j\otimes e^{-i2\pi H 2^{j}}\right),
\end{eqnarray}
which is the successive multiplication of $j$th-ancilla qubit controlled $e^{-i2\pi H 2^{j}}$ unitaries for $j\in\{0,1,\ldots,d-1\}$, therefore, the cost of constructing $U_H$ characterized by the Hamiltonian simulation time is $1+2+\cdots+2^{d-1}=2^d-1$ and we have
\begin{eqnarray}
\text{Hami.simu.time}=2^d=
\mathcal{O}(\epsilon^{-1}\delta^{-1}),
\end{eqnarray}
where the $\epsilon^{-1}$ scaling is also called the Heisenberg limit.

\textbf{Eigenstate preparation.——}For the eigenstate preparation task, we still consider Eq. \ref{qpeout}. Following the derivation of Eq. \ref{py1}, the probability of measuring $y$ on the ancilla register is 
\begin{eqnarray}\label{py2}
p_y&&=\frac{1}{2^{2d}}\sum_\alpha |c_\alpha|^2\frac{\sin(\pi (h_\alpha-y/2^d)2^d)^2}{ \sin(\pi (h_\alpha-y/2^d))^2}.
\end{eqnarray}
After the measurement, the remaining state on the system register becomes
\begin{eqnarray}
|\phi_y\rangle=\frac{1}{2^d\sqrt{p_y}}\sum_\alpha c_\alpha \frac{1-e^{-i 2\pi (h_\alpha -y/2^d)2^d}}{1-e^{-i 2\pi (h_\alpha -y/2^d)}}|\psi_\alpha\rangle,
\end{eqnarray}
whose overlap with an eigenstate $|\psi_\beta\rangle$ is
\begin{eqnarray}\label{pya}
|\langle\phi_y|\psi_\beta\rangle|^2&&=\frac{|c_\beta|^2\frac{\sin(\pi (h_\beta-y/2^d)2^d)^2}{ \sin(\pi (h_\beta-y/2^d))^2}}{2^{2d}p_y}=\frac{|c_\beta|^2\frac{\sin(\pi (h_\beta-y/2^d)2^d)^2}{ \sin(\pi (h_\beta-y/2^d))^2}}{|c_\beta|^2\frac{\sin(\pi (h_\beta-y/2^d)2^d)^2}{ \sin(\pi (h_\beta-y/2^d))^2}+\sum_{\alpha\neq\beta} |c_\alpha|^2\frac{\sin(\pi (h_\alpha-y/2^d)2^d)^2}{ \sin(\pi (h_\alpha-y/2^d))^2}}.
\end{eqnarray}
If $h_\beta$ is exactly known, we can then always let $h_\beta$ be 0 by shifting the spectrum of $H$ (adding identity). In this case, we will have
\begin{eqnarray}
|\langle\phi_y|\psi_\beta\rangle|^2=0,~\forall y\neq 0.
\end{eqnarray}
Therefore, only when the measurement outcome is $y=0$, the eigenstate $|\psi_\beta\rangle$ will have a non-zero contribution. For $y=0$, we have the measurement probability
\begin{eqnarray}
p_0&&=|c_\beta|^2+\frac{1}{2^{2d}}\sum_{\alpha\neq\beta} |c_\alpha|^2\frac{\sin(\pi h_\alpha2^d)^2}{ \sin(\pi h_\alpha)^2}\geq |c_\beta|^2.
\end{eqnarray}
Define $\Delta_\beta$ as the gap between $h_\beta$ and other eigenvalues, then the overlap has a lower bound
\begin{eqnarray}
|\langle\phi_0|\psi_\beta\rangle|^2&&=\frac{|c_\beta|^22^{2d}}{|c_\beta|^22^{2d}+\sum_{\alpha\neq\beta} |c_\alpha|^2\frac{\sin(\pi (h_\alpha-y/2^d)2^d)^2}{ \sin(\pi (h_\alpha-y/2^d))^2}}\nonumber\\
&&\geq \frac{|c_\beta|^22^{2d}}{|c_\beta|^22^{2d}+\sum_{\alpha\neq\beta} |c_\alpha|^2\frac{1}{ 4|h_\alpha|_c^2}}\nonumber\\
&&\geq \frac{|c_\beta|^22^{2d}}{|c_\beta|^22^{2d}+\sum_{\alpha\neq\beta} |c_\alpha|^2\frac{1}{ 4\Delta_\beta^2}}\nonumber\\
&&=\frac{|c_\beta|^2}{|c_\beta|^2+(1-|c_\beta|^2)\frac{1}{ 4(2^{d}\Delta_\beta)^2}}.
\end{eqnarray}
To have an overlap at least $1-\zeta$, we require $2^d=\mathcal{O}(|c_\beta|^{-1}\Delta_\beta^{-1}\zeta^{-1/2})$. For the overall complexity, we need to account for the measurement probability $p_0$. Through amplitude amplification, we can use $\mathcal{O}(|c_\beta|^{-1})$ queries on the preparation unitaries of $|\text{out}\rangle$ to amplify $p_0$ to $\mathcal{O}(1)$. As a result, the overall required Hamiltonian simulation time is
\begin{eqnarray}
\text{Hami.simu.time}=\mathcal{O}(|c_\beta|^{-1}2^d)=\mathcal{O}(|c_\beta|^{-2}\Delta_\beta^{-1}\zeta^{-1/2}).
\end{eqnarray}

\section{Dilated Hamiltonian for Lindbladian simulation \label{smc}}
The general form of a Lindbladian is
\begin{equation}\label{lme}
\frac{\d\rho(t)}{\d t}=\mathcal{L}[\rho(t)]=-i[H_I,\rho(t)]+
\sum_i \left(F_i\rho(t) F_i^\dag-
\frac{1}{2}\{\rho(t),F_i^\dag F_i\}\right),
\end{equation}
where $\rho(t)$ is system density matrix, $H_I$ is the internal Hamiltonian, and $F_i$ are quantum jump operators. Here, we consider a special case where $H_I=0$ and there is only a single jump operator $F$
\begin{equation}\label{flme}
\frac{d\rho(t)}{dt}=\mathcal{L}_F[\rho(t)]=F\rho(t) F^\dag-
\frac{1}{2}\{\rho(t),F^\dag F\}.
\end{equation}

For this special Lindbladian Eq. \ref{flme}, there is a simple Lindbladian simulation algorithm through dilated Hamiltonian
\begin{equation}
\tilde{F}=\begin{pmatrix}
0 & F^\dag \\
F & 0
\end{pmatrix}.
\end{equation}
The short-time evolution under $\tilde{F}$ has the effect
\begin{eqnarray}\label{dil}
e^{-i \tilde{F}\sqrt{\tau}}(|0\rangle\langle 0|\otimes \rho(t))e^{i \tilde{F}\sqrt{\tau}}&&\approx \left(I-i \tilde{F}\sqrt{\tau}-\frac{1}{2}\tilde{F}^2\tau\right)(|0\rangle\langle 0|\otimes \rho(t))\left(I+i \tilde{F}\sqrt{\tau}-\frac{1}{2}\tilde{F}^2\tau\right)\nonumber\\&&\approx |0\rangle\langle 0|\otimes \rho(t)+ \tau\tilde{F}(|0\rangle\langle 0|\otimes \rho(t))\tilde{F}\nonumber-\frac{1}{2}\tau\tilde{F}^2(|0\rangle\langle 0|\otimes \rho(t))-\frac{1}{2}\tau(|0\rangle\langle 0|\otimes \rho(t))\tilde{F}^2\nonumber\\&&=|0\rangle\langle 0|\otimes \left(\rho(t)-\frac{1}{2}\{F^\dag F,\rho(t)\}\tau\right)+|1\rangle\langle 1|\otimes F\rho(t)F^\dag\tau.
\end{eqnarray}
By tracing out the ancilla qubit, we obtain
\begin{eqnarray}
\rho(t)+F\rho(t)F^\dag\tau-\frac{1}{2}\{F^\dag F,\rho(t)\}\tau\approx \rho(t+\tau).
\end{eqnarray}
Therefore, by repeatedly adding ancilla qubits and running short-time evolution of $\tilde{F}$, we are able to simulate Eq. \ref{flme}. 

Regarding the complexity, for a simulation of time $\tau=t/N$, dilated Hamiltonian simulation needs a time $\sqrt{\tau}=\sqrt{t/N}$, resulting in a total simulation time $T=N\sqrt{t/N}=\sqrt{Nt}$. It can be shown that to simulate the Lindbladian Eq. \ref{flme} for time $t$ within error $\varepsilon$, we need $N=t^3/\varepsilon^2$, resulting in a total dilated Hamiltonian simulation time
\begin{eqnarray}\label{dilated}
T=\mathcal{O}\left(\frac{t^2}{\varepsilon}\right).
\end{eqnarray}
Since $N$ is also the number of ancilla qubits, the number of ancilla qubits is $\mathcal{O}(t^3\varepsilon^{-2})$.

\section{Details on Lindbladian as slow QPE}
\subsection{Motivation\label{smmoti}}
Consider the following special Lindbladian
\begin{equation}\label{slme}
\frac{\d\rho(t)}{\d t}=\mathcal{L}_H[\rho(t)]=H\rho(t) H-
\frac{1}{2}\{\rho(t),H^2\},
\end{equation}
where the single jump operator is the Hamiltonian $H$ in Eq. \ref{hami}. We can express $\rho(0)$ in terms of $\{\Pi_\alpha\}$:
\begin{equation}
\rho(0)=\sum_{\alpha\beta}\Pi_\alpha\rho(0)\Pi_\beta=\sum_{\alpha\beta}\rho_{0,\alpha\beta}.
\end{equation}
Since we have
\begin{eqnarray}
&&\frac{\d\Pi_\alpha\rho(0)\Pi_\beta}{\d t}=h_\alpha h_\beta\rho_{0,\alpha\beta}-\frac{h_\alpha^2+h_\beta^2}{2}\rho_{0,\alpha\beta}\\&&
\rho_{0,\alpha\beta}\xrightarrow{t}e^{\left(h_\alpha h_\beta-\frac{h_\alpha^2+h_\beta^2}{2}\right)t}\rho_{0,\alpha\beta},
\end{eqnarray}
Therefore, the Lindbladian Eq. \ref{slme} has the solution
\begin{eqnarray}
\rho(t)&&=e^{\mathcal{L}_Ht}[\rho(0)]=\sum_{\alpha\beta} e^{\mathcal{L}_Ht}[\rho_{0,\alpha\beta}]=\sum_{\alpha\beta} e^{\left(h_\alpha h_\beta-\frac{h_\alpha^2+h_\beta^2}{2}\right)t}\rho_{0,\alpha\beta}.
\end{eqnarray}
Because we have $h_\alpha\neq h_\beta$ for $\alpha\neq \beta$, therefore, 
\begin{eqnarray}
h_\alpha h_\beta-\frac{h_\alpha^2+h_\beta^2}{2}&&\leq |h_\alpha||h_\beta|-\frac{h_\alpha^2+h_\beta^2}{2}< 0,~\forall\,\alpha\neq \beta,\\
h_ \alpha h_\beta-\frac{h_\alpha^2+h_\beta^2}{2}&&=0,~\forall\,\alpha= \beta.
\end{eqnarray}
Thus, the steady state $\rho_{\text{ss}}$ preserves the coherence inside the subspace of $H$ with the same energy while dissipating the coherence between the subspace of $H$ with different energy
\begin{eqnarray}\label{steady}
\rho_{\text{ss}}=\rho(\infty)=\sum_{\alpha}\rho_{0,\alpha\alpha}.
\end{eqnarray}

For QPE as shown in Section \ref{qpeintro}, considering infinite accuracy, the output state, including both ancilla and the system, has the form
\begin{eqnarray}
\rho_{\text{QPE}}=\sum_{\alpha\beta}|h_\alpha\rangle\langle h_\beta|\otimes \rho_{0,\alpha\beta},
\end{eqnarray}
where $|h_\alpha\rangle$ is an ancilla computation basis state corresponding to the binary representation of the eigenvalue $h_\alpha$. By tracing out the ancillas, we can find that the reduced system Hamiltonian is also $\rho_{\text{ss}}$, which indicates a strong connection between these two seemingly vastly different quantum methods.

\subsection{Lindbladian as slow QPE\label{smslow}}
We consider in more detail the short-time dilated Hamiltonian simulation Eq. \ref{dil}, we have
\begin{eqnarray}
e^{-i \tilde{F}\sqrt{\tau}}&&=\sum_{k=0}^\infty\frac{(-i\sqrt{\tau})^k}{k!}\tilde{F}^k=\sum_{k=0}^\infty \frac{(-i\sqrt{\tau})^{2k}}{2k!}\tilde{F}^{2k}+\sum_{k=0}^\infty \frac{(-i\sqrt{\tau})^{(2k+1)}}{(2k+1)!}\tilde{F}^{2k+1}\nonumber\\&&=\sum_{k=0}^\infty \frac{(-i\sqrt{\tau})^{2k}}{2k!}\begin{pmatrix}
(F^\dag F)^k &0  \\
0 &  (F F^\dag)^k
\end{pmatrix}+\sum_{k=0}^\infty \frac{(-i\sqrt{\tau})^{(2k+1)}}{(2k+1)!}\begin{pmatrix}
0 &(F^\dag F)^k F^\dag\\
(F F^\dag)^k F &  0
\end{pmatrix}.
\end{eqnarray}
$F$ has the singular value decomposition $F=W\Sigma V$, putting this into the above expression, we have
\begin{eqnarray}\label{short}
e^{-i \tilde{F}\sqrt{\tau}}&&=\sum_{k=0}^\infty \frac{(-i\sqrt{\tau})^{2k}}{2k!}\begin{pmatrix}
V^\dag\Sigma^{2k}V &0  \\
0 &  W\Sigma^{2k}W^\dag
\end{pmatrix}+\sum_{k=0}^\infty \frac{(-i\sqrt{\tau})^{(2k+1)}}{(2k+1)!}\begin{pmatrix}
0 &V^\dag\Sigma^{2k+1} W^\dag\\
W\Sigma^{2k+1} V &  0\
\end{pmatrix}\nonumber\\&&=\begin{pmatrix}
V^\dag\cos(\sqrt{\tau}\Sigma)V &0  \\
0 &  W\cos(\sqrt{\tau}\Sigma)W^\dag
\end{pmatrix}-i\begin{pmatrix}
0 &V^\dag\sin(\sqrt{\tau}\Sigma) W^\dag\\
W\sin(\sqrt{\tau}\Sigma) V &  0\
\end{pmatrix}.
\end{eqnarray}
When $F=H$ is Hermitian, we have
\begin{eqnarray}\label{decom}
e^{-i \tilde{H}\sqrt{\tau}}=\begin{pmatrix}
\cos(\sqrt{\tau}H) &0  \\
0 &  \cos(\sqrt{\tau}H)
\end{pmatrix}-i\begin{pmatrix}
0 &\sin(\sqrt{\tau}H)\\
\sin(\sqrt{\tau}H) &  0\
\end{pmatrix}.
\end{eqnarray}

To simplify the presentation without loss of generality, we consider $\rho(0)=|\psi\rangle\langle \psi|$, which has the form
\begin{eqnarray}
|\psi\rangle=\sum_\alpha \Pi_\alpha |\psi\rangle=\sum_\alpha c_\alpha |\psi_\alpha\rangle,
\end{eqnarray}
with $\sum_i |c_i|^2=1$. Acting Eq. \ref{decom} to $|0\rangle|\psi\rangle$, we have
\begin{eqnarray}\label{shorth}
e^{-i \tilde{H}\sqrt{\tau}}|0\rangle|\psi\rangle=|0\rangle\cos(\sqrt{\tau}H)|\psi\rangle-i|1\rangle\sin(\sqrt{\tau}H)|\psi\rangle.
\end{eqnarray}
By setting $\tau=t/N$ and repeatedly implementing $e^{-i \tilde{H}}$ for $N$ times (with each time adding a new ancilla qubit), we have
\begin{eqnarray}\label{lint2}
\left(\prod_i e^{-i\tilde{H}_i\sqrt{\frac{t}{N}}}\right)|0\rangle^{\otimes N}|\psi\rangle&&=\sum_{m=0}^N (-i)^m\sqrt{\binom{N}{m}}|\bar{m}\rangle\cos\left(\sqrt{\frac{t}{N}}H\right)^{N-m}\sin\left(\sqrt{\frac{t}{N}}H\right)^{m}|\psi\rangle\\&&=\sum_{m=0}^N |\bar{m}\rangle(-i)^m\sqrt{\binom{N}{m}}\sum_\alpha c_\alpha\cos\left(\sqrt{\frac{t}{N}}h_\alpha\right)^{N-m}\sin\left(\sqrt{\frac{t}{N}}h_\alpha\right)^{m}|\psi_\alpha\rangle\nonumber,
\end{eqnarray}
where $|\bar{m}\rangle$ represents the Dicke state, the equal superposition of all $N-m$ 0s and $m$ 1s configurations.

The probability of measuring $|\bar{m}\rangle$ is
\begin{eqnarray}\label{pm}
p_m=\binom{N}{m}\sum_\alpha |c_\alpha|^2\cos\left(\sqrt{\frac{t}{N}}h_\alpha\right)^{2N-2m}\sin\left(\sqrt{\frac{t}{N}}h_\alpha\right)^{2m},
\end{eqnarray}
with the resulting state in the system having the form
\begin{eqnarray}
|\phi_m\rangle=\frac{1}{\sqrt{p_m}}(-i)^m\sqrt{\binom{N}{m}}\sum_\alpha c_\alpha\cos\left(\sqrt{\frac{t}{N}}h_\alpha\right)^{N-m}\sin\left(\sqrt{\frac{t}{N}}h_\alpha\right)^{m}|\psi_\alpha\rangle.
\end{eqnarray}
the contribution (probability) of $|\psi_\alpha\rangle$ in $|\phi_m\rangle$ is
\begin{eqnarray}\label{pma}
p_{m,\alpha}=|\langle\phi_m|\psi_\alpha\rangle|^2=\binom{N}{m}\frac{|c_\alpha|^2\cos\left(\sqrt{\frac{t}{N}}h_\alpha\right)^{2N-2m}\sin\left(\sqrt{\frac{t}{N}}h_\alpha\right)^{2m}}{p_m}.
\end{eqnarray}

\subsubsection{Eigenvalue estimation}

When $|\psi\rangle=|\psi_\alpha\rangle$ is an eigenstate of $H$, we have
\begin{eqnarray}
p_m&&=\binom{N}{m}\cos\left(\sqrt{\frac{t}{N}}h_\alpha\right)^{2N-2m}\sin\left(\sqrt{\frac{t}{N}}h_\alpha\right)^{2m}\nonumber\\
&&=\binom{N}{m}(1-q_\alpha)^{N-m}q_\alpha^m,
\end{eqnarray}
where we define $q_\alpha=\sin\left(\sqrt{\frac{t}{N}}h_\alpha\right)^{2}$. Therefore, $p_m$ is exactly the binomial distribution. According to Lemma \ref{con1}, the measurement results will concentrate on $q_\alpha N$ and we have
\begin{eqnarray}
\bm{\Pr}[|m-Nq_\alpha|\geq cN]=\sum_{|m-Nq_\alpha|\geq cN}p_m\leq 2\exp\left(-\frac{Nc^2}{0.5q_\alpha(1-q_\alpha)+2c/3}\right).
\end{eqnarray}
Therefore, we can expect to use $m/N$ to estimate $q_\alpha$. Since $\sqrt{\frac{t}{N}}\ll1$, we have $q_\alpha\approx \frac{t}{N}h_\alpha^2$. Therefore, an $\epsilon$-additive error estimation in $h_\alpha$ corresponds to a $\frac{2t}{N}h_\alpha\epsilon$-additive error estimation in $q_\alpha$, which further corresponds to a $2th_\alpha\epsilon$-uncertainty ($c=\frac{2th_\alpha}{N}\epsilon$) in $m$ and we have
\begin{eqnarray}
\bm{\Pr}[|m-Nq_\alpha|\geq 2th_\alpha\epsilon]&&\lesssim 2\exp\left(-\frac{4t^2h_\alpha^2\epsilon^2/N}{0.5\frac{t}{N}h_\alpha^2(1-\frac{t}{N}h_\alpha^2)+\frac{4th_\alpha\epsilon}{3N}}\right)\nonumber\\&&=
2\exp\left(-\frac{4t^2h_\alpha^2\epsilon^2}{0.5th_\alpha^2(1-\frac{t}{N}h_\alpha^2)+\frac{4th_\alpha\epsilon}{3}}\right)\nonumber\\&&\leq 2\exp\left(-\frac{4t^2h_\alpha^2\epsilon^2}{0.5th_\alpha^2+\frac{4th_\alpha\epsilon}{3}}\right)\nonumber\\
&&= 2\exp\left(-\frac{4th_\alpha\epsilon^2}{0.5h_\alpha+\frac{4\epsilon}{3}}\right).
\end{eqnarray}
We typically require an accuracy $\epsilon<h_\alpha$, which leads to
\begin{eqnarray}\label{c1}
\bm{\Pr}[|m-Nq_\alpha|\geq 2th_\alpha\epsilon]=\sum_{|m-Nq_\alpha|\geq 2th_\alpha\epsilon}p_m\leq 2\exp\left(-\frac{24}{11}t\epsilon^2\right).
\end{eqnarray}
Therefore, to have an estimation success probability at least $1-\delta$, we require the Lindbladian evolution time to be of order
\begin{eqnarray}\label{eslin}
\text{Lindlad.evo.time (t)}=\mathcal{O}(\epsilon^{-2}\log (\delta^{-1})).
\end{eqnarray}
Since the Lindbladian dynamics of time $t$ within an error $\varepsilon$ is simulated  through the dilated Hamiltonian with the simulation time to be $\mathcal{O}(t^2/\varepsilon)$ as shown in Eq. \ref{dilated}, therefore, the total dilated Hamiltonian simulation time required to fulfill the eigenvalue estimation task is
\begin{eqnarray}
\text{Hami.simu.time (T)}=\mathcal{O}(\epsilon^{-4}\log^2 (\delta^{-1})\varepsilon^{-1}),
\end{eqnarray}
where the $\varepsilon$ dependence just affects the Lindbladian simulation error but will not contribute to the accuracy of the eigenvalue estimation. Therefore, the natural (purified) evolution of the Lindbladian Eq. \ref{slme} gives an eigenvalue estimation algorithm with standard quantum limit ($\epsilon^{-2}$), and the dilated Hamiltonian simulation algorithm for Eq. \ref{slme} gives an eigenvalue estimation algorithm that is quadratically worse ($\epsilon^{-4}$) than standard quantum limit.

\subsubsection{Eigenstate preparation\label{slowep}}
Given $|\psi\rangle=\sum_\alpha c_\alpha|\psi_\alpha\rangle$, if the eigenvalue $h_\beta$ of the interested eigenstate $|\psi_\beta\rangle$ is exactly known, then similar to the operation in standard QPE (Section \ref{qpeintro}), we can then always let $h_\beta$ be 0 by shifting the spectrum of $H$. By this, following Eq. \ref{pma}, we will have
\begin{eqnarray}
p_{m,\beta}=\binom{N}{m}\frac{|c_\beta|^2\cos\left(\sqrt{\frac{t}{N}}h_\beta\right)^{2N-2m}\sin\left(\sqrt{\frac{t}{N}}h_\beta\right)^{2m}}{p_m}=0,~\forall m\neq0.
\end{eqnarray}
Therefore, we only need to focus on the measurement event $m=0$. According to Eq. \ref{pm}, the probability of measuring $m=0$ is
\begin{eqnarray}\label{import1}
p_0&&=\sum_\alpha |c_\alpha|^2\cos\left(\sqrt{\frac{t}{N}}h_\alpha\right)^{2N}\nonumber\\
&&=|c_\beta|^2+\sum_{\alpha\neq\beta} |c_\alpha|^2\cos\left(\sqrt{\frac{t}{N}}h_\alpha\right)^{2N}\geq |c_\beta|^2.
\end{eqnarray}
Let $\Delta_\beta$ be the spectral gap between $h_\beta$ and other eigenvalues, then after the measurement, the overlap between the resulting state and $|\psi_\beta\rangle$ has the lower bound
\begin{eqnarray}\label{import2}
p_{0,\beta}&&=\frac{|c_\beta|^2}{p_0}=\frac{|c_\beta|^2}{|c_\beta|^2+\sum_{\alpha\neq\beta} |c_\alpha|^2\cos\left(\sqrt{\frac{t}{N}}h_\alpha\right)^{2N}}\nonumber\\
&&\approx\frac{|c_\beta|^2}{|c_\beta|^2+\sum_{\alpha\neq\beta} |c_\alpha|^2(1-\frac{t}{2N}h_\alpha^2)^{2N}}\nonumber\\
&&\geq \frac{|c_\beta|^2}{|c_\beta|^2+\sum_{\alpha\neq\beta} |c_\alpha|^2(1-\frac{t}{2N}\Delta_\beta^2)^{2N}}\nonumber\\
&&=\frac{|c_\beta|^2}{|c_\beta|^2+(1-|c_\beta|^2)\left(1-\frac{1}{\frac{2N}{t\Delta_\beta^2}}\right)^{\frac{2N}{t\Delta_\beta^2}t\Delta_\beta^2}}\nonumber\\
&&\geq\frac{|c_\beta|^2}{|c_\beta|^2+(1-|c_\beta|^2)e^{-t\Delta_\beta^2}},
\end{eqnarray}
where in the second line, we use the Taylor approximation for $\cos\left(\sqrt{\frac{t}{N}}h_\alpha\right)$ since $\sqrt{\frac{t}{N}}\ll1$, and in the last line, we use the fact that $(1-1/x)^x< e^{-1}$. As a result, to make the overlap $p_{0,\beta}$ be at least $1-\zeta$, we therefore require the overall complexity
\begin{eqnarray}\label{eplin}
\text{Lindlad.evo.time (t)}=\mathcal{O}\left(|c_\beta|^{-1}\Delta_\beta^{-2}\log(\zeta^{-1})\right),
\end{eqnarray}
which corresponds to the dilated Hamiltonian simulation time
\begin{eqnarray}
\text{Hami.simu.time (T)}=\mathcal{O}\left(|c_\beta|^{-1}\Delta_\beta^{-4}\log^2(\zeta^{-1})\varepsilon^{-1}\right),
\end{eqnarray}
where the $|c_\beta|^{-1}$ comes from using amplitude amplification to amplify $p_0$ and we ignore the additional logarithmic factors. Note that $\varepsilon$ dependence just affects the Lindbladian simulation error but will not contribute to the accuracy of the eigenstate preparation.

\section{Lindbladian fast forwarding\label{smff}}

\begin{figure}[htbp]
\centering
\includegraphics[width=0.9\textwidth]{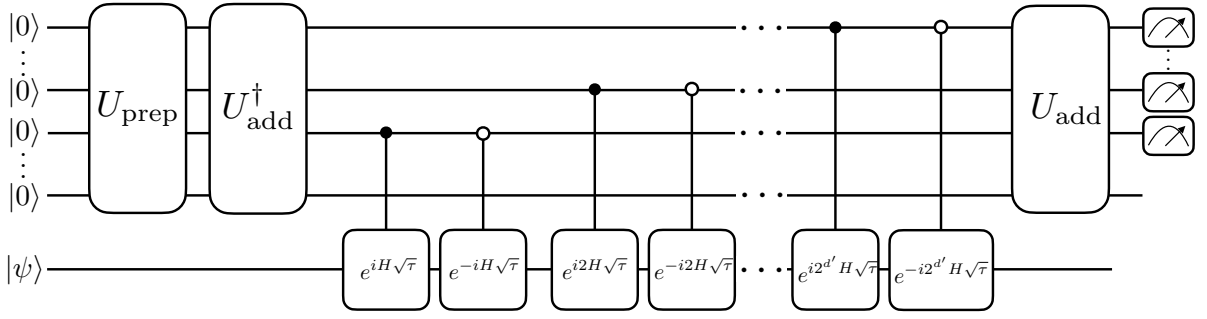}
\caption{The quantum circuit of our Lindbladian fast-forwarding algorithm for Eq.~(\ref{mainslme}) in Theorem \ref{the2}. The fast-forwarding algorithm can also be used for QPE tasks with a Heisenberg-limit scaling (Theorem \ref{the3}).\label{fig.fflind}}
\end{figure}

Here, we give a quadratically fast forwarding Lindbladian simulation algorithm for Eq. \ref{slme}. Recall that Eq. \ref{shorth} can be re-expressed in ancilla $+/-$ basis
\begin{eqnarray}
e^{-i \tilde{H}\sqrt{\tau}}\frac{1}{\sqrt{2}}\left(|+\rangle+|-\rangle\right)|\psi\rangle=\frac{1}{\sqrt{2}}|+\rangle e^{-iH\sqrt{\tau}}|\psi\rangle+\frac{1}{\sqrt{2}}|-\rangle e^{iH\sqrt{\tau}}|\psi\rangle,
\end{eqnarray}
putting which in Eq. \ref{lint2} gives
\begin{eqnarray}\label{lint3}
(\textbf{H}^{\otimes N}\otimes I)\left(\prod_i e^{-i\tilde{H}_i\sqrt{\frac{t}{N}}}\right)|0\rangle^{\otimes N}|\psi\rangle&&=\sum_{m=0}^N\frac{\sqrt{\binom{N}{m}}}{2^{N/2}}|\bar{m}\rangle e^{-iH\sqrt{\frac{t}{N}}(2m-N)}|\psi\rangle\nonumber\\
&&=\sum_{m=0}^N\frac{\sqrt{\binom{N}{m}}}{2^{N/2}}|\bar{m}\rangle |\psi(2m-N)\rangle,
\end{eqnarray}
where we set $|\psi(2m-N)\rangle=e^{-iH\sqrt{\frac{t}{N}}(2m-N)}|\psi\rangle$. A naive way of preparing Eq. \ref{lint3} is to first prepare the state
\begin{eqnarray}
\sum_{m=0}^N\frac{\sqrt{\binom{N}{m}}}{2^{N/2}}|\bar{m}\rangle |\psi\rangle,
\end{eqnarray}
and then implement the controlled unitary
\begin{eqnarray}\label{wholl}
\sum_{m=0}^N |\bar{m}\rangle\langle \bar{m}|\otimes e^{-iH\sqrt{\frac{t}{N}}(2m-N)},
\end{eqnarray}
which, however, will require a Hamiltonian simulation time of order $\mathcal{O}(\sqrt{Nt})=\mathcal{O}(t^2/\varepsilon)$ according to Eq. \ref{dilated} and $N$ ancilla qubits.

To enable Lindbladian fast forwarding and save ancilla qubits, we can observe that in Eq. \ref{lint3}, the contribution (probality) of $|\psi(2m-N)\rangle$ in the reduced system density matrix is $p_m=2^{-N}\binom{N}{m}$, which is a binomial distribution concentrated on $m=N/2$, corresponding to $|\psi(0)\rangle=|\psi\rangle$. According to Lemma \ref{con2}, we have
\begin{eqnarray}
\bm{\Pr}(|m-N/2|> cN)=\sum_{|m-N/2|\geq cN}2^{-N}\binom{N}{m}<2\exp(-2 c^2N).
\end{eqnarray}
Therefore, instead of the above naive way of simulation, we can only prepare the components around $m=N/2$: $m\in[N/2-cN,N/2+cN-1]$.
Without loss of generality, we assume $N$ to be even.
Also, since the Lindbladian simulation only cares about the system reduced density matrix, we can replace the $N$-qubit Dicke state $|\bar{m}\rangle$ by a $\log(N)$-qubit computational basis state $|m\rangle$ to exponentially save the number of ancilla qubits. As a result, our goal can be simplified to prepare the following state
\begin{eqnarray}\label{goal}
|\text{Goal}\rangle=\sum_{m=N/2-cN}^{N/2+cN-1}\frac{\sqrt{\binom{N}{m}}}{2^{N/2}}|m\rangle |\psi(2m-N)\rangle+\sum_{m<N/2-cN;m>N/2+cN-1}\frac{\sqrt{\binom{N}{m}}}{2^{N/2}}|m\rangle|\Psi_m\rangle,
\end{eqnarray}
where $|\Psi_m\rangle$ can be arbitrary quantum states. The difference between the resulting reduced system density matrix of $|\text{Goal}\rangle$ and the exact one can be upper-bounded
\begin{eqnarray}\label{tracebound}
&&\left\|\sum_{m=0}^N2^{-N}\binom{N}{m} |\psi(2m-N)\rangle\langle\psi(2m-N)|-\Tr_a(|\text{Goal}\rangle\langle \text{Goal}|)\right\|_1\nonumber\\
=&&\left\|\sum_{m<N/2-cN;m>N/2+cN-1}\frac{\binom{N}{m}}{2^{N}}(|\Psi_m\rangle\langle\Psi_m|-|\psi(2m-N)\rangle\langle\psi(2m-N|))\right\|_1\nonumber\\
\leq&& \sum_{m<N/2-cN;m>N/2+cN-1}\frac{\binom{N}{m}}{2^{N}}\left\| |\Psi_m\rangle\langle\Psi_m|-|\psi(2m-N)\rangle\langle\psi(2m-N|\right\|_1\nonumber\\
\leq&& \sum_{m<N/2-cN;m>N/2+cN-1}\frac{\binom{N}{m}}{2^{N}}<2\exp(-2 c^2N)
\end{eqnarray}
where in the last line, the first equality is because the operators in $\|\cdot\|_1$ of the second-to-last line are normalized density matrices. To make the error at most $\varepsilon'$, we require
\begin{eqnarray}\label{derep}
&&2\exp(-2c^2N)= \varepsilon'
\nonumber\\\rightarrow&&c=\sqrt{\frac{\log(2/\varepsilon')}{2}}\frac{1}{\sqrt{N}}\nonumber\\\rightarrow&& cN=\mathcal{O}\left(N^{1/2}\log^{1/2}(\varepsilon'^{-1})\right).
\end{eqnarray}
Let $d=\lceil \log N\rceil$ and $d'=\lceil \log(cN)\rceil+1$, to prepare Eq. \ref{goal}, we can first prepare the state 
\begin{eqnarray}\label{iinn}
\sum_{m=0}^{N}\frac{\sqrt{\binom{N}{m}}}{2^{N/2}}|m\rangle |\psi\rangle,
\end{eqnarray}
where $|m\rangle$ is a $d$-qubit computational basis state under binary representation. The detailed analysis of how to prepare $\sum_{m=0}^{N}\frac{\sqrt{\binom{N}{m}}}{2^{N/2}}|m\rangle$ can be found in Section \ref{sec.stateprep}.

Next, we can introduce the in-place addition~\cite{cuccaro2004new,draper2004logarithmic} (modular arithmetic) unitary $U_{\text{add}}$ with the property
\begin{eqnarray}
U_{\text{add}}|m\rangle=|m+N/2-2^{d'-1} \mod N\rangle,
\end{eqnarray}
for $m\in [N]$.
With this, we construct the following unitary
\begin{eqnarray}\label{vhh}
V_H=&&(U_{\text{add}}\otimes I)\left(I\otimes e^{iH\sqrt{\frac{t}{N}}}\right)\prod_{j=0}^{d'-1}\left(|0\rangle\langle 0|_j\otimes e^{iH\sqrt{\frac{t}{N}}2^{j}}+|1\rangle\langle 1|_j\otimes e^{-iH\sqrt{\frac{t}{N}}2^{j}}\right)(U_{\text{add}}^\dag\otimes I)\nonumber\\
=&&(U_{\text{add}}\otimes I)\sum_{m=0}^{2^{d'}-1} |m\rangle\langle m|\otimes e^{-iH\sqrt{\frac{t}{N}}(2m-2^{d'})}(U_{\text{add}}^\dag\otimes I)\nonumber\\&&+(U_{\text{add}}\otimes I)\sum_{m=2^{d'}}^{2^{d}-1} |m\rangle\langle m|\otimes e^{-iH\sqrt{\frac{t}{N}}(2(m\mod 2^{d'})-2^{d'})}(U_{\text{add}}^\dag\otimes I) \nonumber\\=&&\sum_{m=N/2-2^{d'-1}}^{N/2+2^{d'-1}-1} |m\rangle\langle m|\otimes e^{-iH\sqrt{\frac{t}{N}}(2m-N)}\nonumber\\&&+\sum_{m<N/2-2^{d'-1};m>N/2+2^{d'-1}-1}|m\rangle\langle m|\otimes e^{-iH\sqrt{\frac{t}{N}}(2(m-N/2+2^{d'-1}\mod 2^{d'})-2^{d'})}
\end{eqnarray}
where we only use the first $d'$ ancilla qubits for controlled Hamiltonian simulation and leave the others as identities.
The circuit of $V_H$ can be found in Fig. \ref{fig.fflind}.
The first term of $V_H$ in the last line of Eq. \ref{vhh} plays the dominant role.
Implementing $V_H$ on the initial state Eq. \ref{iinn} will exactly give us the form of $|\text{Goal}\rangle$, and we realize the simulation of the Lindbladian Eq. \ref{slme}.

The construction cost of $V_H$ can be characterized by the required Hamiltonian simulation time $\sqrt{\frac{t}{N}}(1+2+\cdots+2^{d'-1})=\mathcal{O}(c\sqrt{Nt})=\mathcal{O}(t^{1/2}\log^{1/2}(\varepsilon'^{-1}))$. Note that there is no dependence on $N$, we can set $N$ to be arbitrarily large without changing the required Hamiltonian simulation time. Since when $N=t^3/\varepsilon^{-2}$, the system reduced density matrix of Eq. \ref{lint3} will have at most $\varepsilon$-error in trace distance from the density matrix evolved from the exact Lindbladian, when setting $\varepsilon'=\varepsilon$ and $N=\mathcal{O}(t^3\varepsilon^{-2})$, the error between our fast-forwarding algorithm and the exact Lindbladian is also $\mathcal{O}(\varepsilon)$. Thus, to simulate the Lindbladian Eq. \ref{slme} within an error $\varepsilon$, our algorithm has the complexity
\begin{eqnarray}\label{ff}
\text{Hami.simu.time }&&=c\sqrt{Nt}=\mathcal{O}\left(t^{1/2}\log^{1/2}(\varepsilon^{-1})\right),\\
\text{Number of ancilla qubits}&&=\log(N)=\mathcal{O}\left(\log(t)+\log(\varepsilon^{-1})\right).
\end{eqnarray}

\section{Lindbladian fast forwarding for fast QPE\label{smffqpe}}

In using Lindbladian for QPE, we measure the ancilla 0/1 basis (Eq. \ref{lint2}), while in our fast-forwarding algorithm, we adopt the ancilla +/- basis (Eq. \ref{lint3}). The two expressions have the connection
\begin{eqnarray}\label{unary}
&&(\textbf{H}^{\otimes N}\otimes I)\sum_{m=0}^N\frac{\sqrt{\binom{N}{m}}}{2^{N/2}}|\bar{m}\rangle |\psi(2m-N)\rangle=\nonumber\\
&&\sum_{m=0}^N (-i)^m\sqrt{\binom{N}{m}}|\bar{m}\rangle\cos\left(\sqrt{\frac{t}{N}}H\right)^{N-m}\sin\left(\sqrt{\frac{t}{N}}H\right)^{m}|\psi\rangle.
\end{eqnarray}
If we replace the $N$-qubit Dicke state $|\bar{m}\rangle$ by a $\log(N)$-qubit computational basis state $|m\rangle$, we can expect that there exists a unitary operator $U_D$ such that
\begin{eqnarray}\label{bbinary}
&&(U_D\otimes I)\underbrace{\sum_{m=0}^N\frac{\sqrt{\binom{N}{m}}}{2^{N/2}}|m\rangle |\psi(2m-N)\rangle}_{|S_{+/-}\rangle}=\nonumber\\
&&\underbrace{\sum_{m=0}^N (-i)^m\sqrt{\binom{N}{m}}|m\rangle\cos\left(\sqrt{\frac{t}{N}}H\right)^{N-m}\sin\left(\sqrt{\frac{t}{N}}H\right)^{m}|\psi\rangle}_{|S_{0/1}\rangle}.
\end{eqnarray}
To achieve this, according to the results in Ref. \cite{bernard2024dynamical}, $U_D$ should satisfy
\begin{eqnarray}
\langle m|U_D|m'\rangle&&=\langle \bar{m}|\textbf{H}^{\otimes N}|\bar{m}'\rangle\nonumber\\
&&=\sqrt{\binom{N}{m}\binom{N}{m'}}\mathcal{K}_{m'}(m,0.5,N),
\end{eqnarray}
where $\mathcal{K}_{m'}(m,0.5,N)$ is the Kravchuk polynomial~\cite{krawtchouk1929generalisation}. Our Lindbladian fast-forwarding algorithm (Eq. \ref{ff}) with a Hamiltonian simulation time $\mathcal{O}\left(t^{1/2}\log^{1/2}(\varepsilon^{-1})\right)$ can prepare $|\text{Goal}\rangle$ (Eq. \ref{goal}), which following the similar derivation in Eq. \ref{tracebound}, can be shown to satisfy
\begin{eqnarray}\label{vnorm}
&&\|(U_D\otimes I)|\text{Goal}\rangle-|S_{0/1}\rangle\|_2=\||\text{Goal}\rangle-|S_{+/-}\rangle\|_2\nonumber\\&&=\left\|\sum_{m<N/2-cN;m>N/2+cN-1}2^{-N/2}\sqrt{\binom{N}{m}}|m\rangle |\Psi_m\rangle\right\|_2\nonumber\\&&
\leq \sqrt{\sum_{m<N/2-cN;m>N/2+cN-1}2^{-N}\binom{N}{m}}=\varepsilon^{1/2},
\end{eqnarray}
where the last line is following Eq. \ref{derep} and we set $\varepsilon'=\varepsilon$ as explained in the text above Eq. \ref{ff}.

\subsection{Eigenvalue estimation}

We set $|\psi\rangle=|\psi_\alpha\rangle$ and write 
\begin{eqnarray}
|S_{0/1}\rangle&&=\sum_{m=0}^N \sqrt{p_m}|m\rangle |\phi_m\rangle\nonumber\\
(U_D\otimes I)|\text{Goal}\rangle&&=\sum_{m=0}^N \sqrt{p'_m}|m\rangle |\phi'_m\rangle.
\end{eqnarray}
Recall Eq. \ref{eslin}, to have an $\epsilon$-estimation with success probability at least $1-\delta$, we require the Lindbladian simulation time $t$ in $|S_{0/1}\rangle$ to be $t=\mathcal{O}(\epsilon^{-2}\log (\delta^{-1}))$. The way we obtain this result is through Eq. \ref{c1}, where we show that
\begin{eqnarray}
\sum_{|m-Nq_\alpha|\geq 2th_\alpha\epsilon}p_m\leq 2\exp\left(-\frac{24}{11}t\epsilon^2\right).
\end{eqnarray}
Now, given the state $(U_D\otimes I)|\text{Goal}\rangle$, our way to estimate the eigenvalue is also by measuring the ancilla qubits, and using the measurement result $m$ and the relation $m/N=\sin(\sqrt{\frac{t}{N}}h_\alpha)^2$ to infer $h_\alpha$. According to Eq. \ref{vnorm}, we have
\begin{eqnarray}\label{xyg}
\varepsilon=\|(U_D\otimes I)|\text{Goal}\rangle-|S_{0/1}\rangle\|_2^2&&=\sum_{m=0}^N\|\sqrt{p'_m}|m\rangle |\phi'_m\rangle-\sqrt{p_m}|m\rangle |\phi_m\rangle\|_2^2\nonumber\\
&&\geq \sum_{m=0}^N \left|\sqrt{p'_m}-\sqrt{p_m}\right|^2\geq \frac{\left(\sum_{m=0}^N|p_m-p'_m|\right)^2}{\sum_{m'=0}^N \left|\sqrt{p'_{m'}}+\sqrt{p_{m'}}\right|^2} \nonumber\\
&&=\frac{\left(\sum_{m=0}^N|p_m-p'_m|\right)^2}{2+\sum_{m'=0}^N 2\sqrt{p'_{m'} p_{m'}}} \geq \frac{\left(\sum_{m=0}^N|p_m-p'_m|\right)^2}{4},
\end{eqnarray}
where the second inequality in the second line is due to the Cauchy inequality. Eq.~\ref{xyg} implies
\begin{eqnarray}
\sum_{m=0}^N|p_m-p'_m|\leq2\varepsilon^{1/2}.
\end{eqnarray}
Therefore, we have bounds on $p'_m$
\begin{eqnarray}
\sum_{|m-Nq_\alpha|\geq 2th_\alpha\epsilon}p'_m&&=\sum_{|m-Nq_\alpha|\geq 2th_\alpha\epsilon}(p'_m-p_m+p_m)\nonumber\\
&&\leq \sum_{|m-Nq_\alpha|\geq 2th_\alpha\epsilon}(|p'_m-p_m|+p_m)\nonumber\\&&
=\mathcal{O}\left(\exp(-t\epsilon^2)+\varepsilon^{1/2}\right).
\end{eqnarray}
Thus, to have an estimation error within $\epsilon$ and with a success probability at least $1-\delta$, it is sufficient to set $\varepsilon^{1/2}=\mathcal{O}(\delta)$ and $t=\mathcal{O}(\epsilon^{-2}\log(\delta^{-1}))$. Plugging this to Eq. \ref{ff} (and ignoring the logarithmic factors), we can give the complexity of using Lindbladian fast-forwarding algorithm for eigenvalue estimation
\begin{eqnarray}
\text{Hami.simu.time }&&=\mathcal{O}(\log^{1/2}(\varepsilon^{-1})t^{1/2})\nonumber\\
&&=\mathcal{O}(\epsilon^{-1}\log(\delta^{-1})).
\end{eqnarray}

\subsection{Eigenstate preparation}
We set $|\psi\rangle=\sum_\alpha \Pi_\alpha |\psi\rangle=\sum_\alpha c_\alpha |\psi_\alpha\rangle$ and we want to prepare $|\psi_\beta\rangle$ given the knowledge of $h_\beta$. Again, we can let $h_\beta$ be 0 by shifting the spectrum of $H$. Recall that in Section \ref{slowep}, to obtain the complexity Eq. \ref{eplin}, the core is the bound of $p_0$ in Eq. \ref{import1} and the bound of $p_{0,\beta}$ in Eq. \ref{import2}. For $(U_D\otimes I)|\text{Goal}\rangle$, this corresponds to 
\begin{eqnarray}
p_0'&&=\|(\langle0|\otimes I)(U_D\otimes I)|\text{Goal}\rangle\|_2^2,\\
p_{0,\beta}'&&=\frac{|\langle0| \langle\psi_\beta|(U_D\otimes I)|\text{Goal}\rangle|^2}{p_0'}.
\end{eqnarray}
Due to Eq. \ref{vnorm}, we have
\begin{eqnarray}
\sqrt{p_0'}&&=\|(\langle0|\otimes I)(U_D\otimes I)|\text{Goal}\rangle\|_2\nonumber\\
&&=\left\|(\langle0|\otimes I)\left((U_D\otimes I)|\text{Goal}\rangle-|S_{0/1}\rangle+|S_{0/1}\rangle\right)\right\|_2\nonumber\\
&&\geq \left|\|(\langle0|\otimes I)|S_{0/1}\rangle\|_2-\left\|(\langle0|\otimes I)\left((U_D\otimes I)|\text{Goal}\rangle-|S_{0/1}\rangle\right)\right\|_2\right|\nonumber\\
&&\geq\sqrt{p_0}-\varepsilon^{1/2}\geq |c_\beta|-\varepsilon^{1/2},\nonumber\\
\end{eqnarray}
where in the derivations, we set $\varepsilon^{1/2}$ to be smaller than $\sqrt{p_0}$ and we utilze Eq. \ref{import1}.
Similarly, it is also easy to show $\sqrt{p_0'}\leq\sqrt{p_0}+\varepsilon^{1/2}$. For $p_{0,\beta}'$, we have
\begin{eqnarray}
\sqrt{p_{0,\beta}'}&&=\frac{|\langle0| \langle\psi_\beta|(U_D\otimes I)|\text{Goal}\rangle|}{\sqrt{p_0'}}\nonumber\\
&&=\frac{\left|\langle0| \langle\psi_\beta|\left((U_D\otimes I)|\text{Goal}\rangle-|S_{0/1}\rangle+|S_{0/1}\rangle\right)\right|}{\sqrt{p_0'}}\nonumber\\&&\geq \frac{\left|\left|\langle0| \langle\psi_\beta|\left((U_D\otimes I)|\text{Goal}\rangle-|S_{0/1}\rangle\right)\right|-\left|\langle0| \langle\psi_\beta|S_{0/1}\rangle\right|\right|}{\sqrt{p_0'}}\nonumber\\
&&\geq\frac{\left|\langle0| \langle\psi_\beta|S_{0/1}\rangle\right|-\varepsilon^{1/2}}{\sqrt{p_0'}}\nonumber\\
&&\geq\frac{\left|\langle0| \langle\psi_\beta|S_{0/1}\rangle\right|-\varepsilon^{1/2}}{\sqrt{p_0}+\varepsilon^{1/2}}\nonumber\\
&&=\frac{\sqrt{p_{0,\beta}}-\varepsilon^{1/2}/\sqrt{p_0}}{1+\varepsilon^{1/2}/\sqrt{p_0}}\nonumber\\&&\geq\frac{\sqrt{p_{0,\beta}}-\varepsilon^{1/2}/|c_\beta|}{1+\varepsilon^{1/2}/|c_\beta|},
\end{eqnarray}
where in the derivations, we set $\varepsilon^{1/2}$ to be also smaller than $\left|\langle0| \langle\psi_\beta|S_{0/1}\rangle\right|$, and we utilize Eq. \ref{import2}. Now, we can set $\varepsilon^{1/2}=|c_\beta|\zeta$ and $p_{0,\beta}=1-\zeta$ which require the Lindbladian evolution time $t=\mathcal{O}\left(\Delta_\beta^{-2}\log(\zeta^{-1})\right)$.
Then we will have
\begin{eqnarray}
\sqrt{p_0'}\geq |c_\beta|-\varepsilon^{1/2}=|c_\beta|(1-\zeta),
\end{eqnarray}
and 
\begin{eqnarray}
\sqrt{p_{0,\beta}'}&&\geq\frac{\sqrt{p_{0,\beta}}-\varepsilon^{1/2}/|c_\beta|}{1+\varepsilon^{1/2}/|c_\beta|}=\frac{\sqrt{1-\zeta}-\zeta}{1+\zeta}\nonumber\\
&&\geq \frac{1-2\zeta}{1+\zeta}=1-\frac{3\zeta}{1+\zeta}\geq 1-3\zeta\nonumber\\
&&\geq \sqrt{1-6\zeta}.
\end{eqnarray}
In other words, we will have a preparation inaccuracy $1-p_{0,\beta}'=\mathcal{O}(\zeta)$, which fullfill our requirement. Putting these bounds to Eq. \ref{ff} and combining the amplitude amplification algorithm, the overall resulting complexity is 
\begin{eqnarray}
\text{Hami.simu.time }&&=\mathcal{O}(\log^{1/2}(\varepsilon^{-1})t^{1/2})\nonumber\\
&&=\mathcal{O}(|c_\beta|^{-1}\Delta_\beta^{-1}\log(\zeta^{-1})).
\end{eqnarray}

At the end of this section, we want to mention that, here, we directly assume we are given access to $U_D$ since we currently do not know how to construct $U_D$ through elementary quantum gates, which is an open question in this work. Nevertheless, one can always go back to $|\bar{m}\rangle$ Dicke state encoding and simply use $\textbf{H}^{\otimes N}$ to avoid the introduction of $U_D$, i.e. go back from Eq. \ref{bbinary} to Eq. \ref{unary}. While this will increase the number of ancilla qubits from $\log(N)$ to $N$, the Heisenberg limit scaling of the Hamiltonian simulation time in the QPE tasks will remain unchanged.

\section{QPE, amplitude estimation, and quadratic quantum speedup\label{smqua}}
We consider a decision problem where we judge whether there exists $x\in\{0,1\}^n$ such that $f(x)=1$. The corresponding quantum oracle $U_f$ has the action 
\begin{eqnarray}
U_f|x\rangle|0\rangle=|x\rangle|f(x)\rangle.
\end{eqnarray}
Suppose the number of $x$ such that $f(x)=1$ is $W$, then we can find that
\begin{eqnarray}
A=\left\|(I_n\otimes \langle 1|)U_f(\textbf{H}^{\otimes n}\otimes I)|0\rangle^{\otimes n}|0\rangle\right\|_2=\left\|2^{-n/2}(I_n\otimes \langle 1|)\sum_x|x\rangle|f(x)\rangle\right\|_2=2^{-n/2}\sqrt{W}.
\end{eqnarray}
Therefore, judging whether $A$ is zero or not can solve the decision problem. The amplitude estimation algorithm \cite{brassard2000quantum} is as follows. We first construct the following unitary
\begin{eqnarray}
U_{ae}=(I_{n+1}-2U_f(\textbf{H}^{\otimes n}\otimes I)|0\rangle^{\otimes n}|0\rangle\langle 0|^{\otimes n}\langle 0|(\textbf{H}^{\otimes n}\otimes I)U_f^\dag)(I_n\otimes Z),
\end{eqnarray}
and let $U_{ae}$ replace $e^{-iH}$ in QPE (either the standard QPE (Eq. \ref{stanhs}) or our fast-forwarding Lindbladian-based QPE (Eq. \ref{vhh})). Next, we set the input $|\psi\rangle=\sum_x|x\rangle|f(x)\rangle$. Then, running the QPE algorithms with queries on $U_{ae}$ for at most $\mathcal{O}(2^{n/2})$ times is sufficient to judge whether $W$ is zero or not. The $\mathcal{O}(2^{n/2})$ scaling is quadratically better than classical algorithms $\mathcal{O}(2^{n})$, and the reason for this quadratic speedup is exactly the Heisenberg limit scaling of QPE. Note that, for our Lindbladian-based QPE, we are doing the fast-forwarding simulation for the Lindbladian Eq. \ref{slme} with $H_{ae}$ to be the jump operator where $H_{ae}$ satisfies $e^{-iH_{ae}}=U_{ae}$.

\section{Quantum Gibbs state preparation\label{smpar}}
Here, we describe how to use our fast-forwarding result for Gibbs state preparations. The basic idea is to combine our Lindbladian fast-forwarding simulation algorithm with the Lindbladian-based differential equation solver introduced in \cite{shang2024design}. Given a positive semidefinite Hamiltonian $H_P$ with $\|H_P\|\leq 1$, which is also the hidden assumption in previous algorithms \cite{gilyen2019quantum,chowdhury2016quantum}, the Gibbs state $\rho_\beta$ and the partition function $Z_\beta$ at the inverse temperature $\beta$ are
\begin{eqnarray}
\rho_\beta=\frac{e^{-\beta H_P}}{Z_\beta},~\quad Z_\beta=\Tr(e^{-\beta H_P}).
\end{eqnarray}
For $\rho_\beta$, we consider its symmetric purification state
\begin{eqnarray}
|\rho_\beta\rangle&&=\sqrt{\frac{2^n}{Z_\beta}}(e^{-\beta H_P/2}\otimes I)|\Omega\rangle\nonumber\\
&&=\sqrt{\frac{2^n}{Z_\beta}}e^{-\frac{\beta}{2} H_P\otimes I}|\Omega\rangle,
\end{eqnarray}
where $|\Omega\rangle=2^{-n/2}\sum_i |i\rangle|i\rangle$. 

Based on Ref.~\cite{shang2024design}, if we start from an initial state $|+\rangle\langle+|\otimes |\Omega\rangle\langle\Omega|$, and construct a Lindbladian of the form Eq. \ref{flme}, with the single jump operator having the form
\begin{eqnarray}\label{jumpg}
F=\begin{pmatrix}
\sqrt{H_P}\otimes I &0\\
0 & 0\end{pmatrix},
\end{eqnarray}
then at the time $\beta/2$, the output density matrix will have the form
\begin{eqnarray}
\rho_{out}=\frac{1}{2}\begin{pmatrix}
\cdot &e^{-\frac{\beta}{2} H_P\otimes I}|\Omega\rangle\langle\Omega|\\|\Omega\rangle\langle\Omega|e^{-\frac{\beta}{2} H_P\otimes I}
 & \cdot \end{pmatrix}=\frac{1}{2}\begin{pmatrix}
\cdot & \sqrt{\frac{Z_\beta}{2^n}}|\rho_\beta\rangle\langle\Omega|\\\sqrt{\frac{Z_\beta}{2^n}}|\Omega\rangle\langle\rho_\beta|
 & \cdot \end{pmatrix}.
\end{eqnarray}
Note that to construct $F$, we assume that we are given the access (or the block encoding) of $\sqrt{H_P}$. If we have access to the preparation unitary of the purification of $\rho_{out}$, which is true in our algorithm, then we can follow the amplitude amplification procedure introduced in Ref.~\cite{shang2024design} to finally prepare $|\rho_\beta\rangle$ with $\mathcal{O}(\sqrt{\frac{2^n}{Z_\beta}})$ queries to the preparation unitary. 

Let us now consider the complexity. We can observe that the jump operator Eq. \ref{jumpg} is Hermitian, which meets our fast-forwarding requirements. Combining the fast-forwarding result Eq. \ref{ff} and the amplitude amplification, the total required Hamiltonian simulation time ($e^{-i F t}$) to prepare $|\rho_\beta\rangle$ within an error $\varepsilon$ is
\begin{eqnarray}\label{parsi}
\text{Hami.simu.time }&&=\tilde{\mathcal{O}}\left(\sqrt{\frac{2^n}{Z_\beta}}\beta^{1/2}\log^{1/2}(\varepsilon^{-1})\right),
\end{eqnarray}
where the $\tilde{\mathcal{O}}$ symbol hides the additional $\log(\sqrt{\frac{2^n}{Z_\beta}})$ factors.

To make a fair comparison with previous methods, we also need to consider the cost of implementing Hamiltonian simulation on quantum computers. To simulate a Hamiltonian $e^{-iF t}$ of time $t$ within an error $\varepsilon$, the state-of-the-art quantum singal processing technique needs to query the block encoding of $\sqrt{H_P}$ for $\tilde{\mathcal{O}}(t+\log(\varepsilon^{-1}))$ times \cite{low2017optimal}. Recall Fig. \ref{fig.fflind}, the whole controlled Hamiltonian simulation Eq. \ref{wholl} is decomposed into $d'=\tilde{\mathcal{O}}(\log(t^{1.5}\varepsilon^{-1}))$ single-control Hamiltonian simulations. To make sure the whole error is within $\varepsilon$, we can simply let each single-control Hamiltonian simulation be within an error $\varepsilon/d'$. The resulting overall query complexity on $\sqrt{H_P}$ to construct Eq. \ref{wholl} within $\varepsilon$ then becomes
\begin{eqnarray}
\tilde{\mathcal{O}}(t+d'\log(\varepsilon^{-1}))=\tilde{\mathcal{O}}(t+\log^2(\varepsilon^{-1})).
\end{eqnarray}
Putting this result in Eq. \ref{parsi}, the final query complexity on $\sqrt{H_P}$ to prepare $|\rho_\beta\rangle$ (and therefore $\rho_\beta$) within an error $\varepsilon$ becomes
\begin{eqnarray}
&&\tilde{\mathcal{O}}\left(\sqrt{\frac{2^n}{Z_\beta}}\beta^{1/2}\log^{1/2}(\varepsilon^{-1})+\log^2(\varepsilon^{-1})\right)\nonumber\\
=&&\tilde{\mathcal{O}}\left(\sqrt{\frac{2^n}{Z_\beta}}\sqrt{\max \left(\beta, \log^3(\varepsilon^{-1})\right)\log(\varepsilon^{-1})} \right).
\end{eqnarray}
When $\beta>\log^3(\varepsilon^{-1})$, the complexity becomes $\tilde{\mathcal{O}}\left(\sqrt{\frac{2^n}{Z_\beta}}\sqrt{\beta\log(\varepsilon^{-1})} \right)$.

In comparison, given the $\sqrt{H_P}$ access, using the previous state-of-the-art QSVT \cite{gilyen2019quantum} technique, the query complexity is 
\begin{eqnarray}
\tilde{\mathcal{O}}\left(\sqrt{\frac{2^n}{Z_\beta}}\sqrt{\max \left(\beta, \log(\varepsilon^{-1})\right)\log(\varepsilon^{-1})}\right),
\end{eqnarray}
when $\beta>\log(\varepsilon^{-1})$, the complexity becomes $\tilde{\mathcal{O}}\left(\sqrt{\frac{2^n}{Z_\beta}}\sqrt{\beta\log(\varepsilon^{-1})}\right)$, which is also the result in another Gibbs state preparation algorithm \cite{chowdhury2016quantum} based on the $\sqrt{H_P}$ access. Thus, for large $\beta$, our algorithm matches the scaling on the state-of-the-art dependence of $\varepsilon$ in QSVT.

We want to mention that the above procedure can also be changed into a partition function estimation algorithm \cite{bravyi2021complexity} by simply changing the amplitude amplification to amplitude estimation \cite{brassard2000quantum}.

\section{Generalizations of Lindbladian fast forwarding\label{smgen}}
Here, we give a much broader class of Lindbladians with Eq. \ref{slme} as a subset that can enable quadratic fast forwardings. We refer to this class of Lindbladians as the Choi commuting Lindbladians. The form of Choi commuting Lindbladians is
\begin{equation}\label{clme}
\frac{\d\rho(t)}{\d t}=\mathcal{L}_C[\rho(t)]=\sum_{i=1}^K\left(H_i\rho(t) H_i^\dag-
\frac{1}{2}\{\rho(t),H_i^\dag H_i\}\right),
\end{equation}
where we require $H_i$ to be Hermitian and
\begin{equation}\label{choi}
[H_i\otimes H_i^*-\frac{1}{2}H_i^2\otimes I-I\otimes \frac{1}{2}H_i^{*2},H_j\otimes H_j^*-\frac{1}{2}H_j^2\otimes I-I\otimes \frac{1}{2}H_j^{*2}]=0,~\forall\,i,j\in[1,K].
\end{equation}
To see why Eq. \ref{clme} enables fast forwarding, we need to introduce the vectorization picture where we map $\rho(t)=\sum_{ij}\rho_{ij}(t)|i\rangle\langle j|$ to its Choi vector $\ket{\rho(t)\rangle}=\sum_{ij}\rho_{ij}(t)|i\rangle|j\rangle$. Under this vectorization picture, the Lindbladian Eq. \ref{clme} has the form
\begin{eqnarray}
\frac{\d\ket{\rho(t)\rangle}}{\d t}&&=L_C\ket{\rho(t)\rangle}=\sum_{i=1}^K\left(H_i\otimes H_i^*-\frac{1}{2}H_i^\dag H_i\otimes I-I\otimes \frac{1}{2}H_i^TH_i^*\right)\ket{\rho(t)\rangle}\nonumber\\
&&=\sum_{i=1}^K\left(H_i\otimes H_i^*-\frac{1}{2}H_i^2\otimes I-I\otimes \frac{1}{2}H_i^{*2}\right)\ket{\rho(t)\rangle}.
\end{eqnarray}
which has the solution
\begin{eqnarray}\label{vector}
\ket{\rho(t)\rangle}&&=\exp\left(\sum_{i=1}^K\left(H_i\otimes H_i^*-\frac{1}{2}H_i^2\otimes I-I\otimes \frac{1}{2}H_i^{*2}\right)t\right)\ket{\rho(0)\rangle}\nonumber\\
&&=\prod_{i=1}^K \exp\left(\left(H_i\otimes H_i^*-\frac{1}{2}H_i^2\otimes I-I\otimes \frac{1}{2}H_i^{*2}\right)t\right)\ket{\rho(0)\rangle},
\end{eqnarray}
where the equality in the second line is due to Eq. \ref{choi}. Eq. \ref{vector} indicates that $\rho(t)$ can be prepared by the following channel
\begin{eqnarray}
\rho(t)=e^{\mathcal{L}_{H_K}t}\left[e^{\mathcal{L}_{H_{K-1}}t}\left[\cdots e^{\mathcal{L}_{H_1}t}\left[\rho(0)\right]\right]\right],
\end{eqnarray}
with each $e^{\mathcal{L}_{H_i}t}[\cdot]$ corresponds to the Lindbladian dynamics for a time $t$
\begin{eqnarray}
\frac{\d\rho(t)}{\d t}=H_i\rho(t) H_i-
\frac{1}{2}\{\rho(t),H_i^2\}.
\end{eqnarray}
We can find that each Lindbladian $e^{\mathcal{L}_{H_i}t}[\cdot]$ fits the form of Eq. \ref{slme} and therefore, its simulation enables a quadratic Lindbladian fast forwarding with the complexity in Eq. \ref{ff}. As a result, the overall dynamics Eq. \ref{clme} can be simulated with the complexity
\begin{eqnarray}
\text{Hami.simu.time}&&=\mathcal{O}\left(Kt^{1/2}\log^{1/2}(\varepsilon^{-1})\right),\nonumber\\
\text{Number of ancilla qubits}&&=\mathcal{O}\left(K\left(\log(t)+\log(\varepsilon^{-1})\right)\right).
\end{eqnarray}
Here, the Hamiltonian simulation time is the overall simulation time accounting for all $e^{-iH_it}$ dynamics. 

Here, we can consider what concrete set of jump operators can satisfy Eq. \ref{choi}. We can expand the commutator 
\begin{eqnarray}
&&[H_i\otimes H_i^*-\frac{1}{2}H_i^2\otimes I-I\otimes \frac{1}{2}H_i^{*2},H_j\otimes H_j^*-\frac{1}{2}H_j^2\otimes I-I\otimes \frac{1}{2}H_j^{*2}] \nonumber\\= && [H_i\otimes H_i^*, H_j\otimes H_j^*]   - \frac{1}{2} [H_i, H_j^2] \otimes H_i^*  - \frac{1}{2} H_i \otimes [H_i, H_j^{2}]^* \nonumber\\
&& - \frac{1}{2} [H_i^2, H_j] \otimes H_j^* - \frac{1}{2} H_j \otimes [H_i^{2}, H_j]^* \nonumber\\
&&+ \frac{1}{4} [H_i^2, H_j^2] \otimes I  + \frac{1}{4} I \otimes [H_i^{2}, H_j^{2}]^*.
\end{eqnarray}
We can find that when either $[H_i, H_j]=0$ or $\{H_i, H_j\}=0$, we will have Eq. \ref{choi} being satisfied. Therefore, we have the Corollary \ref{cccc}.
\begin{corollary}\label{cccc}
When for any $i,j$, $H_i, H_j$ are either commuting or anti-commuting, the Lindbladian Eq. \ref{clme} is a Choi commuting Lindbladian whose simulation can be fast-forwarded.
\end{corollary}

\textbf{Fast decoherence.——}A direct and surprising corollary of the fast forwarding of Choi commuting Lindbladians is that the decoherence effect under arbitrary Pauli noise can occur quadratically faster. Consider the system experiences a Pauli-type noise with the Lindbladian description having the form
\begin{eqnarray}\label{paulinoise}
\frac{d\rho(t)}{dt}=\mathcal{L}_P[\rho(t)]=\sum_{i}\lambda_i\left(P_i\rho(t) P_i-
\rho(t)\right).
\end{eqnarray}
Since two Paulis are either commuting or anti-commuting. According to Corolloary \ref{cccc}, arbitrary combinations of Pauli noises satisfy our definition of the Choi commuting Lindbladian, which indicates that the natural decoherence effect described by Eq. \ref{paulinoise} can occur in a quadratically faster time.

\section{Binomial distribution state preparation\label{sec.stateprep}}
In this section, we briefly discuss how to prepare the state
\begin{align}\label{insb}
 |S_B\rangle  = \frac{1}{\sqrt{2^{N}}}\sum_{m=0}^N\sqrt{\binom{N}{m}}|m\rangle.
\end{align}
The idea is to first show the closeness between the binary distribution and the discrete Gaussian distribution. Then, we apply the algorithm \cite{kitaev2008wavefunction} for the discrete Gaussian state preparation.

Given a Gaussian distribution $\mathcal{N}(\mu,\sigma^2)$, Ref. \cite{kitaev2008wavefunction} proposed a method for preparing the following discrete Gaussian state
\begin{align}
|\xi_{\sigma,\mu,N}\rangle = \sum_{m=0}^{N-1} \xi_{\sigma,\mu,N}(m) |m\rangle,
\end{align}
where $\xi_{\sigma,\mu,N}(m)$ is defined \cite{bauer2021practical} as
\begin{align}
\xi_{\sigma,\mu,N}^2(m)=\sum_{l=-\infty}^{\infty}\frac{1}{f(\mu,\sigma)}e^{-\frac{(m+lN-\mu)^2}{2\sigma^2}},
\end{align}
with $f(\mu,\sigma)$
\begin{eqnarray}
f(\mu,\sigma)&&=\sqrt{2\pi\sigma^2}\vartheta(\pi\mu,e^{-2\pi^2\sigma^2})\nonumber\\
&&=\sqrt{2\pi\sigma^2}\left(1+2\sum_{l=1}^\infty \cos(2\pi l\mu)e^{-2\pi^2l^2\sigma^2}\right).
\end{eqnarray}
For $\xi_{\sigma,\mu,N}(m)$, we have
\begin{eqnarray}
\left|\xi_{\sigma,\mu,N}^2(m)-\frac{1}{f(\mu,\sigma)}e^{-\frac{(m-\mu)^2}{2\sigma^2}}\right|=\sum_{l=1}^{\infty}\frac{1}{f(\mu,\sigma)}\left(e^{-\frac{(m+lN-\mu)^2}{2\sigma^2}}+e^{-\frac{(m-lN-\mu)^2}{2\sigma^2}}\right).
\end{eqnarray}
As we will see later, we are interested in $\mu=N/2$ and $\sigma^2=N/4$, which means $|m-\mu|\leq N/2$, and therefore, $|lN\pm (m-\mu)|\geq (l-1/2)N$. As a result, we have
\begin{eqnarray}\label{lllll}
\left|\xi_{\sigma,\mu,N}^2(m)-\frac{1}{f(\mu,\sigma)}e^{-\frac{(m-\mu)^2}{2\sigma^2}}\right|&&\leq 2\sum_{l=1}^{\infty}\frac{1}{f(\mu,\sigma)}e^{-\frac{(l-1/2)^2N^2}{2\sigma^2}}\nonumber\\
&&\leq 2 \frac{1}{f(\mu,\sigma)}e^{-\frac{N^2}{8\sigma^2}}+2\sum_{l=2}^{\infty}\frac{1}{f(\mu,\sigma)}e^{-\frac{lN^2}{2\sigma^2}}\nonumber\\
&&\leq 2 \frac{1}{f(\mu,\sigma)}e^{-\frac{N^2}{8\sigma^2}}+2\frac{1}{f(\mu,\sigma)}e^{-\frac{N^2}{\sigma^2}}\frac{1}{1-e^{-\frac{N^2}{2\sigma^2}}}\nonumber\\
&&=2 \frac{1}{f(\mu,\sigma)}e^{-\frac{N}{2}}+2\frac{1}{f(\mu,\sigma)}e^{-4N}\frac{1}{1-e^{-2N}}
\end{eqnarray}
where in the second line, we use the property: $(l-1/2)^2>l$ for $l\geq 2$, and in the third line, we use the sum formula for a geometric sequence. For $f(\mu,\sigma)$, we have
\begin{eqnarray}
\left|f(\mu,\sigma)-\sqrt{2\pi\sigma^2}\right|&&=\sqrt{2\pi\sigma^2}\vartheta(\pi\mu,e^{-2\pi^2\sigma^2})\nonumber\\
&&=2\sqrt{2\pi\sigma^2}\left|\sum_{l=1}^\infty \cos(2\pi l\mu)e^{-2\pi^2l^2\sigma^2}\right|\nonumber\\
&&\leq 2\sqrt{2\pi\sigma^2} \sum_{l=1}^\infty e^{-2\pi^2l\sigma^2}\nonumber\\
&&=2\sqrt{2\pi\sigma^2}e^{-2\pi^2\sigma^2}\frac{1}{1-e^{-2\pi^2\sigma^2}}\nonumber\\
&&=2\sqrt{\pi N/2}e^{-\pi^2N/2}\frac{1}{1-e^{-\pi^2N/2}}
\nonumber\\&&=\mathcal{O}(\exp(-N)),
\end{eqnarray}
putting which into Eq. \ref{lllll}, we have
\begin{eqnarray}\label{mias}
\left|\xi_{\sigma,\mu,N}^2(m)-\frac{1}{f(\mu,\sigma)}e^{-\frac{(m-\mu)^2}{2\sigma^2}}\right|=\mathcal{O}(\exp(-N)).
\end{eqnarray}
Based on Eq. \ref{mias}, we obtain the relation between $\xi_{\sigma,\mu,N}^2(m)$ and the probability distribution function (PDF) of the Gaussian distribution $\mathcal{N}(\mu,\sigma^2)$ at $\mu=N/2$ and $\sigma^2=N/4$.
\begin{eqnarray}\label{cere}
\left|\xi_{\sigma,\mu,N}^2(m)-\frac{1}{\sqrt{2\pi\sigma^2}}e^{-\frac{(m-\mu)^2}{2\sigma^2}}\right|&&\leq \left|\xi_{\sigma,\mu,N}^2(m)-\frac{1}{f(\mu,\sigma)}e^{-\frac{(m-\mu)^2}{2\sigma^2}}\right|+\left|\frac{1}{f(\mu,\sigma)}e^{-\frac{(m-\mu)^2}{2\sigma^2}}-\frac{1}{\sqrt{2\pi\sigma^2}}e^{-\frac{(m-\mu)^2}{2\sigma^2}}\right|\nonumber\\
&&=\mathcal{O}(\exp(-N)).
\end{eqnarray}

For the initial state $|S_B\rangle$ Eq. \ref{insb}, we have $\{p_m=2^{-N}\binom{N}{m}\}$ as the binomial distribution $\mathcal{B}(N,1/2)$. Regarding the relations between the binomial distribution and the PDF of $\mathcal{N}(\mu,\sigma^2)$, we have the following lemma.
\begin{lemma}[De Moivre–Laplace Theorem \cite{petrov2012sums}\label{lleemm}]
Let $S_N$ be a random variable with binomial distribution $\mathcal{B}(N,p)$. 
Let $\mu = Np$ and $\sigma^2 = Np(1-p)$. 
Let $\phi(x) = \frac{1}{\sqrt{2\pi\sigma^2}} e^{-(x-\mu)^2/(2\sigma^2)}$ be the PDF of $\mathcal{N}(\mu,\sigma^2)$. 
Then, for $m \in \{0,1,\ldots,N\}$, the following approximation holds
\begin{eqnarray}
\left|\Pr(S_N=m)-\phi(m)\right|= \mathcal{O}\left(\frac{1}{N}\right).
\end{eqnarray}
\end{lemma}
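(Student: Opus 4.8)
The statement is the local (pointwise) form of the De Moivre--Laplace theorem, and the natural route is Stirling's approximation applied to the binomial coefficient. The plan is to write $\Pr(S_N=m)=\binom{N}{m}p^m(1-p)^{N-m}$ and expand each of $N!$, $m!$, $(N-m)!$ using $n!=\sqrt{2\pi n}\,n^n e^{-n}\bigl(1+\tfrac{1}{12n}+\mathcal{O}(n^{-2})\bigr)$, retaining the $\tfrac{1}{12n}$ term because it is precisely the source of one of the $\mathcal{O}(1/N)$ corrections we must track. After cancelling the powers of $e$, this splits $\Pr(S_N=m)$ into an algebraic prefactor $\sqrt{\tfrac{N}{2\pi m(N-m)}}$ and an exponential factor $\bigl(\tfrac{Np}{m}\bigr)^m\bigl(\tfrac{N(1-p)}{N-m}\bigr)^{N-m}$.

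First I would handle the exponential factor: setting $t=m-\mu$ with $\mu=Np$, I expand $-m\ln\tfrac{m}{Np}-(N-m)\ln\tfrac{N-m}{N(1-p)}$ via $\ln(1+u)=u-\tfrac{u^2}{2}+\cdots$. The linear terms cancel, the quadratic terms combine into $-\tfrac{t^2}{2}\bigl(\tfrac1{Np}+\tfrac1{N(1-p)}\bigr)=-\tfrac{t^2}{2\sigma^2}$, reproducing exactly the Gaussian exponent, and the cubic-and-higher remainder is $\mathcal{O}(t^3/N^2)$. Next I would show the prefactor equals $\tfrac1{\sqrt{2\pi\sigma^2}}\bigl(1+\mathcal{O}(1/N)\bigr)$ by Taylor-expanding $m(N-m)=N^2p(1-p)+\mathcal{O}(tN)$ about $t=0$.

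Combining these, in the central region $|m-\mu|=\mathcal{O}(\sqrt{N\log N})$ one has $t=\mathcal{O}(\sqrt{N\log N})$, so the cubic remainder in the exponent is $\mathcal{O}(t^3/N^2)=\tilde{\mathcal{O}}(N^{-1/2})$ while the Stirling and prefactor corrections are $\mathcal{O}(1/N)$; hence $\Pr(S_N=m)=\phi(m)\bigl(1+\tilde{\mathcal{O}}(N^{-1/2})\bigr)$. Since $\phi(m)=\mathcal{O}(1/\sqrt{N})$ throughout, this relative error translates into the claimed absolute error $\mathcal{O}(1/N)$ (the logarithmic factors being absorbed into the constant). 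For $|m-\mu|$ beyond this window both quantities are superpolynomially small---$\Pr(S_N=m)$ by the Bernstein bound of Lemma~\ref{con1} and $\phi(m)$ by the standard Gaussian tail---so their difference is $o(1/N)$ there, and the uniform bound over all $m\in\{0,\dots,N\}$ follows.

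The main obstacle is making the estimate uniform in $m$ rather than merely for fixed $|m-\mu|/\sigma$: one must verify that the cubic Taylor remainder and the deviation of the prefactor from $1/\sigma$ stay controlled across the entire transition region, and that the exponentially small tails genuinely dominate the polynomial error once $|m-\mu|$ is large. Everything else is bookkeeping of Stirling constants, so the careful choice of the $\mathcal{O}(\sqrt{N\log N})$ crossover between the ``Gaussian'' region and the ``tail'' region is the one place where the argument needs real attention; this is, of course, exactly the content packaged in the cited reference~\cite{petrov2012sums}.
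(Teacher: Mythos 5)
The paper itself offers no proof of this lemma—it is imported verbatim from the cited reference \cite{petrov2012sums}—so your Stirling-based argument must be judged on its own merits rather than against an in-paper derivation. Your overall strategy (Stirling with the $\tfrac{1}{12n}$ correction, cancellation of the linear terms, extraction of the Gaussian exponent $-t^2/(2\sigma^2)$ with cubic remainder $\mathcal{O}(t^3/N^2)$, and a central/tail split) is the standard textbook route to the local De Moivre--Laplace theorem, and the expansion itself is correct.

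However, the final combination step contains a genuine error. In the central region $|t|=|m-\mu|=\mathcal{O}(\sqrt{N\log N})$ you take the worst case of the relative error over $t$ (getting $\tilde{\mathcal{O}}(N^{-1/2})$) and separately the worst case of $\phi(m)$ over $m$ (getting $\mathcal{O}(N^{-1/2})$), then multiply and declare the logarithmic factors ``absorbed into the constant.'' Logarithmic factors in $N$ cannot be absorbed into a constant: what this argument actually yields is $\mathcal{O}(\log^{3/2}(N)/N)$, strictly weaker than the claimed $\mathcal{O}(1/N)$. (A smaller slip: the prefactor correction obtained from $m(N-m)=N^2p(1-p)+\mathcal{O}(tN)$ is $\mathcal{O}(t/N)=\tilde{\mathcal{O}}(N^{-1/2})$, not $\mathcal{O}(1/N)$ as you wrote; this happens to be of the same order as the cubic term, so it does not change the count.) The missing idea is to refuse to decouple the two worst cases: express the error in the standardized variable $x=(m-\mu)/\sigma$, so the relative error is $\mathcal{O}\left((1+|x|^3)/\sqrt{N}\right)$, and then bound the absolute error by $\phi(m)\cdot \mathcal{O}\left((1+|x|^3)/\sqrt{N}\right)\le \frac{C}{\sigma\sqrt{N}}\sup_x\left[(1+|x|^3)e^{-x^2/2}\right]=\mathcal{O}(1/N)$, using the uniform boundedness of $(1+|x|^3)e^{-x^2/2}$ together with $\sigma=\Theta(\sqrt{N})$; the Gaussian damping is exactly what kills the polynomial growth in $t$ that your crude product bound pays for with logarithms. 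Two cosmetic corrections in the tail region: with a cutoff at $c\sqrt{N\log N}$, Lemma~\ref{con1} (or Lemma~\ref{con2}) and the Gaussian tail give bounds that are polynomially small, $N^{-\Theta(c^2)}$, not superpolynomially small as you assert—still $o(1/N)$ once $c$ is chosen large enough, so the conclusion there survives. With these repairs your proof is complete and establishes the lemma as stated.
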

Combining Lemma \ref{lleemm} and Eq. \ref{cere}, using the triangle inequality, we obtain that 
\begin{eqnarray}
\left|p_m-\xi_{\sqrt{N}/2,N/2,N}^2(m)\right|= \mathcal{O}\left(\frac{1}{N}\right).
\end{eqnarray}

We now consider the total variance distance
\begin{eqnarray}
d_{\text{tvd}}&&=\sum_{m=0}^N\left|p_m-\xi_{\sqrt{N}/2,N/2,N}^2(m)\right|\nonumber\\
&&\leq \sum_{|m-N/2|\leq cN}\left|p_m-\xi_{\sqrt{N}/2,N/2,N}^2(m)\right| +\sum_{|m-N/2|> cN}\left(p_m+\xi_{\sqrt{N}/2,N/2,N}^2(m)\right)\nonumber\\
&&\leq cN\mathcal{O}(\frac{1}{N})+2e^{-c^2N}+2e^{-2c^2N},
\end{eqnarray}
where in the last line, we use Lemma \ref{con2} and the concentration property of the Gaussian distribution \cite{Vershynin_2018}. Let $c^2N=\log(N)$, we have
\begin{eqnarray}
d_{\text{tvd}}&&=\mathcal{O}(\frac{\log^{1/2}(N)}{\sqrt{N}})+\mathcal{O}(1/N)=\tilde{\mathcal{O}}(\frac{1}{\sqrt{N}}).
\end{eqnarray}
Using $d_{\text{tvd}}$, we can further bound the $l_2$-distance between $|S_B\rangle$ and $|\xi_{\sigma,\mu,N}\rangle$
\begin{eqnarray}
\||S_B\rangle-|\xi_{\sigma,\mu,N}\rangle\|_2&&=\sum_{m=0}^N \left|\sqrt{p_m}-\sqrt{\xi_{\sqrt{N}/2,N/2,N}(m)}\right|^2\nonumber\\&&\leq \sum_{m=0}^N \left|\left(\sqrt{p_m}-\sqrt{\xi_{\sqrt{N}/2,N/2,N}(m)}\right)\left(\sqrt{p_m}+\sqrt{\xi_{\sqrt{N}/2,N/2,N}(m)}\right)\right|\nonumber\\&&=d_{\text{tvd}}=\tilde{\mathcal{O}}(\frac{1}{\sqrt{N}}).
\end{eqnarray}
Recall that we have $N=t^3/\varepsilon^2$, therefore, we have
\begin{eqnarray}\label{surprise}
\||S_B\rangle-|\xi_{\sigma,\mu,N}\rangle\|_2=\tilde{\mathcal{O}}(\varepsilon),
\end{eqnarray}
where we ignore $t$ since we typically have $t>1$. Since $\varepsilon$ is the error of the fast-forwarding Lindbladian simulation (Section \ref{smff}), the $\tilde{\mathcal{O}}(\varepsilon)$ scaling in Eq. \ref{surprise} can make sure that replacing $|S_B\rangle$ by $|\xi_{\sigma,\mu,N}\rangle$ will not increase error.

Regarding the preparation of $|\xi_{\sigma,\mu,N}\rangle$, let $N=2^d$, the observation is that we have the recursive relation
\begin{align}
    |\xi_{\sigma,\mu,N}\rangle = |\xi_{\frac{\sigma}{2},\frac{\mu}{2},2^{d-1}}\rangle \otimes \cos\alpha|0\rangle + |\xi_{\frac{\sigma}{2},\frac{\mu-1}{2},2^{d-1}}\rangle \otimes \sin\alpha|1\rangle,
\end{align}
with
\begin{equation}
\alpha = \cos^{-1}\left(\sqrt{f\left(\frac{\mu}{2},\frac{\sigma}{2}\right) / f(\mu,\sigma)}\right),
\end{equation}
which can be computed classically. Based on this, Ref. \cite{kitaev2008wavefunction} gives a recursive quantum algorithm which can prepare $|\xi_{\sigma,\mu,N}\rangle$ within an error $\varepsilon$ using a gate complexity $\mathcal{O}(\text{poly}(n+\log(\varepsilon^{-1})))$.
One may also consider using an improved version, provided in Ref.~\cite{bagherimehrab2022nearly}.
Since the distance between $|S_B\rangle$ and $|\xi_{\sigma,\mu,N}\rangle$ is also $\tilde{\mathcal{O}}(\varepsilon)$, to make sure the fast-forwarding Lindbladian simulation algorithm in Section \ref{smff} is within error $\varepsilon$, the initial state prepation only needs a gate complexity of $\mathcal{O}(\text{poly}(n+\log(\varepsilon^{-1})))$.

\end{appendix}
\end{document}